\documentclass[%
 aip,
 jmp,
 amsmath,amssymb,
 reprint, onecolumn
]{revtex4-1}

\usepackage{graphicx}% Include figure files
\usepackage{dcolumn}% Align table columns on decimal point
\usepackage{bm}% bold math
\usepackage[utf8]{inputenc}
\usepackage[T1]{fontenc}
\usepackage{mathptmx}
\usepackage{etoolbox}
\usepackage{hyperref}
\hypersetup{colorlinks=true,linkcolor=blue,citecolor=blue,urlcolor=blue}    

\usepackage{amsthm}
\usepackage{enumitem}
\usepackage{tikz}
\usetikzlibrary{positioning, arrows.meta, shapes.geometric, fit, backgrounds, calc, decorations.pathreplacing}

\newtheorem{theorem}{Theorem}
\newtheorem{proposition}{Proposition}
\newtheorem{lemma}{Lemma}
\newtheorem{corollary}{Corollary}
\theoremstyle{definition}
\newtheorem{definition}{Definition}
\theoremstyle{remark}
\newtheorem{remark}{Remark}

\makeatletter
\def\@email#1#2{%
 \endgroup
 \patchcmd{\titleblock@produce}
  {\frontmatter@RRAPformat}
  {\frontmatter@RRAPformat{\produce@RRAP{*#1\href{mailto:#2}{#2}}}\frontmatter@RRAPformat}
  {}{}
}%
\makeatother

\begin{document}

\preprint{}

\title[Galilean Reeh--Schlieder Obstruction]{Galilean Reeh--Schlieder
Obstruction}

\author{Leonardo A. Pach\'{o}n}
\affiliation{guane Enterprises, R+D+I Unit, Medell\'in 050010, Colombia}

\date{\today}

\begin{abstract}
We prove that the standard Galilean Haag--Kastler axioms, augmented
by Bargmann mass superselection, are inconsistent with the
Reeh--Schlieder property: no such net admits a vacuum that is
cyclic and separating for every local field algebra. Two ingredients
combine: Galilean Schr\"odinger fields annihilate the Fock vacuum,
and Bargmann mass superselection forbids the Hermitian-combination
evasion that keeps relativistic axioms consistent. The result
extends beyond the Fock-representation hypothesis: any Galilean
Haag--Kastler net whose canonical fields carry definite Bargmann
mass charges and admit time-zero restrictions on a
field-algebra-stable common dense domain is incompatible with
Reeh--Schlieder. The bounded-below mass spectrum and the
vacuum-at-spectral-minimum, usually imposed as separate axioms,
are derived consequences---of positive-energy boost positivity and
a Bose-CCR algebraic descent, respectively. The Tomita--Takesaki
modular flow is consequently unavailable on Galilean local field
algebras. We identify the Reeh--Schlieder property as the precise
structural ingredient distinguishing relativistic from Galilean
algebraic quantum field theory: relativistic AQFT has it as a
theorem, Galilean AQFT cannot.
\end{abstract}

\maketitle

\section{Introduction}
\label{sec:introduction}

The structural distinction between relativistic and non-relativistic
quantum field theory has been recognized since L\'evy-Leblond's 1967
axiomatization~\cite{LevyLeblond1967}: Galilean QFTs are ``less
constrained'' than relativistic ones, and the rigidity-style theorems
that pin down relativistic QFT---CPT, spin-statistics, Haag's
theorem---fail in the Galilean setting. The structural conjecture
that this distinction reflects a deeper algebraic-kinematic selection
of special relativity over Galilean kinematics is developed
in~\cite{Pachon2026a}; the present paper establishes one of its three
strands of evidence as a precise no-go theorem. The downstream
consequences of this obstruction---collapse of Galilean modular
structure in the Newton--Cartan limit~\cite{Pachon2026c}, algebraic
forcing of the semiclassical Einstein
equations~\cite{Pachon2026d,Pachon2026e}, and gravity-dressed
crossed products~\cite{Pachon2026f}---are pursued in companion
papers in the series.

The literature has, however,
treated the relativistic-Galilean distinction in piecewise fashion.
Bain~\cite{Bain2011}
argues, drawing on Requardt~\cite{Requardt1982} and the Segal--Goodman
anti-locality results, that the failure of the Separating Corollary
in Galilean QFT traces to the parabolic-versus-hyperbolic distinction
in field equations rooted in the absolute temporal metric of classical
spacetimes; he does not, however, formulate or prove an explicit
no-go theorem on a Galilean axiom set.
Klaczy\'nski~\cite{Klaczynski2016} documents the structurally
analogous failure of Haag's theorem in non-relativistic field theory.
Puccini--Vucetich~\cite{PucciniVucetich2004} identify the algebraic
mechanism: in the Bargmann central extension of the Galilei algebra,
the Hamiltonian $\hat{H}$ does not appear in the
$[\hat{K}_i, \hat{P}_j]$ commutator, in contrast to the Poincar\'e
relation $[\hat{K}_i, \hat{P}_j] = \mathrm{i}\hat{H}\delta_{ij}/c^2$.
Falcone--Conti~\cite{FalconeConti2024} establish that Reeh--Schlieder
nonlocal effects are suppressed in the non-relativistic limit, an
analytic statement parallel to but distinct from the present
structural result. None of these sources states or proves the
conditional theorem we establish here.

The result is structurally simple. The standard Galilean
Haag--Kastler axioms (G1)--(G6) admit interacting realizations on
Fock space; explicit constructions due to
L\'evy-Leblond~\cite{LevyLeblond1967}, Schrader~\cite{Schrader1968},
Hepp~\cite{Hepp1969}, Eckmann~\cite{Eckmann1970}, and
Lampart--Schmidt--Teufel--Tumulka~\cite{LampartSchmidtTeufelTumulka2018}
verify that the axioms are consistent and non-trivial. Adding the
Fock representation as an explicit axiom~(G7), all known constructions
remain examples. We prove that adding the Reeh--Schlieder property
(cyclic-and-separating vacuum for every local field algebra) to
(G1)--(G7) renders the combined system inconsistent: no such net
exists.

The theorem is a no-go statement in the form ``these axioms cannot
be jointly satisfied,'' with the Reeh--Schlieder property identified
as the load-bearing ingredient that fails. Three immediate
consequences follow. First, the result sharpens the folklore claim
that Galilean QFTs evade relativistic rigidity: they evade it not by
failing microcausality (equal-time microcausality holds in all
standard constructions) but by failing the Reeh--Schlieder property
of the vacuum. Second, the Tomita--Takesaki modular theory, which
requires a cyclic-and-separating vector, is unavailable for any
Galilean Haag--Kastler net satisfying (G1)--(G7). Third, the
Reeh--Schlieder property is identified as the structural ingredient
that should be promoted to an explicit axiom in any algebraic
characterization of relativistic kinematics.

\paragraph{Outline.} Section~\ref{sec:axioms} states the Galilean
Haag--Kastler axioms (G1)--(G7) and defines the Reeh--Schlieder
property. Section~\ref{sec:halvorson_mechanism} reviews, following
Halvorson~\cite{Halvorson2001}, the mechanism by which the
relativistic local field algebra evades trivialization despite the
Reeh--Schlieder property; this exposition makes precise the
structural feature whose Galilean analog is unavailable
(Figure~\ref{fig:structural_divider}).
Section~\ref{sec:theorem} states and proves the obstruction theorem.
Section~\ref{sec:modular_corollary} gives the modular-flow corollary.
Section~\ref{sec:non_fock_extension} extends the result beyond the
Fock representation hypothesis in three stages, culminating in
Theorem~\ref{thm:strengthened_obstruction}
(Figures~\ref{fig:mass_lattice} and~\ref{fig:logical_structure}).
Section~\ref{sec:verification} verifies the result against the five
published interacting Galilean (or Galilean-style) constructions.
Section~\ref{sec:discussion} discusses the result's relation to the
structural argument for relativistic kinematics and to modular
approaches to quantum gravity.

\section{Galilean Haag--Kastler nets}
\label{sec:axioms}

Let $\mathcal{M}_G = \mathbb{R}^3 \times \mathbb{R}$ denote
Newtonian spacetime, with the Galilei group $G$ acting by its
standard geometric action. Let $\widetilde{G}$ denote the Bargmann
central extension of $G$~\cite{Bargmann1954,LevyLeblond1971}, with
central charge identified as the mass operator $\hat{M}$.

\begin{definition}[Galilean Haag--Kastler net]
\label{def:galilean_HK_net}
A \emph{Galilean Haag--Kastler net} is a tuple
$(\mathcal{F}, \mathcal{H}, U, \Omega)$ consisting of:
\begin{itemize}
\item a Hilbert space $\mathcal{H}$;
\item an assignment $\mathcal{O} \mapsto \mathcal{F}(\mathcal{O})$
of unital von~Neumann algebras to bounded open regions
$\mathcal{O} \subset \mathcal{M}_G$, called \emph{local field
algebras};
\item a strongly continuous unitary representation
$U: \widetilde{G} \to \mathcal{B}(\mathcal{H})$ of $\widetilde{G}$;
\item a unit vector $\Omega \in \mathcal{H}$;
\end{itemize}
satisfying:
\begin{enumerate}[label=(G\arabic*)]
\item \textbf{Isotony.} $\mathcal{O}_1 \subset \mathcal{O}_2
\Rightarrow \mathcal{F}(\mathcal{O}_1) \subset
\mathcal{F}(\mathcal{O}_2)$.
\item \textbf{Equal-time locality.} For 4D regions $\mathcal{O}_1,
\mathcal{O}_2$ that are spacelike-separated in the equal-time sense
---that is, for every $t \in \mathbb{R}$, the time-$t$ slices
$\mathcal{O}_1 \cap \Sigma_t$ and $\mathcal{O}_2 \cap \Sigma_t$ have
spatially disjoint closures in $\Sigma_t = \mathbb{R}^3 \times
\{t\}$---$[\mathcal{F}(\mathcal{O}_1), \mathcal{F}(\mathcal{O}_2)] = 0$.
\item \textbf{Galilean covariance with Bargmann extension.}
$\alpha_g(\mathcal{F}(\mathcal{O})) = \mathcal{F}(g \cdot \mathcal{O})$
where $\alpha_g(A) = U(g) A U(g)^*$ for $g \in \widetilde{G}$. The
central charge $\hat{M}$ is a self-adjoint operator commuting with
all $U(g)$, and $\mathcal{H}$ decomposes as a direct sum
$\mathcal{H} = \bigoplus_M \mathcal{H}_M$ of mass eigenspaces,
mutually superselected.
\item \textbf{Canonical fields and equal-time CCR.} The net is
generated by smeared canonical fields $\hat{\psi}(f),
\hat{\psi}^\dagger(f) \in \mathcal{F}(\mathcal{O})$ for $f \in
C_c^\infty(\mathcal{M}_G)$ with $\mathrm{supp}(f) \subseteq
\mathcal{O}$.\footnote{Throughout we treat smeared fields
$\hat{\psi}(f), \hat{\psi}^\dagger(f)$ as affiliated to
$\mathcal{F}(\mathcal{O})$; the implication
$\hat{\psi}(f)\Omega = 0 \Rightarrow \hat{\psi}(f) = 0$ used below
proceeds via spectral projections of $|\hat{\psi}(f)|$, which are
bounded elements of $\mathcal{F}(\mathcal{O})$ and hence subject to
the separating property directly, exactly as in
Remark~\ref{rem:field_vs_observable}.} The equal-time canonical
commutation relations hold in time-zero form: when the time-zero
limits $\hat{\psi}_0(g) := \lim_{\epsilon \to 0^+}
\hat{\psi}(\chi_\epsilon \otimes g)$ for $g \in
C_c^\infty(\mathbb{R}^3)$ exist as densely-defined operators on a
common dense domain $\mathcal{D} \subseteq \mathcal{H}$ stable under
$\hat{H}$ and the time-zero fields,\footnote{Existence is automatic
under (G7) by the Fock construction (Remark~\ref{rem:time_zero_fock}),
and is articulated as a regularity hypothesis (G7$^*$)(d) in the
extended setting of \S~\ref{sec:non_fock_extension}; see
Lemma~\ref{lem:time_zero_4d}. In a literal slice formulation, no
$f \in C_c^\infty(\mathcal{M}_G)$ has support in the measure-zero
3-plane $\Sigma_t$, so the CCR must be expressed via the time-zero
fields rather than as a 4D-test-function relation supported on
$\Sigma_t$.} they satisfy
\begin{equation}\label{eq:CCR}
[\hat{\psi}_0(g_1), \hat{\psi}_0^\dagger(g_2)] =
\int_{\mathbb{R}^3} \overline{g_1(\mathbf{x})}\, g_2(\mathbf{x})\,
\mathrm{d}^3x \cdot \mathbb{1},
\qquad
[\hat{\psi}_0(g_1), \hat{\psi}_0(g_2)] = 0,
\end{equation}
on $\mathcal{D}$, for all $g_1, g_2 \in C_c^\infty(\mathbb{R}^3)$.
\item \textbf{Positive-energy spectrum.} The infinitesimal generator
$\hat{H}$ of the time-translation subgroup of $U$ has spectrum
$\sigma(\hat{H}) \subset [0, \infty)$.
\item \textbf{Translation-invariant unique vacuum.}
$U(\mathbf{a}, \tau)\Omega = \Omega$ for all spatial translations
$\mathbf{a} \in \mathbb{R}^3$ and time translations $\tau \in
\mathbb{R}$, and $\Omega$ is the unique such vector up to phase.
\item \textbf{Fock representation.} There exists a single-particle
Hilbert space $\mathcal{H}_1 = L^2(\mathbb{R}^3)$ carrying an
irreducible Bargmann representation of mass $m_0 > 0$ such that
$\mathcal{H}$ is unitarily equivalent to the symmetric Fock space
$\mathcal{F}_s(\mathcal{H}_1)$. Under this equivalence, $\Omega$
corresponds to the no-particle Fock vacuum; the time-zero fields
$\hat{\psi}(f)\big|_{t=0}$ and $\hat{\psi}^\dagger(f)\big|_{t=0}$
for $f \in C_c^\infty(\mathbb{R}^3)$ act as the standard smeared
annihilation and creation operators on $\mathcal{F}_s(\mathcal{H}_1)$;
and the Bargmann decomposition $\mathcal{H} = \bigoplus_M \mathcal{H}_M$
of (G3) is the particle-number decomposition with $M = N m_0$.
\end{enumerate}
\end{definition}

We collect remarks on Definition~\ref{def:galilean_HK_net}.

\begin{remark}[Field versus observable algebras]
\label{rem:field_vs_observable}
We use $\mathcal{F}(\mathcal{O})$ to emphasize that these are
\emph{field algebras} in the Doplicher--Haag--Roberts sense: they
include $\hat{\psi}(f)$ and $\hat{\psi}^\dagger(f)$ separately, both
of which carry non-zero mass charge under~(G3) and therefore connect
distinct Bargmann sectors $\mathcal{H}_M, \mathcal{H}_{M'}$. The
\emph{observable algebra}
\[
\mathcal{A}(\mathcal{O}) := \mathcal{F}(\mathcal{O})^{U(1)} = \{ A
\in \mathcal{F}(\mathcal{O}) : [A, \hat{M}] = 0 \},
\]
generated by mass-conserving combinations such as
$\hat{\psi}^\dagger(f)\hat{\psi}(g)$ and the local number density
$\hat{n}_f = \hat{\psi}^\dagger(f)\hat{\psi}(f)$, is the
gauge-invariant subalgebra. Theorem~\ref{thm:obstruction} below
concerns $\mathcal{F}(\mathcal{O})$. An analogous obstruction for
$\mathcal{A}(\mathcal{O})$ holds by a parallel argument: $\hat{n}_f$
annihilates the Fock vacuum, so spectral projections of $\hat{n}_f$
affiliated with $\mathcal{A}(\mathcal{O})$ would by separating be
forced to vanish, contradicting the non-vanishing of $\hat{n}_f$ on
$f$-mode-occupied states. The argument is structurally similar to
that of Theorem~\ref{thm:obstruction} but uses affiliation and
spectral-projection bookkeeping in place of the c-number CCR
contradiction (since $\hat{n}_f$ is unbounded and hence not directly
in the von~Neumann algebra $\mathcal{A}(\mathcal{O})$); we omit this
variant.
\end{remark}

\begin{remark}[Common dense invariant domain for the symmetry generators]
\label{rem:garding_domain}
The strong continuity of the unitary representation $U:
\widetilde{G} \to \mathcal{B}(\mathcal{H})$ in (G3) implies, by
G\aa rding's theorem applied to the finite-dimensional Lie group
$\widetilde{G}$, the existence of a common dense invariant domain
$\mathcal{D} \subseteq \mathcal{H}$ for the generators $\hat{H},
\hat{\mathbf{P}}, \hat{\mathbf{K}}, \hat{\mathbf{J}}, \hat{M}$ on
which all generators are essentially self-adjoint, $\mathcal{D}$ is
preserved by every group element $U(g)$, and arbitrary iterated
commutators of generators are well-defined. One may take
$\mathcal{D}$ to be the G\aa rding domain $\mathcal{D}_G := \{U(\phi)
\Psi : \Psi \in \mathcal{H}, \phi \in C_c^\infty(\widetilde{G})\}$
or the dense subspace of analytic vectors. Centrality of $\hat{M}$
entails that $\mathcal{D}_M := \mathcal{D} \cap \mathcal{H}_M$ is
dense in $\mathcal{H}_M$ whenever $\mathcal{H}_M \neq \{0\}$ and is
invariant under all generators and under $U(g)$ for every $g \in
\widetilde{G}$. We invoke this domain in the proofs of
Lemma~\ref{lem:boost_identity} and Lemma~\ref{lem:boost_positivity}
below; all expectation values, Hadamard series, and conjugations
$U(g)^{-1} A\,U(g)$ used in those proofs are read on
$\mathcal{D}$ (or $\mathcal{D}_M$ in the sectorwise statements).
\end{remark}

\begin{lemma}[Time-zero/4D correspondence]
\label{lem:time_zero_4d}
Assume (G1)--(G6), and assume in addition that for every $g \in
C_c^\infty(\mathbb{R}^3)$ the time-zero limit
\begin{equation}\label{eq:time_zero_limit}
\hat{\psi}_0(g) := \lim_{\epsilon \to 0^+}
\hat{\psi}(\chi_\epsilon \otimes g)
\end{equation}
exists as a densely-defined operator on a common $\hat{H}$-stable
domain $\mathcal{D} \subset \mathcal{H}$ containing $\Omega$, where
$\chi_\epsilon \in C_c^\infty(\mathbb{R})$ is any approximating
sequence with $\chi_\epsilon \to \delta_0$ in
$\mathcal{D}'(\mathbb{R})$ and $\int \chi_\epsilon = 1$, with the
limit independent of the choice of approximating sequence.\footnote{%
This existence hypothesis is automatic for Wightman-style
realizations (under temperedness of the field operator-valued
distributions in time) and is verified directly for every published
interacting Galilean QFT cited in \S~\ref{sec:verification}. It is
also automatic under (G7) by Fock construction, where $\hat{\psi}_0(g)$
is the standard time-zero smeared annihilation operator.}
Then for $f(t, \mathbf{x}) = \chi(t)\,g(\mathbf{x})$ with
$\chi \in C_c^\infty(\mathbb{R})$ and $g \in C_c^\infty(\mathbb{R}^3)$,
\begin{equation}\label{eq:time_zero_integral}
\hat{\psi}(f) = \int \chi(t)\, \mathrm{e}^{\mathrm{i}\hat{H}t}\, \hat{\psi}_0(g)\,
\mathrm{e}^{-\mathrm{i}\hat{H}t}\, \mathrm{d}t
\end{equation}
(as sesquilinear forms on $\mathcal{D} \times \mathcal{D}$, with the
right-hand side defined as the strong integral against $\chi$). The
identity \eqref{eq:time_zero_integral} extends by linearity and
density to all $f \in C_c^\infty(\mathcal{M}_G)$. Consequently the
following are equivalent:
\begin{enumerate}[label=(\roman*)]
\item $\hat{\psi}_0(g)\,\Omega = 0$ for every
$g \in C_c^\infty(\mathbb{R}^3)$;
\item $\hat{\psi}(f)\,\Omega = 0$ for every
$f \in C_c^\infty(\mathcal{M}_G)$.
\end{enumerate}
The implication (i)~$\Rightarrow$~(ii) follows by integrating
\eqref{eq:time_zero_integral} against $\Omega$ and using
$\mathrm{e}^{-\mathrm{i}\hat{H}t}\Omega = \Omega$ from (G6); the implication
(ii)~$\Rightarrow$~(i) follows because for fixed $g$ the
$\chi$-smeared integrand on the right-hand side of
\eqref{eq:time_zero_integral} vanishes for every $\chi \in
C_c^\infty(\mathbb{R})$, so by test-function density the operator
$t \mapsto \mathrm{e}^{\mathrm{i}\hat{H}t}\hat{\psi}_0(g)\mathrm{e}^{-\mathrm{i}\hat{H}t}$ vanishes
almost everywhere as a sesquilinear form on $\mathcal{D}$, and by
strong continuity of $\mathrm{e}^{\mathrm{i}\hat{H}t}$ on $\mathcal{D}$ it vanishes at
$t = 0$, giving $\hat{\psi}_0(g) = 0$.
\end{lemma}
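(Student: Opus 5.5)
The plan is to first establish the integral representation \eqref{eq:time_zero_integral} and then read off the two implications. The starting point is the dynamical meaning of covariance (G3) for the time-translation subgroup: $U(\tau)\hat{\psi}(f)U(\tau)^* = \hat{\psi}(f_\tau)$ with $f_\tau(t,\mathbf{x}) = f(t-\tau,\mathbf{x})$ and $U(\tau)=\mathrm{e}^{\mathrm{i}\hat{H}\tau}$, the sign convention being fixed so that it matches \eqref{eq:time_zero_integral}. This relation holds as an operator identity for the affiliated fields and hence as sesquilinear forms on $\mathcal{D}\times\mathcal{D}$, since $\mathcal{D}$ is $\hat{H}$-stable (Remark~\ref{rem:garding_domain}).

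Fix $g$ and $\chi$, and take the approximating sequence $\chi_\epsilon$. We have $\chi_\epsilon * \chi \to \chi$ in $C_c^\infty(\mathbb{R})$, with supports in a fixed compact set. We also have the identity of test functions
\[
(\chi_\epsilon*\chi)\otimes g = \int \chi(\tau)\,(\chi_\epsilon\otimes g)_\tau\,\mathrm{d}\tau ,
\]
understood as a Riemann integral converging in $C_c^\infty(\mathcal{M}_G)$. Using linearity and continuity of $f\mapsto\langle\Phi,\hat{\psi}(f)\Psi\rangle$ on $C_c^\infty$ (the standard operator-valued-distribution regularity implicit in (G4)), we obtain
\[
\langle\Phi,\hat{\psi}((\chi_\epsilon*\chi)\otimes g)\Psi\rangle = \int\chi(\tau)\,\langle \mathrm{e}^{-\mathrm{i}\hat{H}\tau}\Phi,\hat{\psi}(\chi_\epsilon\otimes g)\,\mathrm{e}^{-\mathrm{i}\hat{H}\tau}\Psi\rangle\,\mathrm{d}\tau .
\]
We then let $\epsilon\to0^+$. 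On the left, continuity gives $\langle\Phi,\hat{\psi}(\chi\otimes g)\Psi\rangle$. On the right, the hypothesis \eqref{eq:time_zero_limit} gives pointwise convergence in $\tau$ of the integrand to $\langle\Phi,\mathrm{e}^{\mathrm{i}\hat{H}\tau}\hat{\psi}_0(g)\mathrm{e}^{-\mathrm{i}\hat{H}\tau}\Psi\rangle$.

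\textbf{Main obstacle.} The hard step is justifying the interchange of the limit $\epsilon\to0^+$ with the $\tau$-integral, which requires a dominating bound uniform in $\epsilon$ and in $\tau\in\mathrm{supp}\,\chi$. My first attempt would be uniform boundedness: the family $\hat{\psi}(\chi_\epsilon\otimes g)$ converges on $\mathcal{D}$, so on the compact set of vectors $\{\mathrm{e}^{-\mathrm{i}\hat{H}\tau}\Psi\}$, which lies in $\mathcal{D}$ by stability, the matrix elements are uniformly bounded. To get this I would strengthen the assumed convergence to local uniformity on $\hat{H}$-orbits, which is the natural reading of ``limit independent of approximating sequence.'' Dominated convergence then yields \eqref{eq:time_zero_integral} for product test functions. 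Extension to general $f$ follows because finite sums of products $\chi\otimes g$ are dense in $C_c^\infty(\mathcal{M}_G)$, and both sides are continuous in $f$.

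For the equivalence, (i)$\Rightarrow$(ii) is immediate. Applying \eqref{eq:time_zero_integral} to $\Omega$ and using $\mathrm{e}^{-\mathrm{i}\hat{H}t}\Omega=\Omega$ from (G6) gives $\hat{\psi}(\chi\otimes g)\Omega = \int\chi(t)\,\mathrm{e}^{\mathrm{i}\hat{H}t}\hat{\psi}_0(g)\Omega\,\mathrm{d}t = 0$, and the general case follows by density. For (ii)$\Rightarrow$(i), fix $g$ and $\Phi\in\mathcal{D}$. The continuous function $F(t)=\langle \mathrm{e}^{-\mathrm{i}\hat{H}t}\Phi,\hat{\psi}_0(g)\Omega\rangle$ integrates to zero against every $\chi$, so it vanishes identically. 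Its continuity comes from strong continuity of $\mathrm{e}^{\mathrm{i}\hat{H}t}$, and we only need it at $t=0$. Hence $\langle\Phi,\hat{\psi}_0(g)\Omega\rangle=0$ for all $\Phi$ in the dense set $\mathcal{D}$, which gives $\hat{\psi}_0(g)\Omega=0$.
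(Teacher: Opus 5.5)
Your proof of the equivalence is the paper's argument. Your (ii)$\Rightarrow$(i) is in fact tighter. You work with the fixed vector $\hat{\psi}_0(g)\Omega$, so continuity of $F$ is automatic. The paper instead argues at operator level and cites only strong continuity of $\mathrm{e}^{\mathrm{i}\hat{H}t}$, which does not by itself make $t\mapsto\langle\Phi,\mathrm{e}^{\mathrm{i}\hat{H}t}\hat{\psi}_0(g)\mathrm{e}^{-\mathrm{i}\hat{H}t}\Psi\rangle$ continuous when $\hat{\psi}_0(g)$ is unbounded.

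The paper gives no derivation of \eqref{eq:time_zero_integral} at all, so your covariance-plus-convolution route is a genuine addition. However, you close the step you flagged by adding a hypothesis (local uniform convergence on $\hat{H}$-orbits). As written, that is a gap: you prove a lemma with a stronger hypothesis than the one stated. You don't need to add it. The clause ``any approximating sequence, limit independent of the choice'' already gives the uniformity, because shifted mollifiers are admissible.

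Here is the argument. Take $\chi_\epsilon(t)=\epsilon^{-1}\chi_1(t/\epsilon)$. For $\Phi,\Psi\in\mathcal{D}$ put $G(\epsilon,\tau):=\langle\mathrm{e}^{-\mathrm{i}\hat{H}\tau}\Phi,\hat{\psi}(\chi_\epsilon\otimes g)\,\mathrm{e}^{-\mathrm{i}\hat{H}\tau}\Psi\rangle$, and let $h(\tau)$ be your pointwise limit.
\begin{itemize}
\item If $\epsilon_n\to0$ and $\tau_n\to\tau_*$, covariance gives $G(\epsilon_n,\tau_n)=\langle\mathrm{e}^{-\mathrm{i}\hat{H}\tau_*}\Phi,\hat{\psi}(\chi_{\epsilon_n}(\cdot-s_n)\otimes g)\,\mathrm{e}^{-\mathrm{i}\hat{H}\tau_*}\Psi\rangle$ with $s_n=\tau_n-\tau_*\to0$.
\item Since $\chi_{\epsilon_n}(\cdot-s_n)\to\delta_0$ in $\mathcal{D}'(\mathbb{R})$ with unit integral, the hypothesis gives $G(\epsilon_n,\tau_n)\to h(\tau_*)$.
\item For $\epsilon>0$, $G$ is continuous by the distributional continuity you already assume. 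Extracting a convergent subsequence from any sequence along which $|G|\to\infty$ then shows $G$ is bounded on $(0,\epsilon_0]\times\mathrm{supp}\,\chi$.
\item Dominated convergence now justifies the interchange of limit and integral.
\item A diagonal version of the same argument shows $h$ is continuous. This makes the right-hand side of \eqref{eq:time_zero_integral} well defined. It also supplies exactly the continuity that the paper's operator-level version of (ii)$\Rightarrow$(i), used in the Combination step of Theorem~\ref{thm:obstruction}, leaves unjustified.
\end{itemize}

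A second step is unjustified as written, though the paper shares it tacitly. $\hat{H}$-stability ($\hat{H}\mathcal{D}\subseteq\mathcal{D}$) does not imply $\mathrm{e}^{-\mathrm{i}\hat{H}\tau}\mathcal{D}\subseteq\mathcal{D}$. Also, the G\aa rding domain of Remark~\ref{rem:garding_domain} is not the time-zero domain $\mathcal{D}$ of the lemma. So ``lies in $\mathcal{D}$ by stability'' is not an argument. Your pointwise convergence and the compactness argument above both need $\mathrm{e}^{-\mathrm{i}\hat{H}\tau}\Phi,\mathrm{e}^{-\mathrm{i}\hat{H}\tau}\Psi\in\mathcal{D}$. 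You should state invariance of $\mathcal{D}$ under the time evolution as a hypothesis. The paper assumes the same when it writes the right-hand side as a strong integral and invokes strong continuity of $\mathrm{e}^{\mathrm{i}\hat{H}t}$ on $\mathcal{D}$.

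Likewise, two inputs are not literally contained in (G3)--(G4), which constrain only the net: field-level covariance $\mathrm{e}^{\mathrm{i}\hat{H}\tau}\hat{\psi}(f)\mathrm{e}^{-\mathrm{i}\hat{H}\tau}=\hat{\psi}(f_\tau)$, and continuity of $f\mapsto\langle\Phi,\hat{\psi}(f)\Psi\rangle$. The identity itself implies the former, and the extension ``by linearity and density'' needs the latter. The paper uses both silently, so making them explicit is a strength of your write-up, not a defect.
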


The lemma's existence hypothesis is the only non-routine input. It
fails in principle only for representations of (G1)--(G6) in which
the field $\hat{\psi}(f)$ is ``too singular'' to admit a time-zero
restriction---a setting outside the scope of every published
construction we are aware of. Lemma~\ref{lem:time_zero_4d} is
invoked in the proof of Theorem~\ref{thm:obstruction} (Step~1, with
the existence hypothesis supplied automatically by (G7)) and in the
proof of Proposition~\ref{prop:non_fock_obstruction}
(\S~\ref{sec:non_fock_extension}, with the existence hypothesis
supplied by (G7$^*$)(d) below).

\begin{remark}[Fock-representation specialization]
\label{rem:time_zero_fock}
Under the Fock-representation hypothesis (G7), $\hat{\psi}_0(g)$ is
a standard Fock annihilation operator and exists on the algebraic
Fock domain (which is $\hat{H}$-stable and contains $\Omega$); the
existence hypothesis of Lemma~\ref{lem:time_zero_4d} is therefore
automatic. Moreover $\hat{\psi}_0(g)\Omega = 0$ for all $g \in
C_c^\infty(\mathbb{R}^3)$ directly. Lemma~\ref{lem:time_zero_4d}
then gives $\hat{\psi}(f)\Omega = 0$ for all $f \in
C_c^\infty(\mathcal{M}_G)$, the conclusion needed for Step~1 of the
proof of Theorem~\ref{thm:obstruction}.
\end{remark}

\begin{remark}[Status of axiom (G7)]
\label{rem:G7_status}
Axiom (G7) is a restriction of the framework, not a derived
consequence of (G1)--(G6). For finitely many degrees of freedom, the
Stone--von~Neumann theorem~\cite{Haag1992} forces the representation
of the canonical commutation relations to be Fock up to unitary
equivalence. In quantum field theory the canonical degrees of
freedom are infinite, and inequivalent (non-Fock) representations of
the CCR exist. Adding (G7) explicitly therefore narrows the framework,
but does so to the setting in which the published interacting Galilean
QFTs~\cite{LevyLeblond1967,Schrader1968,Hepp1969,Eckmann1970,%
LampartSchmidtTeufelTumulka2018} are constructed. Whether the
conclusion of Theorem~\ref{thm:obstruction} survives weakening (G7)
to allow non-Fock representations is left as an open question.
\end{remark}

\begin{remark}[Multiple species]
The single-species formulation in (G7) extends to the multi-species
case by taking $\mathcal{H}_1 = \bigoplus_\alpha L^2(\mathbb{R}^3)$
with one summand per species (each carrying its own irreducible
Bargmann representation of mass $m_\alpha$), and forming the
symmetric Fock space over the resulting single-particle space. The
proof of Theorem~\ref{thm:obstruction} below applies to each species
field independently and is unaffected by this extension.
\end{remark}

\begin{remark}[Existence of Galilean Haag--Kastler nets]
\label{rem:existence}
Galilean Haag--Kastler nets satisfying (G1)--(G7) exist: the local
Galilean Lee model of Schrader~\cite{Schrader1968} is the prototype,
with the GaliLee model of L\'evy-Leblond~\cite{LevyLeblond1967}, the
renormalization-theory constructions of Hepp~\cite{Hepp1969}, the
persistent-vacuum model of Eckmann~\cite{Eckmann1970}, and the
recent point-interaction construction of Lampart, Schmidt, Teufel,
and Tumulka~\cite{LampartSchmidtTeufelTumulka2018} as further
examples. Each construction realizes (G1)--(G7) with non-trivial
dynamics, demonstrating that the axioms (without Reeh--Schlieder)
are consistent and admit interacting solutions.
\end{remark}

\begin{remark}[Comparison with relativistic Haag--Kastler axioms]
\label{rem:cf_relativistic}
Axioms (G1)--(G7) parallel the standard Haag--Kastler axioms of
relativistic AQFT~\cite{HaagKastler1964,Haag1992} with three
substitutions: (i) spacelike-separation microcausality of
relativistic AQFT degenerates, in the $c \to \infty$ Galilean limit,
to the equal-time locality of (G2), since the Minkowski
spacelike-region family contracts to the family of equal-time slices
in this limit; (ii) the connected Poincar\'e
group becomes the Bargmann central extension of the Galilei group in
(G3), with mass appearing as a central charge; and (iii) the
single-time CCR (G4) replaces the relativistic field equation. The
Fock-representation hypothesis (G7) has no analog in the standard
relativistic axioms, where Reeh--Schlieder is a theorem (see
Remark~\ref{rem:RS_relativistic} below) and the representation is
constrained accordingly.
\end{remark}

We now state the strengthening that is the subject of this paper.

\begin{definition}[Reeh--Schlieder property]
\label{def:reeh_schlieder}
A Galilean Haag--Kastler net $(\mathcal{F}, \mathcal{H}, U, \Omega)$
has the \emph{Reeh--Schlieder property} if, for every bounded open
region $\mathcal{O} \subset \mathcal{M}_G$ with non-empty interior
and complement $\mathcal{O}'$ also having non-empty interior, the
vacuum $\Omega$ is:
\begin{enumerate}
\item \emph{cyclic} for $\mathcal{F}(\mathcal{O})$, i.e.,
$\overline{\mathcal{F}(\mathcal{O})\,\Omega} = \mathcal{H}$;
\item \emph{separating} for $\mathcal{F}(\mathcal{O})$, i.e.,
$A \in \mathcal{F}(\mathcal{O})$ and $A\Omega = 0$ imply $A = 0$.
\end{enumerate}
\end{definition}

\begin{remark}[Status of Reeh--Schlieder in the relativistic case]
\label{rem:RS_relativistic}
In the relativistic Haag--Kastler framework, the Reeh--Schlieder
property is a \emph{theorem}~\cite{ReehSchlieder1961,Haag1992},
following from weak additivity, the spectrum condition (positive
energy with respect to the forward light cone), and Lorentzian
microcausality. The proof proceeds via edge-of-the-wedge analyticity,
which propagates a local zero of vacuum expectation values to a
global zero by exploiting the proper convex spectrum cone
$\overline{V}^+$ with non-empty interior. In the Galilean setting,
the spectrum condition $\sigma(\hat{H}) \subset [0,\infty)$ imposes
only a half-line in energy with spatial momentum unconstrained, so
the analogous spectrum cone has empty interior in $\mathbb{R}^4$,
and the analyticity argument fails.
Definition~\ref{def:reeh_schlieder} therefore postulates the
Reeh--Schlieder property in the Galilean setting;
Theorem~\ref{thm:obstruction} shows that this postulate is
incompatible with the rest of the Galilean Haag--Kastler structure.
\end{remark}

\section{The relativistic mechanism: how local algebras evade
trivialization}
\label{sec:halvorson_mechanism}

The structural argument of Theorem~\ref{thm:obstruction} below is
that a certain evasion mechanism, available in the relativistic case,
fails in the Galilean case. Before proving the theorem, we review the
relativistic mechanism in detail, following
Halvorson~\cite{Halvorson2001}. The exposition serves two purposes:
it makes precise what is being denied in the Galilean case, and it
clarifies the role of each axiom (G1)--(G7) by exhibiting the
relativistic counterpart that does \emph{not} produce a contradiction.

\subsection{The relativistic free Bose field}

Consider the relativistic free scalar field of mass $m > 0$ on
Minkowski spacetime, in its standard algebraic
formulation~\cite{Halvorson2001,BratteliRobinson1997}. The
one-particle space $\mathcal{H}_1^{\mathrm{rel}}$ is the completion
of the symplectic space $S = C_c^\infty(\mathbb{R}^3) \oplus
C_c^\infty(\mathbb{R}^3)$ of Cauchy data with respect to a real
inner product determined by the operator $H_{\mathrm{rel}} =
(-\nabla^2 + m^2)^{1/2}$. The full Hilbert space is the symmetric
Fock space $\mathcal{F}_s(\mathcal{H}_1^{\mathrm{rel}})$ with
vacuum vector $\Omega^{\mathrm{rel}}$ and standard creation and
annihilation operators $a^\dagger(f), a(f)$ satisfying $[a(f),
a^\dagger(g)] = \langle f, g\rangle$ on
$\mathcal{H}_1^{\mathrm{rel}}$.

The Hermitian field is $\hat{\phi}(f) = 2^{-1/2}(a(f) +
a^\dagger(\bar{f}))$, with Weyl form $W(f) = \mathrm{e}^{\mathrm{i}\hat{\phi}(f)}$. For
a bounded spatial region $G \subset \mathbb{R}^3$, the local algebra
is
\begin{equation}\label{eq:rel_local_algebra}
\mathcal{R}(G) = \{W(f) : f \in S(G)\}'',
\end{equation}
the von~Neumann algebra generated by Weyl operators with Cauchy
data localized in $G$, where $S(G) = C_c^\infty(G) \oplus
C_c^\infty(G)$ is the real-linear subspace of localized data. The
relativistic Reeh--Schlieder
theorem~\cite{ReehSchlieder1961,Haag1992} applies:
$\Omega^{\mathrm{rel}}$ is cyclic and separating for every
$\mathcal{R}(G)$ with $G$ a non-empty bounded open region.

\subsection{The evasion mechanism}

The relativistic theory is non-trivial despite Reeh--Schlieder. To
see why no contradiction arises, observe that
\begin{equation}\label{eq:a_not_in_R}
a(f), \, a^\dagger(f) \notin \mathcal{R}(G)
\qquad \text{for } f \in S(G),
\end{equation}
even though $\hat{\phi}(f) \in \mathcal{R}(G)$. This is the
relativistic evasion: the local algebra contains the Hermitian
combination $\hat{\phi}(f)$ but not its annihilation- and
creation-mode pieces separately.

The reason \eqref{eq:a_not_in_R} holds is
structural~\cite[\S~3.1, item~3]{Halvorson2001}. The one-particle space
carries a complex structure $J: \mathcal{H}_1^{\mathrm{rel}} \to
\mathcal{H}_1^{\mathrm{rel}}$ with $J^2 = -\mathbb{1}$, so that
$\mathcal{H}_1^{\mathrm{rel}}$ is a complex Hilbert space with
multiplication by $\mathrm{i}$ realized as $J$. Inverting the standard
relations between $a, a^\dagger$ and $\hat{\phi}$ gives
\begin{equation}\label{eq:a_from_phi}
a(f) = 2^{-1/2}\bigl(\hat{\phi}(f) + \mathrm{i}\,\hat{\phi}(\mathrm{i}\,f)\bigr),
\qquad
a^\dagger(f) = 2^{-1/2}\bigl(\hat{\phi}(f) - \mathrm{i}\,\hat{\phi}(\mathrm{i}\,f)\bigr).
\end{equation}
For $a(f)$ to lie in $\mathcal{R}(G)$, both $\hat{\phi}(f)$ and
$\hat{\phi}(\mathrm{i}\,f) = \hat{\phi}(Jf)$ would need to lie in
$\mathcal{R}(G)$, which requires $f \in S(G)$ and $Jf \in S(G)$. But
$S(G)$ is not closed under $J$: for $u_0 \oplus u_1 \in S(G)$,
$J(u_0 \oplus u_1) = -H_{\mathrm{rel}}^{-1}u_1 \oplus
H_{\mathrm{rel}} u_0$, and $H_{\mathrm{rel}}u_0$ has support
extending beyond $G$~\cite[eq.~(25)]{Halvorson2001}. Hence $JS(G)
\not\subseteq S(G)$, and \eqref{eq:a_not_in_R} follows.

The consequence for the Reeh--Schlieder property is direct. Although
$a(f)\Omega^{\mathrm{rel}} = 0$ trivially, the separating property
of $\Omega^{\mathrm{rel}}$ for $\mathcal{R}(G)$ does not force $a(f)
= 0$, because $a(f)$ is not an element of $\mathcal{R}(G)$. The
separating property does apply to $\hat{\phi}(f) \in \mathcal{R}(G)$,
but $\hat{\phi}(f)\Omega^{\mathrm{rel}} =
a^\dagger(\bar{f})\Omega^{\mathrm{rel}}$ is a non-zero one-particle
state, so the hypothesis of the contrapositive ($A\Omega = 0
\Rightarrow A = 0$) is never satisfied for $A = \hat{\phi}(f)$, $f
\neq 0$. The algebraic obstruction is therefore vacuous on the
relativistic local algebra.

\subsection{Why the evasion is unavailable in the Galilean case}
\label{ssec:why_unavailable}

Theorem~\ref{thm:obstruction} below shows that the analog of
\eqref{eq:a_not_in_R} fails in the Galilean setting:
$\hat{\psi}(f)$ \emph{is} in $\mathcal{F}(\mathcal{O})$ for $f$
supported in $\mathcal{O}$, as an individual element of the
algebra, by axiom (G4). This is a definitional difference between
the two frameworks rather than a derived fact: relativistic AQFT
takes Hermitian Bose fields $\hat{\phi}(f)$ as the local generators
and obtains $\hat{a}(f), \hat{a}^\dagger(f)\notin \mathcal{R}(G)$
as a structural consequence; Galilean AQFT, by (G4), takes the
canonical $\hat{\psi}(f), \hat{\psi}^\dagger(f)$ as local generators
directly.

The structural reason for this difference is Bargmann mass
superselection~\cite{WickWightmanWigner1952,Bargmann1954}. Under the
central $U(1)$ generated by $\hat{M}$, the
Galilean fields $\hat{\psi}(f)$ and $\hat{\psi}^\dagger(f)$ carry
opposite mass charges $\mp m_0$, so the Hermitian sum
$\hat{\psi}(f) + \hat{\psi}^\dagger(f)$ has components in two
distinct Bargmann sectors and does not have a definite mass charge.
The Hermitian sum is not a Galilean-covariant primary field; it
cannot serve as the natural single-charge generator of
$\mathcal{F}(\mathcal{O})$ in the way that $\hat{\phi}(f)$ does in
the relativistic setting. The natural Galilean-covariant choice
of generators is the one in (G4): $\hat{\psi}(f)$ and
$\hat{\psi}^\dagger(f)$ taken separately.

In the language of complex structures: a Galilean analog of the
relativistic complex structure $J$ that would identify
mass-charge-zero Hermitian combinations does not respect the
Bargmann phase action. There is no $J$ with $J^2 = -\mathbb{1}$
acting on the Galilean single-particle space that intertwines the
opposite-charge Bargmann actions on annihilation and creation modes
in the way the relativistic $J$ does~\cite[eq.~(7)]{Halvorson2001}.

The Galilean structure thus differs from the relativistic structure
in the precise place where the relativistic Reeh--Schlieder
obstruction is rendered vacuous: the local field algebra contains
$\hat{\psi}(f)$ alone, and the algebraic obstruction triggers.

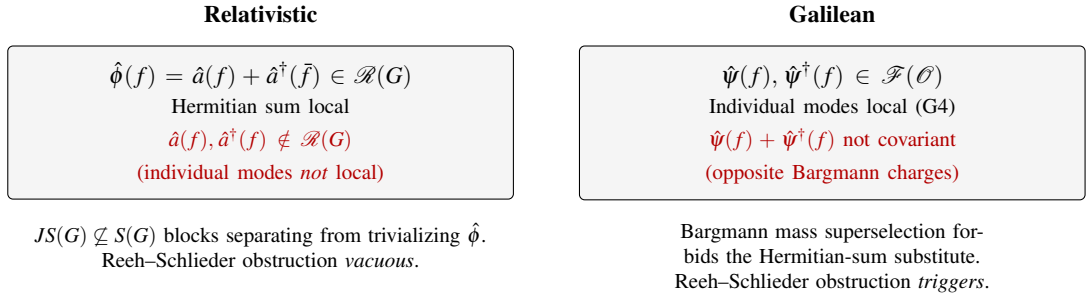
\begin{figure}[htb]
\centering
\begin{tikzpicture}[
  font=\small,
  algebra/.style={draw, rounded corners=2pt, fill=gray!8,
                  inner sep=6pt, align=center, text width=63mm,
                  minimum height=20mm},
  labtitle/.style={font=\small\bfseries, align=center},
  bottomtxt/.style={align=center, font=\footnotesize, text width=63mm},
  >={Stealth[length=2mm]}
]
%% Left column (Relativistic), centered at x = -38mm
\node[labtitle] (rl) at (-38mm, 0) {Relativistic};
\node[algebra, below=2mm of rl] (rel) {%
  $\hat{\phi}(f) = \hat{a}(f) + \hat{a}^\dagger(\bar{f}) \in \mathcal{R}(G)$\\[2pt]
  \footnotesize Hermitian sum local\\[3pt]
  \textcolor{red!70!black}{$\hat{a}(f),\,\hat{a}^\dagger(f) \notin \mathcal{R}(G)$}\\[2pt]
  \footnotesize\textcolor{red!70!black}{(individual modes \emph{not} local)}};
\node[bottomtxt, below=2mm of rel] (relbottom)
  {$JS(G) \not\subseteq S(G)$ blocks separating from
   trivializing $\hat{\phi}$.\\
   Reeh--Schlieder obstruction \emph{vacuous}.};

%% Right column (Galilean), centered at x = +38mm
\node[labtitle] (gl) at (38mm, 0) {Galilean};
\node[algebra, below=2mm of gl] (gal) {%
  $\hat{\psi}(f),\, \hat{\psi}^\dagger(f) \in \mathcal{F}(\mathcal{O})$\\[2pt]
  \footnotesize Individual modes local (G4)\\[3pt]
  \textcolor{red!70!black}{$\hat{\psi}(f) + \hat{\psi}^\dagger(f)$ not covariant}\\[2pt]
  \footnotesize\textcolor{red!70!black}{(opposite Bargmann charges)}};
\node[bottomtxt, below=2mm of gal] (galbottom)
  {Bargmann mass superselection forbids the
   Hermitian-sum substitute.\\
   Reeh--Schlieder obstruction \emph{triggers}.};
\end{tikzpicture}
\caption{The structural difference between the relativistic and
Galilean local field algebras. Relativistically, $\hat{\phi}(f)$ is
Hermitian and local while $\hat{a}(f), \hat{a}^\dagger(f)$ are
neither --- the complex structure $J$ on the one-particle Cauchy data
fails to preserve $S(G)$, blocking the Hermitian-sum decomposition
inside $\mathcal{R}(G)$~\cite[\S~3.1]{Halvorson2001}. In the
Galilean case, axiom (G4) places $\hat{\psi}(f), \hat{\psi}^\dagger(f)$
in $\mathcal{F}(\mathcal{O})$ as individual sector-graded elements,
because Bargmann mass superselection (G3) forbids treating the
sector-mixing Hermitian sum as a covariant primary field. This is
the precise structural locus at which the Reeh--Schlieder
obstruction is rendered vacuous in the relativistic setting and
triggered in the Galilean setting.}
\label{fig:structural_divider}
\end{figure}

\section{The Obstruction Theorem}
\label{sec:theorem}

\begin{theorem}[Galilean Reeh--Schlieder Obstruction]
\label{thm:obstruction}
Axioms (G1)--(G7) of Definition~\ref{def:galilean_HK_net} are
inconsistent with the Reeh--Schlieder property of
Definition~\ref{def:reeh_schlieder}: no Galilean Haag--Kastler net
$(\mathcal{F}, \mathcal{H}, U, \Omega)$ satisfies (G1)--(G7) and has
$\Omega$ cyclic-and-separating for every local field algebra
$\mathcal{F}(\mathcal{O})$.
\end{theorem}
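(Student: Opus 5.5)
The argument is by contradiction. Suppose a net $(\mathcal{F},\mathcal{H},U,\Omega)$ satisfies (G1)--(G7) and has the Reeh--Schlieder property. Fix a bounded open region $\mathcal{O}$ with non-empty interior whose complement also has non-empty interior; a small double-cone-like neighbourhood of a point of $\Sigma_0$ will do. Choose $f = \chi\otimes g \in C_c^\infty(\mathcal{O})$ with $g\neq 0$ and $\int\chi = 1$. The plan has three steps. First, show that $\hat{\psi}(f)\Omega = 0$. Second, use the separating property to conclude that $\hat{\psi}(f) = 0$. Third, contradict this with the CCR \eqref{eq:CCR}.

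\textbf{Step 1.} By Remark~\ref{rem:time_zero_fock}, (G7) supplies the time-zero fields as Fock annihilation operators on the algebraic Fock domain. This domain is $\hat H$-stable and contains $\Omega$, and $\hat{\psi}_0(g)\Omega = 0$ for every $g$. Lemma~\ref{lem:time_zero_4d}, implication (i)$\Rightarrow$(ii), then gives $\hat{\psi}(f)\Omega = 0$ for every $f\in C_c^\infty(\mathcal{M}_G)$, in particular for our $f$ supported in $\mathcal{O}$.

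\textbf{Step 2.} This step is the main technical obstacle, because $\hat{\psi}(f)$ is unbounded and only affiliated to $\mathcal{F}(\mathcal{O})$, as (G4) and its footnote state. I would use the polar decomposition $\overline{\hat{\psi}(f)} = V|\hat{\psi}(f)|$. Both $V$ and the spectral projections $E_{[\varepsilon,\infty)}$ of $|\hat{\psi}(f)|$ for $\varepsilon>0$ lie in $\mathcal{F}(\mathcal{O})$.

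From $\hat{\psi}(f)\Omega=0$ we get $|\hat{\psi}(f)|\Omega = 0$, so $\Omega$ lies in the kernel and $E_{[\varepsilon,\infty)}\Omega = 0$. Since $E_{[\varepsilon,\infty)}$ is a bounded element of $\mathcal{F}(\mathcal{O})$, the separating property forces $E_{[\varepsilon,\infty)} = 0$ for every $\varepsilon>0$. Hence $|\hat{\psi}(f)| = 0$ and $\hat{\psi}(f) = 0$ on its domain. By Lemma~\ref{lem:time_zero_4d}, implication (ii)$\Rightarrow$(i), applied to all $\chi$ (every $\chi\otimes g$ with $\operatorname{supp}\chi$ small enough lies in $\mathcal{O}$), we obtain $\hat{\psi}_0(g) = 0$ on $\mathcal{D}$. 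If the lemma's (ii) is read as requiring all test functions, I would note that its proof only uses time-smearings near $t=0$. Time-translation covariance (G3) together with isotony (G1) then supplies the remaining localizations.

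\textbf{Step 3.} Since $\hat{\psi}_0(g)=0$, the left-hand side of the CCR \eqref{eq:CCR} with $g_1=g_2=g$ vanishes on $\mathcal{D}$. The right-hand side is $\|g\|_{L^2}^2\,\mathbb{1}\neq 0$. Evaluating both sides on the unit vector $\Omega\in\mathcal{D}$ gives $0 = \|g\|^2>0$, a contradiction. Equivalently, in the Fock picture, $\hat{\psi}_0(g)\hat{\psi}_0^\dagger(g)\Omega = \|g\|^2\Omega\neq 0$.

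Cyclicity is never used, so the obstruction already follows from the separating half of Definition~\ref{def:reeh_schlieder} for a single region. The only delicate bookkeeping is the passage from the affiliated unbounded field to bounded elements of $\mathcal{F}(\mathcal{O})$ in Step 2.
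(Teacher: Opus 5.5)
Your proposal is correct and follows the paper's proof step for step. The Fock annihilation $\hat{\psi}_0(g)\Omega=0$ is carried to $\hat{\psi}(f)\Omega=0$ by Lemma~\ref{lem:time_zero_4d} (i)$\Rightarrow$(ii); the separating property forces $\hat{\psi}(f)=0$; Lemma~\ref{lem:time_zero_4d} (ii)$\Rightarrow$(i) then gives $\hat{\psi}_0(g)=0$; and this contradicts the c-number CCR \eqref{eq:CCR}. The differences are only presentational: you write out the polar-decomposition and spectral-projection bookkeeping that the paper leaves to the footnote of (G4), and you work in a single region, which is exactly how the paper later obtains Corollary~\ref{cor:no_modular_flow}.
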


\begin{proof}
The proof has two structural steps followed by a combination
yielding the contradiction.

\paragraph{Step 1 (Galilean fields annihilate the Fock vacuum).}
By (G7), the time-zero fields $\hat{\psi}_0(g) :=
\hat{\psi}(\delta_0 \otimes g)$ act as standard Fock annihilation
operators with $\hat{\psi}_0(g)\Omega = 0$ for every $g \in
C_c^\infty(\mathbb{R}^3)$ (Remark~\ref{rem:time_zero_fock}). By
Lemma~\ref{lem:time_zero_4d} ((i)$\Rightarrow$(ii)), this
propagates to the 4D-smeared fields:
\begin{equation}\label{eq:psi_kills_omega}
\hat{\psi}(f)\,\Omega = 0
\qquad \text{for every } f \in C_c^\infty(\mathcal{M}_G).
\end{equation}

\paragraph{Step 2 (Locality of $\hat{\psi}(f)$).}
By (G4), the local field algebra $\mathcal{F}(\mathcal{O})$ is
generated by the smeared canonical fields $\hat{\psi}(f),
\hat{\psi}^\dagger(f)$ for $f$ supported in $\mathcal{O}$ taken
separately, so
\begin{equation}\label{eq:psi_local}
\hat{\psi}(f) \in \mathcal{F}(\mathcal{O})
\qquad \text{for every } f \in C_c^\infty(\mathcal{M}_G)
\text{ with } \mathrm{supp}(f) \subseteq \mathcal{O},
\end{equation}
as an individual element of the algebra (not only as a component
of some Hermitian combination). This is the structural feature
distinguishing the Galilean setting from the relativistic one. In
the relativistic case (Section~\ref{sec:halvorson_mechanism}), the
local algebra $\mathcal{R}(G)$ is generated by Hermitian Bose
fields $\hat{\phi}(f) = \hat{a}(f) + \hat{a}^\dagger(\bar{f})$,
with $\hat{a}(f), \hat{a}^\dagger(f) \notin \mathcal{R}(G)$
separately~\cite[\S~3.1, item~3]{Halvorson2001}; the analog of
\eqref{eq:psi_kills_omega} for $\hat{a}(f)$ therefore does not
trigger an obstruction, since the separating property of
$\Omega^{\mathrm{rel}}$ acts on $\mathcal{R}(G)$, not on
$\hat{a}(f)$. In the Galilean case, the Hermitian-sum substitute
$\hat{\psi}(f) + \hat{\psi}^\dagger(f)$ is unavailable as a local
generator: under the Bargmann central $U(1)$ of (G3), the two
fields carry opposite mass charges $\mp m_0$, so the Hermitian
sum is sector-mixing rather than a Galilean-covariant primary
field (Section~\ref{sec:halvorson_mechanism}, \S\ref{ssec:why_unavailable}
develops this in detail). The natural Galilean-covariant local
generators are therefore the single-charge fields
$\hat{\psi}(f), \hat{\psi}^\dagger(f)$ separately, exactly as
postulated in (G4); equation~\eqref{eq:psi_local} is the input to
the combination step below.

\paragraph{Combination.}
Fix any $f \in C_c^\infty(\mathcal{M}_G)$ with $f \neq 0$. Choose a
bounded open region $\mathcal{O}_f \subset \mathcal{M}_G$ containing
$\mathrm{supp}(f)$ and with complement $\mathcal{O}_f'$ having
non-empty interior; such $\mathcal{O}_f$ exists because
$\mathrm{supp}(f)$ is compact. By Step~2 (eq.~\eqref{eq:psi_local}),
$\hat{\psi}(f) \in \mathcal{F}(\mathcal{O}_f)$. By Step~1
(eq.~\eqref{eq:psi_kills_omega}), $\hat{\psi}(f)\Omega = 0$. The
Reeh--Schlieder hypothesis applied to $\mathcal{O}_f$ gives that
$\Omega$ is separating for $\mathcal{F}(\mathcal{O}_f)$; the
contrapositive of separating then yields
\[
\hat{\psi}(f) = 0 \qquad \text{as an operator on } \mathcal{H}.
\]
Since $f$ was arbitrary,
\begin{equation}\label{eq:psi_zero}
\hat{\psi}(f) = 0 \quad \text{for every } f \in
C_c^\infty(\mathcal{M}_G).
\end{equation}
By the time-zero/4D correspondence of
Lemma~\ref{lem:time_zero_4d} ((ii)$\Rightarrow$(i)), this propagates
to vanishing of the time-zero fields $\hat{\psi}_0(g) = 0$ for all
$g \in C_c^\infty(\mathbb{R}^3)$.

The equal-time CCR \eqref{eq:CCR} of (G4) requires
\[
[\hat{\psi}_0(g_1), \hat{\psi}_0^\dagger(g_2)] = \int_{\mathbb{R}^3}
\overline{g_1(\mathbf{x})}\, g_2(\mathbf{x})\, \mathrm{d}^3x \cdot \mathbb{1}
\]
on the algebraic Fock domain (which is dense in $\mathcal{H}$).
Specializing to $g_1 = g_2 = g \neq 0$, the right-hand side is
$\|g\|_{L^2}^2 \cdot \mathbb{1} \neq 0$, while the left-hand side
vanishes by \eqref{eq:psi_zero} and the time-zero specialization
above. The CCR axiom (G4) is therefore violated. The combined
hypotheses (G1)--(G7) plus Reeh--Schlieder are inconsistent.
\end{proof}

\begin{remark}[Halvorson's relativistic mirror]
\label{rem:halvorson_mirror}
The structural mirror of Step~2 in the relativistic case is the
observation of Halvorson~\cite[\S~3.1, item~3]{Halvorson2001} that
local algebras of the relativistic free Bose field do not contain
creation or annihilation operators separately, only Hermitian
combinations $\hat{\phi}(f) = \hat{a}(f) + \hat{a}^\dagger(\bar{f})$.
Halvorson's argument is that the real-linear subspace $S(G) \subset
\mathcal{H}_1$ of Cauchy data localized in $G$ is \emph{not} closed
under the complex structure $J$ of the one-particle space (his
eq.~(25)), and hence the complex span $S(G) + \mathrm{i}\,S(G)$ is dense in
$\mathcal{H}_1$ but $S(G)$ alone is not; this is what permits
$\hat{\phi}(f)$ to be local without $\hat{a}(f)$ being local.
Theorem~\ref{thm:obstruction} establishes that the analogous
mechanism is unavailable in the Galilean setting: Bargmann mass
superselection forbids the Hermitian combination, so the locality
of $\hat{\psi}(f)$ alone is forced by (G4), and the algebraic
obstruction triggers.
\end{remark}

\begin{remark}[Independence of the steps]
\label{rem:independence}
Both steps are essential. Step~1 alone---that Galilean fields
annihilate the Fock vacuum---is consistent: every published Galilean
QFT realizes this fact (\S~\ref{sec:verification}). The contradiction
requires Step~2---locality of $\hat{\psi}(f)$ in
$\mathcal{F}(\mathcal{O})$ as an individual element---together with
the Reeh--Schlieder hypothesis (which makes the separating property
bite). Conversely, Step~2 alone is a direct restatement of (G4) and
does not imply triviality: multiple Galilean Haag--Kastler nets exist
with non-trivial dynamics (Remark~\ref{rem:existence}). The
Reeh--Schlieder hypothesis is the load-bearing premise whose
addition produces the inconsistency. The structural reason (G4)
takes the form it does---requiring locality of $\hat{\psi}(f)$ as
an individual sector-graded element rather than as a component of a
Hermitian combination---is Bargmann mass superselection, developed
in \S~\ref{ssec:why_unavailable}.
\end{remark}

\section{Modular flow on Galilean local field algebras}
\label{sec:modular_corollary}

Theorem~\ref{thm:obstruction} establishes inconsistency by way of
the separating property: any $(\mathrm{G1})$--$(\mathrm{G7})$ net
must \emph{fail to be separating} on at least one local field
algebra. We sharpen this:

\begin{corollary}[No separating vacuum, no modular flow]
\label{cor:no_modular_flow}
Let $(\mathcal{F}, \mathcal{H}, U, \Omega)$ be any Galilean
Haag--Kastler net satisfying (G1)--(G7). Then for every bounded open
$\mathcal{O} \subset \mathcal{M}_G$ with $\mathcal{O}'$ having
non-empty interior, the vacuum $\Omega$ is \emph{not} separating for
$\mathcal{F}(\mathcal{O})$. Consequently, the Tomita--Takesaki
modular flow~\cite{Takesaki1970,BratteliRobinson1997} on
$\mathcal{F}(\mathcal{O})$ relative to $\Omega$ is undefined.
\end{corollary}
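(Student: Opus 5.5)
The plan is to rerun the argument of Theorem~\ref{thm:obstruction} one region at a time instead of for all regions at once, and then read off the modular consequence from the hypotheses of Tomita--Takesaki theory. Fix a net satisfying (G1)--(G7) and a bounded open $\mathcal{O}$ with $\mathcal{O}'$ having non-empty interior. Since $\mathcal{O}$ is open and non-empty, I would pick $g \in C_c^\infty(\mathbb{R}^3)$ and $\chi \in C_c^\infty(\mathbb{R})$, both non-zero, with $f = \chi \otimes g$ supported in $\mathcal{O}$. By (G4), $\hat{\psi}(f)$ is affiliated with $\mathcal{F}(\mathcal{O})$. By (G7) and Remark~\ref{rem:time_zero_fock}, together with Lemma~\ref{lem:time_zero_4d} ((i)$\Rightarrow$(ii)), we get $\hat{\psi}(f)\Omega = 0$.

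Suppose, for contradiction, that $\Omega$ is separating for $\mathcal{F}(\mathcal{O})$. Following the footnote to (G4), take the polar decomposition $\hat{\psi}(f) = V|\hat{\psi}(f)|$. The spectral projections $E_\lambda$ of $|\hat{\psi}(f)|$ onto $(\lambda,\infty)$, for $\lambda > 0$, lie in $\mathcal{F}(\mathcal{O})$. Since $|\hat{\psi}(f)|\Omega = 0$, we have $E_\lambda\Omega = 0$, so separation forces $E_\lambda = 0$ for every $\lambda>0$. Hence $\hat{\psi}(f) = 0$.

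This is where the argument must be genuinely local, and it is the main obstacle. The theorem's proof used vanishing of $\hat{\psi}(f)$ for \emph{all} $f$ to kill the time-zero field, but here we only control $f$ supported in the single region $\mathcal{O}$. I would handle this in two steps. First, vary $\chi$ over all of $C_c^\infty(I)$, where $I$ is a short time interval with $I \times \mathrm{supp}(g) \subset \mathcal{O}$. Then apply the argument of Lemma~\ref{lem:time_zero_4d} ((ii)$\Rightarrow$(i)), restricted to $t \in I$, to get $\mathrm{e}^{\mathrm{i}\hat{H}t}\hat{\psi}_0(g)\mathrm{e}^{-\mathrm{i}\hat{H}t} = 0$ for almost every $t\in I$, and hence, by unitarity, $\hat{\psi}_0(g)=0$ on $\mathcal{D}$. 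If $I$ does not contain $0$, time covariance (G3) moves the slice; this is harmless, since conjugating by the unitary $\mathrm{e}^{\mathrm{i}\hat{H}t_0}$ preserves vanishing. Second, (G7) gives the direct contradiction: the time-zero field is the Fock annihilation operator $a(g)$, and $a(g)a^\dagger(g)\Omega = \|g\|^2\Omega \neq 0$. Equivalently, the CCR \eqref{eq:CCR} with $g_1=g_2=g$ gives $0 = \|g\|_{L^2}^2\mathbb{1}$, which is impossible. So $\Omega$ is not separating for $\mathcal{F}(\mathcal{O})$.

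For the modular statement, recall that the Tomita operator $S_0: A\Omega \mapsto A^*\Omega$ on $\mathcal{F}(\mathcal{O})\Omega$ is well defined only when $A\Omega = 0$ implies $A^*\Omega = 0$. Its closability and the polar decomposition $S = J\Delta^{1/2}$ are used to define $\sigma_t(A) = \Delta^{\mathrm{i}t}A\Delta^{-\mathrm{i}t}$, and the standard theorem requires $\Omega$ to be cyclic and separating. I would exhibit the failure concretely. Take $A = E_\lambda$, a spectral projection of $|\hat{\psi}(f)|$ that we just showed is non-zero for some $\lambda > 0$. Then $A\Omega = 0$, but $A \neq 0$. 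Any would-be modular group then has nothing to act on: the vector state $\omega_\Omega$ is not faithful on $\mathcal{F}(\mathcal{O})$, and the KMS characterization of $\sigma_t$ fails, because a KMS state for a $W^*$-dynamics on a von~Neumann algebra is faithful on its support, which here is strictly smaller than $\mathbb{1}$. So the modular flow relative to $\Omega$ on the full algebra $\mathcal{F}(\mathcal{O})$ is undefined. The one caveat I would flag is that only a reduced flow on the corner algebra cut down by the support projection survives.
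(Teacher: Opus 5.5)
Your proof is correct and follows the paper's route: the fields annihilate $\Omega$ by (G7) and Lemma~\ref{lem:time_zero_4d}, separation kills $\hat{\psi}(f)$ via spectral projections of $|\hat{\psi}(f)|$, the CCR give the contradiction, and Tomita--Takesaki needs a separating vector. It is also more careful than the paper's one-line ``apply the proof of Theorem~\ref{thm:obstruction} with $\mathcal{O}_f:=\mathcal{O}$'': that proof reaches the time-zero field only after obtaining $\hat{\psi}(f)=0$ for \emph{all} $f$, which uses separation on every region, whereas your restriction of the (ii)$\Rightarrow$(i) argument of Lemma~\ref{lem:time_zero_4d} to a time interval $I$ with $I\times\mathrm{supp}(g)\subset\mathcal{O}$ (or, alternatively, time-translation covariance plus a partition of unity in $t$) is exactly what makes the single-region statement go through.

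One small repair is needed in the modular paragraph. You have not shown $\hat{\psi}(f)\neq 0$ for your originally chosen $f$; the contradiction argument only shows that $\hat{\psi}(\chi\otimes g)\neq 0$ for some $\chi\in C_c^\infty(I)$. So take $E_\lambda$ from that field, or simply note that failure of separation already supplies some $A\neq 0$ in $\mathcal{F}(\mathcal{O})$ with $A\Omega=0$.
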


\begin{proof}
For any specific bounded open $\mathcal{O}$ with $\mathcal{O}'$
having non-empty interior, choose any non-zero $f \in
C_c^\infty(\mathcal{M}_G)$ with $\mathrm{supp}(f) \subseteq
\mathcal{O}$ (such $f$ exists since $\mathcal{O}$ has non-empty
interior). The proof of Theorem~\ref{thm:obstruction}, applied with
$\mathcal{O}_f := \mathcal{O}$, derives a contradiction with the CCR
\eqref{eq:CCR} from the hypothesis that $\Omega$ is separating for
this $\mathcal{F}(\mathcal{O})$ and the remaining axioms (G1)--(G7).
Hence separating fails for this $\mathcal{O}$. Since $\mathcal{O}$
was arbitrary, the conclusion holds for every such region. The
Tomita--Takesaki construction of the modular operator $\Delta$ and
modular conjugation $J$ requires both cyclicity and separating;
failing separating makes the modular flow undefined.
\end{proof}

The same conclusion extends to the strengthened hypothesis set of
Theorem~\ref{thm:strengthened_obstruction} below; we state the
extension in Corollary~\ref{cor:no_modular_flow_strengthened} once
the strengthened theorem is in place.

\begin{remark}[Comparison with relativistic AQFT]
\label{rem:cf_modular}
This is in sharp contrast with the relativistic case, where
Bisognano--Wightman~\cite{BisognanoWightman1975,BisognanoWightman1976}
identifies the modular flow of wedge algebras with the Lorentz boost
subgroup, and Buchholz--D'Antoni--Fredenhagen~%
\cite{BuchholzDAntoniFredenhagen1987} establishes that local algebras
of Poincar\'e-covariant QFTs are universally isomorphic to the unique
hyperfinite type~III$_1$ factor of Connes~\cite{Connes1973} and
Haagerup~\cite{Haagerup1987}. Both results depend essentially on the
Reeh--Schlieder property of the vacuum, which the present corollary
shows is unavailable in the Galilean setting.
\end{remark}

\section{Extension beyond Fock representations}
\label{sec:non_fock_extension}

Theorem~\ref{thm:obstruction} uses (G7) only at Step~1 of its
proof---to derive $\hat{\psi}(f)\Omega = 0$ from the Fock
annihilation property of the time-zero fields. Step~2 uses only
(G4), and the Combination uses only the c-number CCR and the
separating half of Reeh--Schlieder. Extending the obstruction
beyond Fock representations therefore reduces to identifying a
weaker hypothesis under which Step~1 still holds.

We first record an observation that requires no new hypotheses
beyond those already available in (G3) and (G6).

\begin{lemma}[Vacuum is a Bargmann eigenvector]
\label{lem:omega_bargmann_eigenvector}
Under (G3) and (G6), the vacuum $\Omega$ is an eigenvector of the
central one-parameter unitary group $U(\theta) := \mathrm{e}^{\mathrm{i}\theta\hat{M}}$:
\begin{equation}\label{eq:omega_bargmann}
U(\theta)\Omega = \mathrm{e}^{\mathrm{i}\theta M_0}\Omega
\qquad
\text{for some } M_0 \in \mathbb{R}.
\end{equation}
\end{lemma}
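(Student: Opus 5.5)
The plan is to combine the centrality of $\hat{M}$ from (G3) with the uniqueness clause of (G6).

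First, since $\hat{M}$ is central, the one-parameter group $U(\theta) = \mathrm{e}^{\mathrm{i}\theta\hat{M}}$ commutes with every $U(g)$, $g \in \widetilde{G}$. In particular it commutes with the spatial translations $U(\mathbf{a},0)$ and the time translations $U(\mathbf{0},\tau)$. Hence, for each fixed $\theta$, the vector $\Omega_\theta := U(\theta)\Omega$ satisfies
\[
U(\mathbf{a},\tau)\Omega_\theta = U(\theta)U(\mathbf{a},\tau)\Omega = U(\theta)\Omega = \Omega_\theta .
\]
So $\Omega_\theta$ is again a translation-invariant unit vector. By the uniqueness clause of (G6), $\Omega_\theta = c(\theta)\Omega$ for some phase $c(\theta) \in \mathbb{C}$ with $|c(\theta)| = 1$.

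Second, determine the form of $c(\theta)$. From $U(\theta_1+\theta_2) = U(\theta_1)U(\theta_2)$ we get $c(\theta_1+\theta_2) = c(\theta_1)c(\theta_2)$. Strong continuity of $U$ (part of (G3)) gives continuity of $c(\theta) = \langle \Omega, U(\theta)\Omega\rangle$. A continuous homomorphism $\mathbb{R}\to U(1)$ has the form $c(\theta) = \mathrm{e}^{\mathrm{i}\theta M_0}$ for a unique $M_0\in\mathbb{R}$. This is the standard classification of continuous characters of $\mathbb{R}$, and it yields \eqref{eq:omega_bargmann}.

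Alternatively, $\Omega$ lies in the spectral subspace of $\hat{M}$ for $M_0$; equivalently, $\Omega \in \mathcal{H}_{M_0}$ in the decomposition of (G3). This can be checked through Stone's theorem: since $\theta \mapsto U(\theta)\Omega$ is differentiable with derivative $\mathrm{i}M_0\Omega$, we have $\Omega \in \mathrm{dom}(\hat{M})$ and $\hat{M}\Omega = M_0\Omega$.

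The only point requiring care is reading (G6)'s ``unique up to phase'' as asserting that the translation-invariant subspace is one-dimensional. This is the intended reading; uniqueness up to phase of unit vectors is equivalent to it. No obstacle beyond that is expected. In particular, no domain issues arise, because only bounded unitaries are manipulated until the final optional application of Stone's theorem.
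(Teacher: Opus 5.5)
Your proof is correct and follows the paper's argument essentially step for step. Centrality of $\hat{M}$ makes $U(\theta)\Omega$ translation-invariant, the uniqueness clause of (G6) gives $U(\theta)\Omega = c(\theta)\Omega$ with $|c(\theta)|=1$, and strong continuity makes $c$ a continuous character of $\mathbb{R}$, hence $c(\theta)=\mathrm{e}^{\mathrm{i}\theta M_0}$. Your explicit check of the homomorphism property and your Stone-theorem remark that $\hat{M}\Omega = M_0\Omega$ are useful additions; the latter is exactly what the paper relies on later when it places $\hat{\psi}(f)\Omega$ in an eigenspace of $\hat{M}$.
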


\begin{proof}
By centrality of $\hat{M}$ in the Bargmann extension (G3),
$U(\theta)$ commutes with all $U(g)$, in particular with the
spatial- and time-translation subgroup. Hence
$U(\mathbf{a},\tau)\,U(\theta)\Omega =
U(\theta)\,U(\mathbf{a},\tau)\Omega = U(\theta)\Omega$, so
$U(\theta)\Omega$ is translation-invariant in the sense of (G6).
By the uniqueness clause of (G6), $U(\theta)\Omega = c(\theta)\,
\Omega$ for some $c(\theta) \in \mathbb{C}$ with $|c(\theta)| = 1$
(unitarity of $U(\theta)$). Strong continuity of $U(\cdot)$
(Stone's theorem applied to the self-adjoint generator $\hat{M}$)
makes $c$ a continuous homomorphism $\mathbb{R} \to U(1)$, hence
$c(\theta) = \mathrm{e}^{\mathrm{i}\theta M_0}$ for some $M_0 \in \mathbb{R}$.
\end{proof}

The natural non-Fock weakening of (G7) is then a single spectral
hypothesis on $\hat{M}$, together with the assumption that the
canonical fields of (G4) carry definite Bargmann mass charges---a
property automatic in (G7) by Fock construction, but not derivable
from (G1)--(G6) alone in the absence of a Fock structure:

\begin{itemize}
\item[(G7$^*$)] \textbf{Bargmann-charge ground-state vacuum.}
There exists $m_0 > 0$ such that:
\begin{enumerate}[label=(\alph*)]
\item \emph{Mass charge of canonical fields.} The canonical fields
of (G4) satisfy
\begin{equation}\label{eq:psi_charge}
U(\theta)\,\hat{\psi}(f)\,U(\theta)^{-1}
= \mathrm{e}^{-\mathrm{i}\theta m_0}\,\hat{\psi}(f),
\quad
U(\theta)\,\hat{\psi}^\dagger(f)\,U(\theta)^{-1}
= \mathrm{e}^{+\mathrm{i}\theta m_0}\,\hat{\psi}^\dagger(f),
\end{equation}
for all $\theta \in \mathbb{R}$ and all $f \in
C_c^\infty(\mathcal{M}_G)$.

\item \emph{Bounded-below mass spectrum.} The Bargmann mass
operator $\hat{M}$ has spectrum bounded below; let
$M_{\min} := \inf\sigma(\hat{M}) \in \mathbb{R}$.

\item \emph{Vacuum at the spectral minimum.} The eigenvalue $M_0$
from Lemma~\ref{lem:omega_bargmann_eigenvector} satisfies
$M_0 = M_{\min}$. (Equivalently, since $M_0 \in \sigma_p(\hat{M})$,
this clause asserts $M_{\min} \in \sigma_p(\hat{M})$ and identifies
$\Omega$ as a corresponding eigenvector. For Fock representations
of (G3) this is automatic, since $\sigma(\hat{M})$ is purely
discrete; for representations with continuous mass spectrum the
clause excludes the case in which $\sigma_c(\hat{M})$ reaches
$M_{\min}$ but no eigenvector sits there.)

\item \emph{Existence of time-zero fields.} For every $g \in
C_c^\infty(\mathbb{R}^3)$, the time-zero limit
$\hat{\psi}_0(g) = \lim_{\epsilon \to 0^+}
\hat{\psi}(\chi_\epsilon \otimes g)$ exists as a densely-defined
operator on a common domain $\mathcal{D} \subset \mathcal{H}$
containing $\Omega$ that is stable under both $\hat{H}$ and the
time-zero field operators $\hat{\psi}_0(g),
\hat{\psi}_0^\dagger(g)$ for all $g \in C_c^\infty(\mathbb{R}^3)$,
with the limit independent of the approximating sequence. The
convergence is in the locally convex topology on operators on
$\mathcal{D}$ generated by the seminorms $A \mapsto \|A\Psi\|$ for
$\Psi \in \mathcal{D}$, with the operators
$\hat{\psi}(\chi_\epsilon \otimes g)$ uniformly bounded on
$\mathcal{D}$-bounded sets as $\epsilon \to 0^+$, so that finite
products are jointly continuous on $\mathcal{D}$.\footnote{The
joint-continuity-of-products clause is automatic in Wightman-style
realizations (where the time-zero fields are operator-valued
distributions on a common Schwartz-like domain) and in the Fock
case (G7), where the limit converges in operator norm on each
finite-particle subspace of the algebraic Fock domain. It is
needed for the propagation step in the proof of
Proposition~\ref{prop:non_fock_obstruction_no_c} below.}
Equivalently, the existence hypothesis of
Lemma~\ref{lem:time_zero_4d} holds, with $\mathcal{D}$ further
stable under the field algebra and the limit topology supporting
joint continuity of products.
\end{enumerate}
\end{itemize}

(G7) implies (G7$^*$) with $M_{\min}=0$: in the Fock
representation, the Bargmann decomposition is the particle-number
decomposition with $M = N m_0$; the canonical fields are smeared
annihilation/creation operators with the charge structure
\eqref{eq:psi_charge}; the no-particle vacuum sits at $M_0
= 0 = M_{\min}$; and the time-zero fields $\hat{\psi}_0(g)$ exist
on the algebraic Fock domain, which is automatically stable under
$\hat{H}$ and under the field operators
$\hat{\psi}_0(g), \hat{\psi}_0^\dagger(g)$. The converse direction is more
delicate. Clauses (G7$^*$)(a) and (G7$^*$)(c) jointly force a
tower-of-mass-values structure $\sigma(\hat{M}) \subseteq
\{M_{\min} + km_0 : k \in \mathbb{Z}_{\geq 0}\}$ that is itself
Fock-like, and (G7$^*$) therefore is not as wide a generalization
as a hypothesis-free ``non-Fock extension'' would suggest. What it
strictly relaxes beyond (G7) is the requirement that the
single-particle Hilbert space be specifically $L^2(\mathcal{C}_t)$
and that the full Hilbert space be the symmetric Fock space
$\mathcal{F}_s$ over it: multi-component single-particle spaces
(carrying spin or internal symmetries), and Bargmann-graded direct
sums or direct integrals deviating from the symmetric Fock
construction between sectors, are admissible under (G7$^*$) but
not under (G7). Clause (G7$^*$)(d), the existence of time-zero
fields, is also non-trivial: it requires that the field
$\hat{\psi}(f)$ be sufficiently regular as $f$ approaches a
delta-supported time-slice. This is automatic in Wightman-style
realizations, holds for all published interacting Galilean QFTs,
and is supplied by Fock construction under (G7); for general
representations of (G1)--(G6) it is an additional hypothesis with
no proof from the other axioms. Whether non-Fock representations
of the Galilean canonical commutation relations satisfying
(G1)--(G6) and (G7$^*$) nontrivially---in the sense of being
unitarily inequivalent to the standard Fock representation---are
realized by Galilean quantum field theories of independent interest
is open. The purpose of (G7$^*$) is therefore primarily structural:
it isolates the spectral and regularity content of (G7) that the
obstruction proof actually requires, separating it from the broader
Fock-construction apparatus.

\begin{proposition}[Extended obstruction]
\label{prop:non_fock_obstruction}
Axioms (G1)--(G6) together with (G7$^*$) and the Reeh--Schlieder
property of Definition~\ref{def:reeh_schlieder} are inconsistent.
\end{proposition}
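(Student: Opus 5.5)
The paper notes that only Step~1 of the proof of Theorem~\ref{thm:obstruction} uses (G7). The plan is therefore to re-derive Step~1, namely $\hat{\psi}(f)\Omega = 0$ for all $f \in C_c^\infty(\mathcal{M}_G)$, from (G7$^*$) alone. Step~2 and the Combination are then reused verbatim: locality of $\hat{\psi}(f)$ by (G4), the separating half of Reeh--Schlieder forcing $\hat{\psi}(f)=0$, Lemma~\ref{lem:time_zero_4d} ((ii)$\Rightarrow$(i)) giving $\hat{\psi}_0(g)=0$, and a contradiction with the c-number CCR \eqref{eq:CCR} at $g_1=g_2=g\neq 0$. The only change is that the CCR is now evaluated on the domain $\mathcal{D}$ of (G7$^*$)(d) rather than on the algebraic Fock domain.

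For the new Step~1, I will run a mass-charge lowering argument at the level of the time-zero fields, then lift to 4D via Lemma~\ref{lem:time_zero_4d} ((i)$\Rightarrow$(ii)). By Lemma~\ref{lem:omega_bargmann_eigenvector}, $U(\theta)\Omega = \mathrm{e}^{\mathrm{i}\theta M_0}\Omega$. Taking time-zero limits in \eqref{eq:psi_charge}, which is justified because $U(\theta)$ is a bounded unitary and the limit in (G7$^*$)(d) holds strongly on $\mathcal{D}$, gives
\[
U(\theta)\hat{\psi}_0(g)U(\theta)^{-1} = \mathrm{e}^{-\mathrm{i}\theta m_0}\hat{\psi}_0(g).
\]
Here I must check that $U(\theta)$ preserves $\mathcal{D}$, or at least that the identity holds on $\Omega$; I will just use it applied to $\Omega$. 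From this,
\[
U(\theta)\hat{\psi}_0(g)\Omega = \mathrm{e}^{\mathrm{i}\theta(M_0 - m_0)}\hat{\psi}_0(g)\Omega,
\]
so $\hat{\psi}_0(g)\Omega$ is either zero or an eigenvector of $\hat{M}$ with eigenvalue $M_0 - m_0$. By (G7$^*$)(c), $M_0 = M_{\min}$, so $M_0 - m_0 < M_{\min}$ because $m_0>0$. But (G7$^*$)(b) says $\sigma(\hat{M})\subset[M_{\min},\infty)$, and eigenvalues lie in the spectrum, so no such eigenvector exists. Hence $\hat{\psi}_0(g)\Omega = 0$ for all $g$, which is condition~(i) of Lemma~\ref{lem:time_zero_4d}. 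Its existence hypothesis is supplied by (G7$^*$)(d), so the lemma yields \eqref{eq:psi_kills_omega}.

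One could bypass the time-zero step by applying the charge argument directly to $\hat{\psi}(f)\Omega$, provided $\Omega$ lies in the domain of $\hat{\psi}(f)$. I will in fact do that: it is cleaner, and it makes (G7$^*$)(d) needed only for the CCR contradiction and the (ii)$\Rightarrow$(i) transfer.

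The main obstacle is domain bookkeeping rather than algebra. First, to conclude $\hat{\psi}(f)=0$ from separating, one needs the footnote mechanism: work with the partial isometry and spectral projections in the polar decomposition of the closure of $\hat{\psi}(f)$. These are elements of $\mathcal{F}(\mathcal{O}_f)$, and $\hat{\psi}(f)\Omega=0$ makes them annihilate $\Omega$. Second, the vanishing of $\hat{\psi}_0(g)$ must be established as an operator on $\mathcal{D}$. Then both $\hat{\psi}_0(g)\hat{\psi}_0^\dagger(g)$ and $\hat{\psi}_0^\dagger(g)\hat{\psi}_0(g)$ vanish on $\mathcal{D}$; the first term requires the $\mathcal{D}$-stability under $\hat{\psi}_0^\dagger(g)$ from (G7$^*$)(d). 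Meanwhile the right side is $\|g\|^2\mathbb{1}\neq0$ on the dense set $\mathcal{D}$, which is nonempty since it contains $\Omega$. This contradiction completes the argument.
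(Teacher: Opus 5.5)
Your proposal is correct and matches the paper's proof. The paper also replaces Step~1 by applying the charge relation (G7$^*$)(a) together with Lemma~\ref{lem:omega_bargmann_eigenvector} directly to $\hat{\psi}(f)\Omega$, which places it in the $\hat{M}$-eigenspace at $M_0 - m_0 < M_{\min}$, hence zero; it then reuses Step~2 and the Combination, with Lemma~\ref{lem:time_zero_4d} supplied by (G7$^*$)(d). Your added domain remarks (polar decomposition for the separating step, and $\mathcal{D}$-stability under $\hat{\psi}_0^\dagger(g)$ for the CCR contradiction) spell out what the paper leaves to its footnote and to clause (G7$^*$)(d).
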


\begin{proof}
The proof of Theorem~\ref{thm:obstruction} carries over with
Step~1 replaced by the following Bargmann-spectral argument;
Step~2 and the Combination are unchanged.

By (G7$^*$)(a), $U(\theta)\hat{\psi}(f) = \mathrm{e}^{-\mathrm{i}\theta m_0}\,
\hat{\psi}(f)\,U(\theta)$. Combined with
Lemma~\ref{lem:omega_bargmann_eigenvector}:
\begin{equation}\label{eq:psi_omega_eigenvector}
U(\theta)\,\hat{\psi}(f)\Omega
= \mathrm{e}^{-\mathrm{i}\theta m_0}\,\hat{\psi}(f)\,U(\theta)\Omega
= \mathrm{e}^{\mathrm{i}\theta(M_0 - m_0)}\,\hat{\psi}(f)\Omega.
\end{equation}
Hence $\hat{\psi}(f)\Omega$ is a $U(\theta)$-eigenvector at
eigenvalue $M_0 - m_0$. By Stone's theorem applied to the
self-adjoint generator $\hat{M}$,
\begin{equation}\label{eq:psi_omega_in_eigenspace}
\hat{\psi}(f)\Omega \in
\ker\bigl(\hat{M} - (M_0 - m_0)\,\mathbb{1}\bigr).
\end{equation}
By (G7$^*$)(b)--(c), $M_0 = M_{\min} = \inf\sigma(\hat{M})$, and
since $m_0 > 0$ by hypothesis, $M_0 - m_0 < M_{\min}$ lies
in the resolvent set of $\hat{M}$. The eigenspace in
\eqref{eq:psi_omega_in_eigenspace} is therefore trivial:
\begin{equation}\label{eq:psi_kills_omega_extended}
\hat{\psi}(f)\Omega = 0
\qquad
\text{for every } f \in C_c^\infty(\mathcal{M}_G),
\end{equation}
recovering the conclusion of Step~1 of
Theorem~\ref{thm:obstruction}. Step~2
($\hat{\psi}(f) \in \mathcal{F}(\mathcal{O})$ from (G4)) is
unchanged. The Combination (separating + Step~1 + Step~2 +
equal-time CCR) likewise carries over verbatim, with the
time-zero/4D correspondence supplied by
Lemma~\ref{lem:time_zero_4d}, whose existence hypothesis is
furnished by (G7$^*$)(d). The same contradiction with the c-number
CCR \eqref{eq:CCR} follows.
\end{proof}

Proposition~\ref{prop:non_fock_obstruction} can be sharpened by
removing the vacuum-at-spectral-minimum clause (G7$^*$)(c). The
sharpening uses a different mechanism in place of the spectral
argument: a Bose-CCR algebraic induction that derives a contradiction
from a high-power vanishing of $\hat{\psi}_0(g)$.

\begin{proposition}[Extended obstruction without (G7$^*$)(c)]
\label{prop:non_fock_obstruction_no_c}
Axioms (G1)--(G6) together with (G7$^*$)(a), (G7$^*$)(b), (G7$^*$)(d),
and the Reeh--Schlieder property of
Definition~\ref{def:reeh_schlieder} are inconsistent.
\end{proposition}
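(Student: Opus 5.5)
The plan is to replace the spectral argument of Proposition~\ref{prop:non_fock_obstruction}, which needed $M_0 = M_{\min}$, by an argument that uses only the lower bound on $\sigma(\hat{M})$ together with the Bose CCR. Step~2 and the Combination will then be reused verbatim. The heart of the argument is to show that Reeh--Schlieder, via the separating property, still forces $\hat{\psi}(f)\Omega = 0$, and hence $\hat{\psi}(f) = 0$, without knowing where $\Omega$ sits in the mass spectrum.

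First, fix $g \neq 0$ in $C_c^\infty(\mathbb{R}^3)$. The charge relation (G7$^*$)(a) passes to the time-zero limit by (G7$^*$)(d), since $U(\theta)$ is bounded and the limit is taken in the strong topology on $\mathcal{D}$. Repeating the computation \eqref{eq:psi_omega_eigenvector} $k$ times gives
\[
\hat{\psi}_0(g)^k\Omega \in \ker\bigl(\hat{M} - (M_0 - k m_0)\mathbb{1}\bigr).
\]
This is well defined because $\mathcal{D}$ is stable under the time-zero fields. By (G7$^*$)(b), once $k > (M_0 - M_{\min})/m_0$ the eigenvalue $M_0 - k m_0$ lies below $\inf\sigma(\hat{M})$. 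Hence $\hat{\psi}_0(g)^{k}\Omega = 0$ for all sufficiently large $k$. Let $n \geq 1$ be the minimal power with $\hat{\psi}_0(g)^n\Omega = 0$. If $n = 1$ we are in the situation of Step~1, and the argument of Theorem~\ref{thm:obstruction} closes via Lemma~\ref{lem:time_zero_4d}.

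Second, comes the algebraic descent. Suppose $n \geq 2$. From the CCR \eqref{eq:CCR} on $\mathcal{D}$,
\[
[\hat{\psi}_0(g)^n, \hat{\psi}_0^\dagger(g)] = n\|g\|^2\hat{\psi}_0(g)^{n-1}.
\]
We need a vector with a nonzero lower power. Reeh--Schlieder enters here through the 4D smeared power. Set $\hat{\psi}(f_\epsilon)$ with $f_\epsilon = \chi_\epsilon\otimes g$. Then $\hat{\psi}(f_\epsilon)^n$ is affiliated to a local algebra $\mathcal{F}(\mathcal{O})$, with $\mathcal{O}$ a fixed bounded neighbourhood of $\{0\}\times\mathrm{supp}\,g$. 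The approach is to show that $\hat{\psi}(f_\epsilon)^n\Omega \to \hat{\psi}_0(g)^n\Omega = 0$, which uses the joint continuity of products in (G7$^*$)(d). We then want exact vanishing for a suitable 4D element.

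The cleaner route I would try first is to work entirely with 4D fields. The charge argument applies equally to $\hat{\psi}(f)^k\Omega$ for any $f$, so $\hat{\psi}(f)^k\Omega = 0$ for $k \geq k_0$, uniformly in $f$. The powers $\hat{\psi}(f)^{k}$ lie (affiliated) in $\mathcal{F}(\mathcal{O}_f)$. Via the spectral-projection footnote to (G4), separating then gives $\hat{\psi}(f)^{k_0} = 0$ as an operator for every $f$. Passing to the time-zero limit by (G7$^*$)(d) gives $\hat{\psi}_0(g)^{k_0} = 0$ on $\mathcal{D}$. The descent then runs: if $\hat{\psi}_0(g)^n = 0$ on $\mathcal{D}$ with $n \geq 1$, commuting with $\hat{\psi}_0^\dagger(g)$ (legitimate since $\mathcal{D}$ is field-stable) gives $n\|g\|^2\hat{\psi}_0(g)^{n-1} = 0$. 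Inducting down yields $\hat{\psi}_0(g)^0 = \mathbb{1} = 0$, which contradicts $\mathcal{H} \neq \{0\}$.

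The main obstacle I expect is the separating step applied to the unbounded power $\hat{\psi}(f)^{k_0}$. We must justify that its spectral projections, or its polar-decomposition pieces, lie in $\mathcal{F}(\mathcal{O}_f)$ and annihilate $\Omega$, so that separating forces it to vanish on $\mathcal{D}$. I would handle this exactly as in the footnote to (G4): use the spectral projections of $|\hat{\psi}(f)^{k_0}|$ on $(\delta,\infty)$, each of which annihilates $\Omega$. A secondary technical point is interchanging the $\epsilon \to 0^+$ limit with $k_0$-fold products, which is precisely what the joint-continuity clause of (G7$^*$)(d) was included to permit.
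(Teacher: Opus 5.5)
Your final (``cleaner'') route is correct, and its core matches the paper's Steps~B and~C. A high power of $\hat{\psi}(f)$ sends $\Omega$ to an eigenspace of $\hat{M}$ strictly below $\inf\sigma(\hat{M})$. Separating then upgrades $\hat{\psi}(f)^{k}\Omega=0$ to $\hat{\psi}(f)^{k}=0$. Joint continuity from (G7$^*$)(d) gives $\hat{\psi}_0(g)^{k}=0$ on $\mathcal{D}$, and the Bose-CCR descent ends at $\mathbb{1}=0$.

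The difference is that the paper first runs a Step~A. It uses the \emph{cyclic} half of Reeh--Schlieder and an Arveson-spectrum grading of $\mathcal{F}(\mathcal{O})$ by $m_0\mathbb{Z}$ to show $\sigma(\hat{M})\subseteq M_0+m_0\mathbb{Z}$. Hence $M_{\min}=M_0-Km_0$ with $K\in\mathbb{Z}_{\geq 0}$, and only then does it take the exponent $K+1$. You bypass this by noting that any integer $k_0>(M_0-M_{\min})/m_0$ already puts $M_0-k_0m_0$ in the resolvent set, using only $m_0>0$ and (G7$^*$)(b). This is a genuine simplification: only the separating half of Reeh--Schlieder is needed, and the graded-decomposition machinery, whose details the paper leaves implicit, disappears. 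Step~A buys only extra structure, namely the lattice form of $\sigma(\hat{M})$ and the sharp exponent $K+1$, which the inconsistency does not need.

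Two smaller points. You can simply drop your exploratory first route (minimal $n$ with $\hat{\psi}_0(g)^n\Omega=0$): vanishing on the vacuum does not descend through the CCR until separating has promoted it to an operator identity. On the unbounded-power issue, your spectral-projection treatment is at least as careful as the paper's. Both tacitly need the closure of $\hat{\psi}(f)^{k_0}$ to be affiliated with $\mathcal{F}(\mathcal{O}_f)$.
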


\begin{proof}
The argument differs from the proof of
Proposition~\ref{prop:non_fock_obstruction} only when the vacuum's
Bargmann mass eigenvalue $M_0$ from
Lemma~\ref{lem:omega_bargmann_eigenvector} satisfies $M_0 > M_{\min}$,
i.e., when (G7$^*$)(c) fails. We treat this case directly; the case
$M_0 = M_{\min}$ is exactly Proposition~\ref{prop:non_fock_obstruction}.

\textit{Step A (lattice structure of the mass spectrum).} By cyclicity
in the Reeh--Schlieder hypothesis, $\overline{\mathcal{F}(\mathcal{O})\,
\Omega} = \mathcal{H}$ for any bounded open $\mathcal{O}$ with
non-empty interior and complement also having non-empty interior.
The central one-parameter unitary group $U(\theta) =
\mathrm{e}^{\mathrm{i}\theta\hat{M}}$ commutes with all $U(g)$ by
centrality of $\hat{M}$ (G3), and the corresponding gauge
automorphism $\alpha_\theta := \mathrm{Ad}\,U(\theta)$ leaves
$\mathcal{F}(\mathcal{O})$ invariant since the central $U(1)$ acts
trivially on spacetime. By (G7$^*$)(a), $\alpha_\theta$ multiplies
the canonical generators by $\mathrm{e}^{\mp\mathrm{i}\theta m_0}$;
extending by the $\sigma$-weakly continuous action of $\alpha_\theta$
on $\mathcal{F}(\mathcal{O})$, the Arveson spectrum of $\alpha$ on
$\mathcal{F}(\mathcal{O})$ is contained in $m_0\mathbb{Z}$, and
$\mathcal{F}(\mathcal{O})$ decomposes as a Bargmann-graded
$\sigma$-weakly closed sum
\begin{equation}\label{eq:F_graded}
\mathcal{F}(\mathcal{O}) = \overline{\bigoplus_{n \in \mathbb{Z}}
\mathcal{F}_n(\mathcal{O})}^{\,\sigma\text{-weak}},
\qquad
\mathcal{F}_n(\mathcal{O}) = \{A \in \mathcal{F}(\mathcal{O}) :
\alpha_\theta(A) = \mathrm{e}^{\mathrm{i}\theta nm_0}A\},
\end{equation}
with $\hat{\psi}(f) \in \mathcal{F}_{-1}(\mathcal{O})$ and
$\hat{\psi}^\dagger(f) \in \mathcal{F}_{+1}(\mathcal{O})$ by
(G7$^*$)(a). For $A \in \mathcal{F}_n(\mathcal{O})$ and $\Omega$ with
Bargmann eigenvalue $M_0$ (Lemma~\ref{lem:omega_bargmann_eigenvector}),
$U(\theta)\,A\Omega = \alpha_\theta(A)\,U(\theta)\Omega
= \mathrm{e}^{\mathrm{i}\theta(M_0 + nm_0)}\,A\Omega$, so by Stone's
theorem $A\Omega \in \mathcal{H}_{M_0 + nm_0}$. Hence
\begin{equation}\label{eq:cyclicity_lattice}
\mathcal{F}(\mathcal{O})\,\Omega
\subseteq \overline{\bigoplus_{k \in \mathbb{Z}}\,
\mathcal{H}_{M_0 + km_0}}.
\end{equation}
Cyclicity then forces $\overline{\bigoplus_k \mathcal{H}_{M_0 + km_0}}
= \mathcal{H}$, so the spectrum of $\hat{M}$ is contained in the
closed lattice $M_0 + m_0\mathbb{Z}$. Combined with (G7$^*$)(b)'s
lower bound, $\sigma(\hat{M}) \subseteq \{M_{\min} + jm_0 : j \in
\mathbb{Z}_{\geq 0}\}$ for some non-negative integer $K$ with
$M_{\min} = M_0 - K m_0$.

\begin{figure}[htb]
\centering
\begin{tikzpicture}[
  font=\small,
  >={Stealth[length=2mm]},
  spec/.style={circle, draw, fill=blue!15, inner sep=1.5pt, minimum size=4pt},
  void/.style={circle, draw, dashed, fill=white, inner sep=1.5pt, minimum size=4pt},
  vac/.style={circle, draw, fill=red!30, inner sep=1.5pt, minimum size=5.5pt}
]
%% Bounding box
\useasboundingbox (-1.3,-2.6) rectangle (10.8,2.5);

%% gap region (resolvent)
\fill[gray!12] (-0.7,-0.18) rectangle (1.85,0.18);

%% axis
\draw[->] (-1,0) -- (10.5,0) node[right] {$\sigma(\hat{M})$};

%% lattice points
\node[void] at (-0.4,0) {};
\node[void] at ( 0.8,0) {};
\node[spec] at ( 2.0,0) {};
\node[spec] at ( 3.2,0) {};
\node[spec] at ( 4.4,0) {};
\node[vac]  at ( 5.6,0) {};
\node[spec] at ( 6.8,0) {};
\node[spec] at ( 8.0,0) {};

%% labels close to axis (only key ones, well-spaced)
\node[font=\footnotesize, gray!50!black] at (0.55,-0.45) {resolvent};
\node[font=\footnotesize] at (2.0,-0.45) {$M_{\min}$};
\node[font=\footnotesize] at (5.6,-0.45) {$M_0\;(\Omega)$};

%% bracket: K m_0
\draw[decorate, decoration={brace, amplitude=4pt, mirror}, thick]
  (2.0,-1.10) -- (5.6,-1.10)
  node[midway, below=4pt, font=\footnotesize] {$K\,m_0$};

%% arc: high enough to clear all labels
\draw[->, thick, red!70!black]
  (5.6,0.18) to[out=120, in=60, looseness=1.6]
  node[midway, above=1pt, font=\footnotesize, red!70!black]
    {$\hat{\psi}(f)^{K+1}\Omega$ would land here (resolvent set)}
  (0.8,0.18);

%% legend at bottom
\node[font=\footnotesize, anchor=west] at (-1.0,-2.20) {Legend:};
\node[spec] at (1.05,-2.20) {};
\node[font=\footnotesize, anchor=west] at (1.20,-2.20)
  {sector in $\sigma(\hat{M})$};
\node[vac] at (4.30,-2.20) {};
\node[font=\footnotesize, anchor=west] at (4.45,-2.20)
  {vacuum};
\node[void] at (6.05,-2.20) {};
\node[font=\footnotesize, anchor=west] at (6.20,-2.20)
  {trivial};

\end{tikzpicture}
\caption{Mass spectrum after Step~A of
Proposition~\ref{prop:non_fock_obstruction_no_c}'s proof.
$\sigma(\hat{M})$ is constrained to the lattice $M_0 + m_0\mathbb{Z}$
by Bargmann grading and Reeh--Schlieder cyclicity, and bounded below
by $M_{\min}$ via (G7$^*$)(b); hence the lattice is truncated to
$\{M_{\min} + jm_0 : j \in \mathbb{Z}_{\geq 0}\}$ with $M_0 = M_{\min}
+ K m_0$ for some integer $K \geq 0$. Step~B shows that
$\hat{\psi}(f)^{K+1}\,\Omega$ would lie in the eigenspace of
$\hat{M}$ at $M_{\min} - m_0$, which falls in the resolvent set
(shaded). Hence that vector vanishes, and the separating property
of Reeh--Schlieder pushes the vanishing to the operator level,
producing $\hat{\psi}(f)^{K+1} = 0$. Step~C then descends down to
$\hat{\psi}_0(g) = 0$ via the Bose-CCR identity, contradicting the
canonical commutation relations of (G4).}
\label{fig:mass_lattice}
\end{figure}
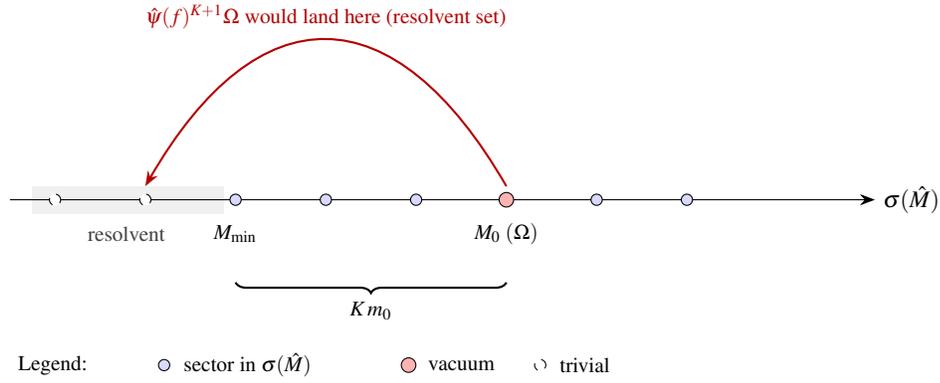

\textit{Step B (high-power vanishing).} For any $f \in
C_c^\infty(\mathcal{M}_G)$ with $\mathrm{supp}(f) \subseteq \mathcal{O}$,
the operator $\hat{\psi}(f)^{K+1}$ has Bargmann charge $-(K+1)m_0$
under (G7$^*$)(a), and
\begin{equation}\label{eq:high_power_omega}
\hat{\psi}(f)^{K+1}\Omega \in
\ker\bigl(\hat{M} - (M_0 - (K+1)m_0)\,\mathbb{1}\bigr)
\end{equation}
by the same Stone-theorem argument as in the proof of
Proposition~\ref{prop:non_fock_obstruction}. The eigenvalue
$M_0 - (K+1)m_0 = M_{\min} - m_0 < M_{\min}$ lies in the resolvent
set of $\hat{M}$ by (G7$^*$)(b), so the eigenspace in
\eqref{eq:high_power_omega} is trivial:
\begin{equation}\label{eq:psi_K_plus_1_omega_zero}
\hat{\psi}(f)^{K+1}\,\Omega = 0
\qquad
\text{for every } f \in C_c^\infty(\mathcal{M}_G)
\text{ with } \mathrm{supp}(f) \subseteq \mathcal{O}.
\end{equation}
Since $\hat{\psi}(f) \in \mathcal{F}(\mathcal{O})$ by (G4), the
product $\hat{\psi}(f)^{K+1}$ also lies in $\mathcal{F}(\mathcal{O})$
(as a strong limit of polynomials in field generators). Applying
the separating property of the Reeh--Schlieder hypothesis to
$\mathcal{F}(\mathcal{O})$:
\begin{equation}\label{eq:psi_K_plus_1_zero}
\hat{\psi}(f)^{K+1} = 0
\qquad \text{as an operator on } \mathcal{H}.
\end{equation}

\textit{Step C (CCR-arithmetic induction).} We propagate
\eqref{eq:psi_K_plus_1_zero} to operator-level vanishing of
$\hat{\psi}_0(g)^{K+1}$ directly via (G7$^*$)(d). For $g \in
C_c^\infty(\mathbb{R}^3)$ and $\chi_\epsilon$ as in (G7$^*$)(d), the
operator $\hat{\psi}(\chi_\epsilon \otimes g)$ has support in any
4D region $\mathcal{O}$ containing
$\mathrm{supp}(\chi_\epsilon) \times \mathrm{supp}(g)$ for
$\epsilon$ small enough; choosing $\mathcal{O}$ as in
\eqref{eq:psi_K_plus_1_zero} gives
$\hat{\psi}(\chi_\epsilon \otimes g)^{K+1} = 0$ on $\mathcal{D}$ for
every such $\epsilon$. By (G7$^*$)(d),
$\hat{\psi}(\chi_\epsilon \otimes g) \to \hat{\psi}_0(g)$ on
$\mathcal{D}$ in the topology supporting joint continuity of
products; iterating, $\hat{\psi}(\chi_\epsilon \otimes g)^{K+1} \to
\hat{\psi}_0(g)^{K+1}$ on $\mathcal{D}$. The left-hand side vanishes
identically along the sequence, so
\begin{equation}\label{eq:psi0_K_plus_1_zero}
\hat{\psi}_0(g)^{K+1} = 0
\qquad
\text{as an operator on } \mathcal{D},
\text{ for every } g \in C_c^\infty(\mathbb{R}^3).
\end{equation}
Fix any $g \neq 0$ and write $a := \hat{\psi}_0(g)$,
$a^\dagger := \hat{\psi}_0^\dagger(g)$, $c := \|g\|_{L^2}^2 > 0$. The
equal-time CCR \eqref{eq:CCR} of (G4) gives $[a, a^\dagger] = c\,
\mathbb{1}$ on $\mathcal{D}$. The standard creation-on-power identity
\begin{equation}\label{eq:bose_power_commutator}
[a^{n+1}, a^\dagger] = (n+1)\, c\, a^n
\qquad
\text{on } \mathcal{D}
\end{equation}
follows by induction on $n \geq 0$: $[a, a^\dagger] = c$ is the base
case, and $[a^{n+1}, a^\dagger] = a [a^n, a^\dagger] + [a, a^\dagger]
a^n = a \cdot n c\, a^{n-1} + c \cdot a^n = (n+1) c\, a^n$ closes the
induction. The identity \eqref{eq:bose_power_commutator} is
well-defined on $\mathcal{D}$ because (G7$^*$)(d) supplies stability
of $\mathcal{D}$ under both $a$ and $a^\dagger$, so all iterated
products $a^j$, $a^\dagger a^j$, $a^j a^\dagger$ leave $\mathcal{D}$
invariant.

We claim that \eqref{eq:psi0_K_plus_1_zero} forces $a = 0$ as an
operator on $\mathcal{D}$, by descent on the power. Suppose
$a^{n+1} = 0$ on $\mathcal{D}$ for some $n \geq 0$. Then
\eqref{eq:bose_power_commutator} gives
\begin{equation}\label{eq:descent_step}
0 = a^{n+1} a^\dagger - a^\dagger a^{n+1}
= [a^{n+1}, a^\dagger] = (n+1)\, c\, a^n
\qquad \text{on } \mathcal{D},
\end{equation}
where the first equality uses $a^{n+1} = 0$ on both sides of the
commutator (each side requires field-algebra-stability of
$\mathcal{D}$ to be defined). Since $(n+1) c \neq 0$, this gives
$a^n = 0$ on $\mathcal{D}$. By descent from $n = K$ down to
$n = 0$, we obtain $a^0 = \mathbb{1} = 0$ on $\mathcal{D}$, which is
absurd since $\mathcal{D}$ is non-trivial (it contains $\Omega$). The
descent must therefore stop one step earlier: $a = 0$ on $\mathcal{D}$
gives $[a, a^\dagger] = 0$, contradicting $[a, a^\dagger] = c\,
\mathbb{1}$ with $c > 0$.

The combined hypotheses (G1)--(G6), (G7$^*$)(a), (G7$^*$)(b),
(G7$^*$)(d), and Reeh--Schlieder are therefore inconsistent.
\end{proof}

\begin{remark}[What Proposition~\ref{prop:non_fock_obstruction_no_c}
adds]
\label{rem:prop2_content}
Proposition~\ref{prop:non_fock_obstruction_no_c} drops the vacuum-at
-spectral-minimum clause (G7$^*$)(c) of
Proposition~\ref{prop:non_fock_obstruction}. The mechanism replacing
the spectral argument is the Bose-CCR descent
\eqref{eq:bose_power_commutator}--\eqref{eq:descent_step}: a
high-power vanishing $a^{K+1} = 0$ propagates down to $a = 0$ by
algebraic induction, which contradicts the c-number CCR. The key
input is the field-algebra-stability of the time-zero domain
$\mathcal{D}$, strengthened in (G7$^*$)(d) over the bare
$\hat{H}$-stability used in
Proposition~\ref{prop:non_fock_obstruction}; this stability is
automatic in (G7) by the algebraic Fock construction. The
proposition therefore covers \emph{any} representation of (G1)--(G6)
satisfying (G7$^*$)(a)+(b)+(d), regardless of whether the
translation-invariant vacuum sits at the bottom of the Bargmann
mass spectrum.
\end{remark}

The hypothesis (G7$^*$)(b) is itself derivable from the rest of the
Galilean Haag--Kastler structure. We establish this via a boost
identity for the Hamiltonian on each Bargmann sector, combined with
the positive-energy spectrum condition (G5).

\begin{lemma}[Boost identity for the Hamiltonian]
\label{lem:boost_identity}
Let $U(\mathbf{v}) := \mathrm{e}^{-\mathrm{i}\mathbf{v}\cdot
\hat{\mathbf{K}}}$ for $\mathbf{v} \in \mathbb{R}^3$ denote the
Galilean boost unitary, with $\hat{\mathbf{K}}$ the boost generators
of the Bargmann central extension $\widetilde{G}$. On the mass-$M$
sector $\mathcal{H}_M$,
\begin{equation}
\label{eq:boost_H_identity}
U(\mathbf{v})^{-1}\,\hat{H}\,U(\mathbf{v})
= \hat{H} - \mathbf{v}\cdot\hat{\mathbf{P}}
+ \tfrac{M}{2}|\mathbf{v}|^2.
\end{equation}
\end{lemma}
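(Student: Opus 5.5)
The identity \eqref{eq:boost_H_identity} follows from the Lie-algebra relations of the Bargmann extension alone, exponentiated with a terminating Hadamard (Baker--Campbell--Hausdorff) series. All manipulations will be carried out on the G\aa rding domain $\mathcal{D}_M = \mathcal{D}\cap\mathcal{H}_M$ of Remark~\ref{rem:garding_domain}. This domain is invariant under $\hat{H},\hat{\mathbf{P}},\hat{\mathbf{K}}$ and $U(\mathbf{v})$, and on it $\hat{M}$ acts as the scalar $M$ by centrality.

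\textbf{Step 1: fix the commutators.} In the sign conventions matching $U(\mathbf{v}) = \mathrm{e}^{-\mathrm{i}\mathbf{v}\cdot\hat{\mathbf{K}}}$, I use
\[
[\hat{K}_i,\hat{H}] = \mathrm{i}\hat{P}_i, \qquad [\hat{K}_i,\hat{P}_j] = \mathrm{i}\hat{M}\delta_{ij}, \qquad [\hat{K}_i,\hat{K}_j]=0, \qquad [\hat{M},\cdot\,]=0 .
\]
These are the relations recalled in the Introduction: $[\hat{K}_i,\hat{P}_j]$ contains $\hat{M}$ and not $\hat{H}$. I will verify on the standard one-particle realization, $\hat{\mathbf{K}} = M\hat{\mathbf{x}} - t\hat{\mathbf{P}}$ at $t=0$ with $\hat{H}=\hat{\mathbf{P}}^2/2M$, that these signs yield $-\mathbf{v}\cdot\hat{\mathbf{P}}$ rather than $+\mathbf{v}\cdot\hat{\mathbf{P}}$. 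If the opposite convention is in force, I will reparametrize $\mathbf{v}\to-\mathbf{v}$.

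\textbf{Step 2: Hadamard expansion.} Set $X := \mathrm{i}\,\mathbf{v}\cdot\hat{\mathbf{K}}$, so that $U(\mathbf{v})^{-1}\hat{H}U(\mathbf{v}) = \mathrm{e}^{X}\hat{H}\mathrm{e}^{-X} = \sum_n \frac{1}{n!}\mathrm{ad}_X^n(\hat{H})$. The series truncates:
\begin{itemize}
\item $\mathrm{ad}_X\hat{H} = \mathrm{i}\,v_i[\hat{K}_i,\hat{H}] = -\mathbf{v}\cdot\hat{\mathbf{P}}$;
\item $\mathrm{ad}_X^2\hat{H} = -\mathrm{i}\,v_iv_j[\hat{K}_i,\hat{P}_j] = |\mathbf{v}|^2\hat{M}$;
\item $\mathrm{ad}_X^3\hat{H}=0$, because $\hat{M}$ is central.
\end{itemize}
Summing, $\hat{H} - \mathbf{v}\cdot\hat{\mathbf{P}} + \tfrac12|\mathbf{v}|^2\hat{M}$, and setting $\hat{M}=M$ on $\mathcal{H}_M$ gives \eqref{eq:boost_H_identity}.

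\textbf{Step 3: justify the formal series (main obstacle).} The generators are unbounded, so I will avoid power series in $\mathbf{v}$ and use an ODE argument in $s\in[0,1]$ along the ray $s\mathbf{v}$. For $\Phi,\Psi\in\mathcal{D}_M$, define
\[
F(s) := \langle \Phi, U(s\mathbf{v})^{-1}\hat{H}U(s\mathbf{v})\Psi\rangle .
\]
Invariance of $\mathcal{D}_M$ under $U(s\mathbf{v})$ and the generators makes $F$ differentiable, with
\[
F'(s) = \langle\Phi, U(s\mathbf{v})^{-1}[\mathrm{i}\,\mathbf{v}\cdot\hat{\mathbf{K}},\hat{H}]U(s\mathbf{v})\Psi\rangle = -\langle\Phi, U(s\mathbf{v})^{-1}\,\mathbf{v}\cdot\hat{\mathbf{P}}\,U(s\mathbf{v})\Psi\rangle .
\]
The same argument applied to $\hat{\mathbf{P}}$ gives $U(s\mathbf{v})^{-1}\hat{\mathbf{P}}U(s\mathbf{v}) = \hat{\mathbf{P}} - sM\mathbf{v}$ on $\mathcal{D}_M$; its derivative is constant because $M$ is central. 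Substituting yields $F'(s) = -\langle\Phi,\mathbf{v}\cdot\hat{\mathbf{P}}\Psi\rangle + sM|\mathbf{v}|^2\langle\Phi,\Psi\rangle$, and integrating from $0$ to $1$ gives \eqref{eq:boost_H_identity} as a sesquilinear-form identity on $\mathcal{D}_M$.

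\textbf{Step 4: extend to $\mathcal{H}_M$.} Both sides are self-adjoint operators on $\mathcal{H}_M$: the left as a unitary conjugate of $\hat{H}$, the right since $\hat{H}$ and $\mathbf{v}\cdot\hat{\mathbf{P}}$ commute and are essentially self-adjoint on the invariant core $\mathcal{D}_M$. They agree on a common core, so they coincide as self-adjoint operators. This last step needs $\mathcal{D}_M$ to be a core for $\hat{H}-\mathbf{v}\cdot\hat{\mathbf{P}}$; I will get this from the G\aa rding/analytic-vector property (Nelson's theorem) rather than prove it by hand.
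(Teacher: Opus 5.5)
Your proposal is correct and follows the paper's route: the same Bargmann commutators $[\hat K_i,\hat H]=\mathrm{i}\hat P_i$ and $[\hat K_i,\hat P_j]=\mathrm{i}M\delta_{ij}$, with the sign checked against the one-particle realization, and the same Hadamard series in $X=\mathrm{i}\mathbf{v}\cdot\hat{\mathbf{K}}$, which terminates at second order because $\hat M$ is central. Your Step~3 ODE argument along $s\mathbf{v}$ uses $U(s\mathbf{v})^{-1}\hat{\mathbf{P}}U(s\mathbf{v})=\hat{\mathbf{P}}-sM\mathbf{v}$, which the paper also invokes in its Casimir cross-check, and it rigorously justifies what the paper only asserts by reading the formal series on the G\aa rding domain and extending by closure.
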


\begin{proof}
The Hadamard series and iterated commutators below are read on the
common dense invariant domain $\mathcal{D}$ of
Remark~\ref{rem:garding_domain}, on which all generators of the
Bargmann central extension $\widetilde{G}$ are essentially self-
adjoint and the iterates are well-defined; the resulting identity
extends from $\mathcal{D}_M$ to $\mathcal{H}_M$ by closure.

The Bargmann commutators on $\mathcal{H}_M$ are $[\hat{K}_i,
\hat{P}_j] = \mathrm{i}M\delta_{ij}$ (from (G3)) and $[\hat{K}_i,
\hat{H}] = \mathrm{i}\hat{P}_i$ (verified from the single-particle
realization $\hat{K}_i = m_0\hat{X}_i$, $\hat{H} = \hat{P}^2/(2m_0)$
at $t = 0$:
$[m_0\hat{X}_i, \hat{P}^2/(2m_0)] = \tfrac{1}{2}[\hat{X}_i,
\hat{P}_j\hat{P}_j] = \tfrac{1}{2}(2\mathrm{i}\delta_{ij}\hat{P}_j) =
\mathrm{i}\hat{P}_i$). The Hadamard formula
$\mathrm{e}^X Y \mathrm{e}^{-X} = Y + [X, Y] + \tfrac{1}{2!}[X,[X,Y]]
+ \ldots$ applied with $X = +\mathrm{i}\mathbf{v}\cdot
\hat{\mathbf{K}}$ and $Y = \hat{H}$ gives:
\begin{align*}
[X, Y] &= \mathrm{i}v_j[\hat{K}_j, \hat{H}]
= \mathrm{i}v_j\cdot\mathrm{i}\hat{P}_j
= -\mathbf{v}\cdot\hat{\mathbf{P}},\\
[X, [X, Y]] &= -\mathrm{i}v_iv_j[\hat{K}_i, \hat{P}_j]
= -\mathrm{i}v_iv_j\cdot\mathrm{i}M\delta_{ij}
= M|\mathbf{v}|^2.
\end{align*}
The third-order term $[X, M|\mathbf{v}|^2\,\mathbb{1}] = 0$
terminates the series, yielding \eqref{eq:boost_H_identity}.

The identity admits two independent verifications. \emph{Single-
particle direct:} on the mass-$m_0$ sector, $U(\mathbf{v})|\mathbf{p}_0
\rangle = |\mathbf{p}_0 - m_0\mathbf{v}\rangle$, so $\langle
\mathbf{p}_0|U^{-1}\hat{H}U|\mathbf{p}_0\rangle = |\mathbf{p}_0 -
m_0\mathbf{v}|^2/(2m_0) = E_0 - \mathbf{p}_0\cdot\mathbf{v} +
(m_0/2)|\mathbf{v}|^2$, agreeing with \eqref{eq:boost_H_identity}.
\emph{Casimir:} $\hat{U} := \hat{H} - \hat{\mathbf{P}}^2/(2M)$
commutes with all generators ($\hat{H}, \hat{\mathbf{P}},
\hat{\mathbf{K}}, \hat{\mathbf{J}}, \hat{M}$), hence
$U^{-1}\hat{U}U = \hat{U}$. Combined with $U^{-1}\hat{P}_iU =
\hat{P}_i - Mv_i$ on $\mathcal{H}_M$ (computed analogously),
solving for $U^{-1}\hat{H}U$ recovers \eqref{eq:boost_H_identity}.
\end{proof}

\begin{lemma}[Boost-positivity of Bargmann mass]
\label{lem:boost_positivity}
Under (G3) and (G5)---and a fortiori under (G1)--(G6),
\begin{equation}\label{eq:M_nonnegative}
\sigma(\hat{M}) \subseteq [0, \infty).
\end{equation}
Equivalently, $\mathcal{H}_M = \{0\}$ for every $M < 0$.
\end{lemma}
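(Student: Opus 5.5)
The plan is to combine the boost identity of Lemma~\ref{lem:boost_identity} with the positive-energy condition (G5), sector by sector. Suppose, for contradiction, that $\mathcal{H}_M \neq \{0\}$ for some $M < 0$. By Remark~\ref{rem:garding_domain}, the subspace $\mathcal{D}_M = \mathcal{D}\cap\mathcal{H}_M$ is then dense in $\mathcal{H}_M$ and invariant under all generators and all $U(g)$. So we may pick a unit vector $\Phi \in \mathcal{D}_M$. Because $U(\mathbf{v})$ is a group element of $\widetilde{G}$, it commutes with $\hat{M}$, and $U(\mathbf{v})\Phi$ again lies in $\mathcal{D}_M$. The form $\langle \Phi, \hat{H}\Phi\rangle$ is finite, and so is $\langle \Phi, \hat{\mathbf{P}}\Phi\rangle$.

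\textbf{Key step.} Take the expectation of \eqref{eq:boost_H_identity} in $\Phi$:
\[
\langle U(\mathbf{v})\Phi, \hat{H}\, U(\mathbf{v})\Phi\rangle
= \langle\Phi,\hat{H}\Phi\rangle - \mathbf{v}\cdot\langle\Phi,\hat{\mathbf{P}}\Phi\rangle + \tfrac{M}{2}|\mathbf{v}|^2 .
\]
By (G5), $\sigma(\hat{H})\subset[0,\infty)$, so the left-hand side is non-negative for every $\mathbf{v}\in\mathbb{R}^3$. The right-hand side is a quadratic polynomial in $\mathbf{v}$ with leading coefficient $M/2<0$. Choosing $\mathbf{v} = \lambda \mathbf{e}$ for a fixed unit vector $\mathbf{e}$ and letting $\lambda\to\infty$ drives it to $-\infty$. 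This contradicts positivity, so $\mathcal{H}_M=\{0\}$ for every $M<0$.

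\textbf{Passing to the spectrum.} By (G3), $\mathcal{H}=\bigoplus_M\mathcal{H}_M$ with $\hat{M}$ acting as the scalar $M$ on each summand. Hence $\sigma(\hat{M})$ is contained in the closure of $\{M : \mathcal{H}_M\neq\{0\}\}\subseteq[0,\infty)$, which gives \eqref{eq:M_nonnegative}. Conversely, \eqref{eq:M_nonnegative} immediately gives the vanishing of the negative-mass sectors, so the two formulations are equivalent.

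\textbf{Main obstacle.} The delicate point is domain bookkeeping. We must make sure that $U(\mathbf{v})\Phi$ lies in the form domain of $\hat{H}$, so that (G5) can be applied to it, and that the operator identity \eqref{eq:boost_H_identity}, obtained via the Hadamard series, holds as an identity of quadratic forms on $\mathcal{D}_M$. I would settle this by invoking the invariance of the Gårding domain under $U(g)$, which puts $U(\mathbf{v})\Phi$ in $\mathcal{D}_M\subset\mathrm{dom}(\hat{H})$. I would also derive \eqref{eq:boost_H_identity} by differentiating $\mathbf{v}\mapsto\langle U(\mathbf{v})\Phi,\hat{H}U(\mathbf{v})\Phi\rangle$ twice, using the commutators $[\hat{K}_i,\hat{H}]=\mathrm{i}\hat{P}_i$ and $[\hat{K}_i,\hat{P}_j]=\mathrm{i}M\delta_{ij}$ on $\mathcal{D}$. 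This exhibits the expectation as an exact quadratic polynomial without needing convergence of the Hadamard series. A secondary subtlety arises if the direct sum in (G3) is really a direct integral. In that case I would replace $\mathcal{H}_M$ by the spectral subspace $E_{\hat{M}}((-\infty,-\epsilon])\mathcal{H}$. On that subspace the same computation gives a leading coefficient at most $-\epsilon/2$, which yields the same contradiction.
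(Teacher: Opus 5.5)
Your proposal is correct and follows the paper's argument. You take a unit vector in $\mathcal{D}_M$ for a putative $M<0$, boost it, and use Lemma~\ref{lem:boost_identity} to write the energy expectation as a quadratic in $\mathbf{v}$ with leading coefficient $M/2<0$, contradicting (G5). Your additions are sound refinements of points the paper leaves implicit: deriving the quadratic by differentiating the expectation twice rather than via the Hadamard series, spelling out the sector/spectrum equivalence, and covering a possible direct-integral decomposition through $E_{\hat{M}}((-\infty,-\epsilon])$.
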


\begin{proof}
By Remark~\ref{rem:garding_domain}, the strongly continuous unitary
representation of $\widetilde{G}$ admits a common dense invariant
domain $\mathcal{D} \subseteq \mathcal{H}$ for the generators,
preserved by every group element; the intersection $\mathcal{D}_M
:= \mathcal{D} \cap \mathcal{H}_M$ is dense in $\mathcal{H}_M$
whenever $\mathcal{H}_M \neq \{0\}$, and is invariant under all
generators and under $U(\mathbf{v})$ for every $\mathbf{v}$.

Suppose for contradiction that $\mathcal{H}_M \neq \{0\}$ for some
$M < 0$. Pick any $|\Psi\rangle \in \mathcal{D}_M$ with $\|\Psi\|=1$,
and set $E_0 := \langle\Psi|\hat{H}|\Psi\rangle$, $\mathbf{p}_0 :=
\langle\Psi|\hat{\mathbf{P}}|\Psi\rangle$ (well-defined on
$\mathcal{D}_M$). For $\mathbf{v} \in \mathbb{R}^3$, $|\Psi_{\mathbf
{v}}\rangle := U(\mathbf{v})|\Psi\rangle$ has $\|\Psi_{\mathbf{v}}\|
= 1$ and lies in $\mathcal{D}_M$ by invariance. By
Lemma~\ref{lem:boost_identity}:
\begin{align}\label{eq:boost_expectation_psi}
\langle\Psi_{\mathbf{v}}|\hat{H}|\Psi_{\mathbf{v}}\rangle
&= \langle\Psi|U(\mathbf{v})^{-1}\hat{H}\,U(\mathbf{v})|\Psi\rangle
\notag \\
&= \langle\Psi|\hat{H} - \mathbf{v}\cdot\hat{\mathbf{P}}
+ \tfrac{M}{2}|\mathbf{v}|^2|\Psi\rangle
= E_0 - \mathbf{v}\cdot\mathbf{p}_0 + \tfrac{M}{2}|\mathbf{v}|^2.
\end{align}
Axiom (G5) requires $\hat{H} \geq 0$ as a self-adjoint operator,
hence $\langle\Phi|\hat{H}|\Phi\rangle \geq 0$ on every normalized
$|\Phi\rangle$ in the domain. Applied to $|\Psi_{\mathbf{v}}\rangle$:
\begin{equation}\label{eq:boost_positivity_constraint}
E_0 - \mathbf{v}\cdot\mathbf{p}_0 + \tfrac{M}{2}|\mathbf{v}|^2 \geq 0
\qquad \text{for all } \mathbf{v} \in \mathbb{R}^3.
\end{equation}
The left-hand side is a quadratic in $\mathbf{v}$ with leading
coefficient $M/2$. Non-negativity for all $\mathbf{v}$ forces
$M/2 \geq 0$, contradicting $M < 0$. Hence $\mathcal{H}_M = \{0\}$
for every $M < 0$.
\end{proof}

\begin{remark}[The $M = 0$ and $M > 0$ cases]
\label{rem:M_cases}
For $M = 0$, \eqref{eq:boost_positivity_constraint} reduces to the
linear inequality $E_0 - \mathbf{v}\cdot\mathbf{p}_0 \geq 0$ for all
$\mathbf{v}$, which forces $\mathbf{p}_0 = 0$. By a polarization
argument applied to the off-diagonal of the Hermitian form
$(\Phi, \Psi) \mapsto \langle\Phi|\hat{P}_i|\Psi\rangle$ on
$\mathcal{H}_0 \cap \mathcal{D}$, this strengthens to
$\hat{P}_i = 0$ on $\mathcal{H}_0 \cap \mathcal{D}$, and hence on
all of $\mathcal{H}_0$ by density of $\mathcal{H}_0 \cap \mathcal{D}$
in $\mathcal{H}_0$ (Remark~\ref{rem:garding_domain}) and closedness
of the self-adjoint operator $\hat{P}_i$. By (G6)'s uniqueness
clause, the joint translation-invariant subspace is
one-dimensional, spanned by $\Omega$ if $\Omega \in \mathcal{H}_0$
(i.e., $M_0 = 0$) and trivial otherwise. For $M > 0$,
\eqref{eq:boost_positivity_constraint} has its minimum at
$\mathbf{v} = \mathbf{p}_0/M$ with value $E_0 - |\mathbf{p}_0|^2
/(2M)$, recovering the standard non-relativistic dispersion
inequality $E_0 \geq |\mathbf{p}_0|^2/(2M)$.
\end{remark}

Lemma~\ref{lem:boost_positivity} promotes (G7$^*$)(b) from a
hypothesis to a derived consequence of (G1)--(G6), giving the
strengthened theorem:

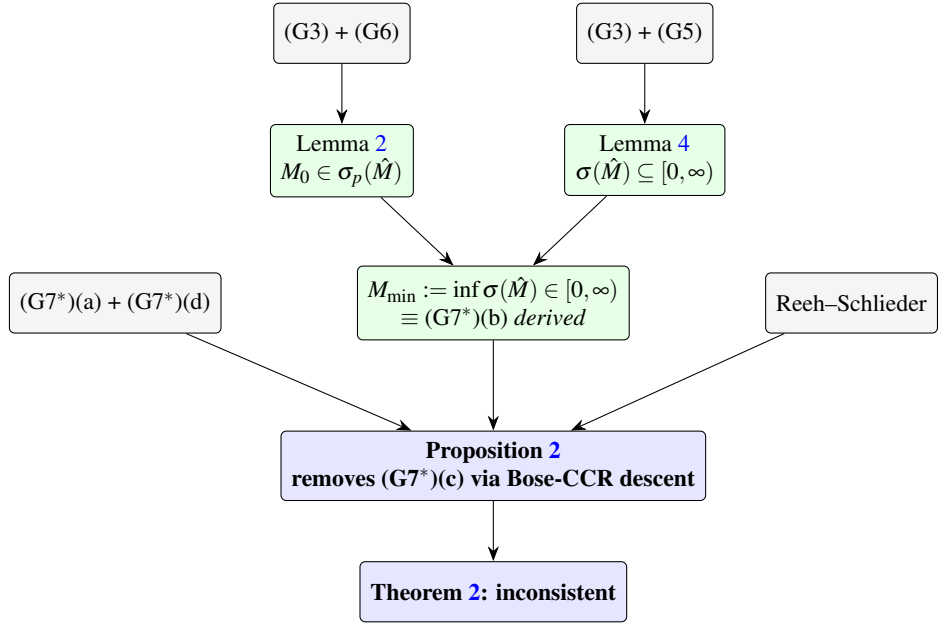
\begin{figure}[htb]
\centering
\begin{tikzpicture}[
  font=\small,
  node distance=8mm and 8mm,
  hyp/.style={draw, rounded corners=2pt, fill=gray!8, inner sep=4pt,
              minimum height=8mm, align=center},
  thm/.style={draw, rounded corners=2pt, fill=blue!10, inner sep=4pt,
              minimum height=8mm, align=center, font=\small\bfseries},
  derived/.style={draw, rounded corners=2pt, fill=green!10, inner sep=4pt,
                  minimum height=8mm, align=center},
  >={Stealth[length=2mm]},
  every label/.append style={font=\scriptsize}
]
\node[hyp] (g3g6) {(G3) + (G6)};
\node[hyp, right=22mm of g3g6] (g3g5) {(G3) + (G5)};

\node[derived, below=of g3g6] (lem1)
  {Lemma~\ref{lem:omega_bargmann_eigenvector}\\
   $M_0 \in \sigma_p(\hat{M})$};
\node[derived, below=of g3g5] (lem4)
  {Lemma~\ref{lem:boost_positivity}\\
   $\sigma(\hat{M}) \subseteq [0,\infty)$};

\node[derived, below=14mm of $(lem1)!0.5!(lem4)$] (g7b)
  {$M_{\min} := \inf\sigma(\hat{M}) \in [0,\infty)$\\
   $\equiv$ (G7$^*$)(b) \emph{derived}};

\node[hyp, left=18mm of g7b] (g7ad)
  {(G7$^*$)(a) + (G7$^*$)(d)};
\node[hyp, right=18mm of g7b] (rs)
  {Reeh--Schlieder};

\node[thm, below=12mm of g7b] (prop2)
  {Proposition~\ref{prop:non_fock_obstruction_no_c}\\
   removes (G7$^*$)(c) via Bose-CCR descent};

\node[thm, below=of prop2] (thm2)
  {Theorem~\ref{thm:strengthened_obstruction}: inconsistent};

\draw[->] (g3g6) -- (lem1);
\draw[->] (g3g5) -- (lem4);
\draw[->] (lem1) -- (g7b);
\draw[->] (lem4) -- (g7b);
\draw[->] (g7b) -- (prop2);
\draw[->] (g7ad) -- (prop2);
\draw[->] (rs) -- (prop2);
\draw[->] (prop2) -- (thm2);
\end{tikzpicture}
\caption{Logical structure of the strengthened obstruction. Grey
nodes are axiomatic hypotheses; green nodes are derived intermediate
results; blue nodes are the load-bearing theorem and proposition.
Lemma~\ref{lem:omega_bargmann_eigenvector} and
Lemma~\ref{lem:boost_positivity} together promote the
bounded-below mass spectrum of (G7$^*$)(b) from a hypothesis to a
consequence of (G1)--(G6); Proposition~\ref{prop:non_fock_obstruction_no_c}
removes (G7$^*$)(c) via Bose-CCR descent. The Reeh--Schlieder
property is the load-bearing premise that produces the
inconsistency.}
\label{fig:logical_structure}
\end{figure}

\begin{theorem}[Strengthened obstruction]
\label{thm:strengthened_obstruction}
Axioms (G1)--(G6) together with (G7$^*$)(a), (G7$^*$)(d), and the
Reeh--Schlieder property of Definition~\ref{def:reeh_schlieder} are
inconsistent.
\end{theorem}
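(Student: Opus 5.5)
The plan is to reduce Theorem~\ref{thm:strengthened_obstruction} to Proposition~\ref{prop:non_fock_obstruction_no_c}. That proposition already derives inconsistency from (G1)--(G6), (G7$^*$)(a), (G7$^*$)(b), (G7$^*$)(d) and Reeh--Schlieder. So it suffices to show that (G7$^*$)(b) is not an extra hypothesis but follows from (G1)--(G6). Once this is established, the hypothesis set of the theorem contains that of the proposition, and the conclusion is immediate.

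First, I invoke Lemma~\ref{lem:boost_positivity}, which needs only (G3) and (G5). It gives $\sigma(\hat{M}) \subseteq [0,\infty)$, so $\hat{M}$ is bounded below. Next, I must check that $M_{\min} := \inf\sigma(\hat{M})$ is a genuine real number, i.e.\ that $\sigma(\hat{M})$ is nonempty. This holds because $\mathcal{H} \neq \{0\}$: it contains the unit vector $\Omega$. In fact Lemma~\ref{lem:omega_bargmann_eigenvector} places $M_0 \in \sigma_p(\hat{M})$, so $0 \le M_{\min} \le M_0 < \infty$. This is exactly clause (G7$^*$)(b).

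With (b) in hand, I run the argument of Proposition~\ref{prop:non_fock_obstruction_no_c}, splitting into two cases. If $M_0 = M_{\min}$, Proposition~\ref{prop:non_fock_obstruction} applies directly. Otherwise Steps A--C apply: cyclicity puts the spectrum on the lattice $M_{\min} + m_0\mathbb{Z}_{\ge 0}$, the resolvent gap forces $\hat{\psi}(f)^{K+1}\Omega = 0$, the separating property gives $\hat{\psi}(f)^{K+1}=0$, and the Bose-CCR descent then contradicts \eqref{eq:CCR}. Nothing in Steps A--C used (b) beyond $M_{\min}$ being finite, which is now derived.

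The main obstacle is making sure Lemma~\ref{lem:boost_positivity} is applicable at the level of generality of this theorem. Specifically:
\begin{itemize}
\item The decomposition $\mathcal{H} = \bigoplus_M \mathcal{H}_M$ of (G3) must be compatible with the sectorwise G\aa rding domains of Remark~\ref{rem:garding_domain}.
\item The boost identity \eqref{eq:boost_H_identity} must hold on each nonzero sector, not only in Fock-like representations.
\end{itemize}
I would handle both points by working with the G\aa rding domain $\mathcal{D}_M$ directly. There, the Bargmann relations $[\hat{K}_i,\hat{P}_j]=\mathrm{i}M\delta_{ij}$ and $[\hat{K}_i,\hat{H}]=\mathrm{i}\hat{P}_i$ hold as Lie-algebra identities of $\widetilde{G}$, independently of any single-particle realization. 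Positivity of the quadratic form in $\mathbf{v}$ then excludes $M<0$. If $\hat{M}$ had continuous spectrum, the direct-sum hypothesis in (G3) would be replaced by spectral subspaces $E_{\hat M}((-\infty,-\delta))\mathcal{H}$. On such a subspace the same quadratic estimate applies with leading coefficient bounded above by $-\delta/2$, and it yields the same contradiction.
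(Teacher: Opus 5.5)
Your proposal is correct and is essentially the paper's proof: Lemma~\ref{lem:boost_positivity} gives $\sigma(\hat{M})\subseteq[0,\infty)$, and Lemma~\ref{lem:omega_bargmann_eigenvector} makes the spectrum non-empty via $M_0\in\sigma_p(\hat{M})$, so (G7$^*$)(b) holds and Proposition~\ref{prop:non_fock_obstruction_no_c} yields the contradiction. Your side remarks are sound refinements that do not change the argument: reading $[\hat{K}_i,\hat{H}]=\mathrm{i}\hat{P}_i$ as a Lie-algebra identity on the G\aa rding domain is a cleaner justification than the paper's single-particle check, and the spectral-subspace version for continuous mass spectrum is moot here, given the direct-sum decomposition in (G3).
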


\begin{proof}
By Lemma~\ref{lem:boost_positivity}, $\sigma(\hat{M}) \subseteq
[0, \infty)$ under (G1)--(G6). Since
Lemma~\ref{lem:omega_bargmann_eigenvector} gives $M_0 \in
\sigma_p(\hat{M}) \subseteq \sigma(\hat{M})$, the spectrum is
non-empty, and $M_{\min} := \inf\sigma(\hat{M}) \in [0, \infty)$
exists, supplying (G7$^*$)(b). Combined with the hypotheses
(G7$^*$)(a) and (G7$^*$)(d) of the present theorem, all hypotheses
of Proposition~\ref{prop:non_fock_obstruction_no_c} hold. Apply
Proposition~\ref{prop:non_fock_obstruction_no_c} to conclude.
\end{proof}

\begin{corollary}[Modular flow, strengthened form]
\label{cor:no_modular_flow_strengthened}
Let $(\mathcal{F}, \mathcal{H}, U, \Omega)$ be any Galilean
Haag--Kastler net satisfying (G1)--(G6) together with
(G7$^*$)(a) and (G7$^*$)(d). Then for every bounded open
$\mathcal{O} \subset \mathcal{M}_G$ with $\mathcal{O}'$ having
non-empty interior, the vacuum $\Omega$ is not separating for
$\mathcal{F}(\mathcal{O})$, and the Tomita--Takesaki modular flow
on $\mathcal{F}(\mathcal{O})$ relative to $\Omega$ is undefined.
\end{corollary}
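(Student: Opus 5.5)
The corollary follows by running the proof of Theorem~\ref{thm:strengthened_obstruction} one region at a time, as Corollary~\ref{cor:no_modular_flow} does for Theorem~\ref{thm:obstruction}. The point is that the separating property is needed for only one local algebra, together with cyclicity for the lattice step. Fix a bounded open $\mathcal{O}$ whose complement has non-empty interior, and assume for contradiction that $\Omega$ is separating for $\mathcal{F}(\mathcal{O})$.

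The first step is to supply (G7$^*$)(b) without Reeh--Schlieder. Lemma~\ref{lem:boost_positivity} and Lemma~\ref{lem:omega_bargmann_eigenvector} give $\sigma(\hat{M}) \subseteq [0,\infty)$ and $M_0 \in \sigma_p(\hat{M})$. I will not use Step~A of Proposition~\ref{prop:non_fock_obstruction_no_c}, because it relies on cyclicity and the corollary assumes only separating. Instead, pick a nonzero $f$ with $\mathrm{supp}(f) \subseteq \mathcal{O}$ and choose an integer $K \geq 0$ with $M_0 - (K+1)m_0 < 0$; this exists because $m_0 > 0$ and $M_0$ is finite. By (G7$^*$)(a) and the Stone-theorem argument of \eqref{eq:high_power_omega}, the vector $\hat{\psi}(f)^{K+1}\Omega$ lies in the $\hat{M}$-eigenspace at a negative eigenvalue. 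That eigenspace is trivial by Lemma~\ref{lem:boost_positivity}, so $\hat{\psi}(f)^{K+1}\Omega = 0$. This choice of $K$ replaces the lattice bookkeeping: any $K$ pushing the charge below $0$ works, not just the exact offset $M_0 - M_{\min}$.

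The second step applies the assumed separating property for this single $\mathcal{F}(\mathcal{O})$ to get $\hat{\psi}(f)^{K+1} = 0$. This uses the affiliation and spectral-projection convention of the footnote to (G4), exactly as in Step~B. Then Step~C goes through verbatim. Choose $g \neq 0$ in $C_c^\infty$ of a spatial region whose product with a small time interval around some $t_0$ lies inside $\mathcal{O}$. If $\mathcal{O}$ does not meet $t=0$, first translate by time using (G3) covariance, or define the time-zero limit at $t_0$ through conjugation by $\mathrm{e}^{\mathrm{i}\hat{H}t_0}$. The joint continuity in (G7$^*$)(d) then gives $\hat{\psi}_0(g)^{K+1} = 0$ on $\mathcal{D}$, and the Bose-CCR descent \eqref{eq:descent_step} gives $\hat{\psi}_0(g) = 0$. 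This contradicts $[\hat{\psi}_0(g),\hat{\psi}_0^\dagger(g)] = \|g\|^2\mathbb{1}$. So separating fails for every such $\mathcal{O}$.

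The modular-flow statement follows as in Corollary~\ref{cor:no_modular_flow}. Tomita's operator $S: A\Omega \mapsto A^*\Omega$ is well defined and closable only when $\Omega$ is separating, so no modular operator $\Delta$ or modular group exists. The main obstacle is the localization in Step~C when $\mathcal{O}$ does not meet the time-zero slice. The time-translated time-zero fields are not literally in $\mathcal{F}(\mathcal{O})$; they are limits of elements that are. I would handle this by conjugating with $U(\tau)$ and using the $\hat{H}$-stability of $\mathcal{D}$. The CCR are unitarily invariant, so the descent runs on $\mathrm{e}^{\mathrm{i}\hat{H}t_0}\hat{\psi}_0(g)\mathrm{e}^{-\mathrm{i}\hat{H}t_0}$ instead.
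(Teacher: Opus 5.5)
Your proof is correct. It follows the paper's skeleton: assume $\Omega$ separating for one $\mathcal{F}(\mathcal{O})$, obtain high-power vanishing, apply separating, run the Bose-CCR descent, and contradict the CCR. It departs from the paper at exactly the two places where the paper's one-line proof is too quick.

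The paper calls the argument ``identical'' to Corollary~\ref{cor:no_modular_flow} with Theorem~\ref{thm:strengthened_obstruction} in place of Theorem~\ref{thm:obstruction}. Read literally, that routes through Proposition~\ref{prop:non_fock_obstruction_no_c}. Its Step~A uses \emph{cyclicity} to confine $\sigma(\hat{M})$ to $M_0+m_0\mathbb{Z}$, and its Step~C picks a region around the $t=0$ slice. Neither is available when only separating for a single, arbitrary $\mathcal{O}$ is assumed.

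You instead take any $K$ with $M_0-(K+1)m_0<0$. Lemma~\ref{lem:boost_positivity} makes this sufficient, and such a $K$ is uniform in $f$, as the limit in Step~C requires. You also handle the time localization explicitly. The paper's route yields the sharp exponent $K=(M_0-M_{\min})/m_0$ and a lattice picture of the mass spectrum, neither of which is needed here. Yours proves the per-region statement the corollary actually asserts, using only separating for that one algebra.

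One refinement: $\hat{H}$-stability of $\mathcal{D}$ does not make $\mathcal{D}$ invariant under $\mathrm{e}^{\mathrm{i}\hat{H}t_0}$, so it is not the right justification for the conjugation step. The cleanest version is your first option. By (G6), $U(\mathbf{0},\tau)\Omega=\Omega$, and by (G3), $\alpha_\tau(\mathcal{F}(\mathcal{O}))=\mathcal{F}(\mathcal{O}_\tau)$ for the time-translate $\mathcal{O}_\tau$. So if $B\in\mathcal{F}(\mathcal{O}_\tau)$ and $B\Omega=0$, then $U(\mathbf{0},\tau)^*BU(\mathbf{0},\tau)\in\mathcal{F}(\mathcal{O})$ also annihilates $\Omega$. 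Separating therefore transfers to a time-translate of $\mathcal{O}$ that contains a product neighbourhood of a piece of the $t=0$ slice, and Step~C runs verbatim on the original $\mathcal{D}$. Alternatively, run the descent for the conjugated fields on $\mathrm{e}^{\mathrm{i}\hat{H}t_0}\mathcal{D}$, where the CCR hold with the same c-number.
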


\begin{proof}
Identical to the proof of Corollary~\ref{cor:no_modular_flow}, with
Theorem~\ref{thm:strengthened_obstruction} supplying the
contradiction in place of Theorem~\ref{thm:obstruction}: the same
choice of $f$ with $\mathrm{supp}(f) \subseteq \mathcal{O}$, the
same separating-derives-contradiction structure, and the same
conclusion that separating must fail.
\end{proof}

\begin{remark}[Closure of the open question via boost-positivity]
\label{rem:non_fock_open_cases}
The unbounded-below mass spectrum case ((G7$^*$)(b) failure) was
the only class of representations falling outside
Proposition~\ref{prop:non_fock_obstruction_no_c}. The boost-positivity
Lemma~\ref{lem:boost_positivity} closes this case by showing that
$\sigma(\hat{M}) \subseteq [0, \infty)$ holds automatically under
(G1)--(G6): negative-mass Bargmann sectors would carry states whose
boosted-frame energy expectation value
$E_0 - \mathbf{v}\cdot\mathbf{p}_0 + \tfrac{M}{2}|\mathbf{v}|^2$ goes
to $-\infty$ as $|\mathbf{v}|\to\infty$, contradicting (G5).
Theorem~\ref{thm:strengthened_obstruction} therefore covers every
Galilean Haag--Kastler net satisfying (G7$^*$)(a) and (G7$^*$)(d)---
not merely those with bounded-below mass spectrum and
vacuum-at-spectral-minimum, but every representation in which the
canonical fields carry definite Bargmann mass charges and admit
time-zero restrictions on a field-algebra-stable common dense
domain. A separate exclusion comes from (G6) itself rather than
from (G7$^*$): KMS / thermal representations typically violate
uniqueness of the translation-invariant vector and so are not
Galilean Haag--Kastler nets in the sense of
Definition~\ref{def:galilean_HK_net}.
\end{remark}

\begin{remark}[Structural reading]
\label{rem:non_fock_structural}
Theorem~\ref{thm:strengthened_obstruction} sharpens the structural
reading of Theorem~\ref{thm:obstruction} into its final form: the
obstruction is not Fock-specific. Any Galilean Haag--Kastler net
whose canonical fields carry definite Bargmann mass charges
((G7$^*$)(a)) and admit time-zero restrictions on a
field-algebra-stable common dense domain ((G7$^*$)(d)) is
incompatible with Reeh--Schlieder. The remaining clauses (G7$^*$)(b)
(mass spectrum bounded below) and (G7$^*$)(c) (vacuum at the
spectral minimum) are derived consequences:
Lemma~\ref{lem:boost_positivity} forces (G7$^*$)(b) from (G1)--(G6)
via positive-energy boost positivity, and
Proposition~\ref{prop:non_fock_obstruction_no_c}'s Bose-CCR descent
removes (G7$^*$)(c) by replacing the spectral argument of
Proposition~\ref{prop:non_fock_obstruction} with an algebraic
induction. (G7) appears in Theorem~\ref{thm:obstruction} because the
published interacting Galilean QFTs of \S~\ref{sec:verification} are
Fock constructions, and (G7) implies all four components of (G7$^*$)
automatically. The structural divider between relativistic and
Galilean AQFT identified in \S~\ref{sec:discussion} is
correspondingly the conjunction of Bargmann mass-charge structure
of canonical fields and time-zero regularity---not Fock structure,
not vacuum-at-spectral-minimum, and not mass-spectrum positivity as
an axiom (the latter being derived).
\end{remark}

\section{Verification against published constructions}
\label{sec:verification}

We now examine the five published interacting Galilean (or
Galilean-style) quantum field theories in the literature, indicating
for each whether the construction satisfies axioms (G1)--(G7)
literally and identifying the explicit mechanism by which the
Reeh--Schlieder property fails. Together these provide a sanity
check that Theorem~\ref{thm:obstruction} is not vacuous: the axiom
set (G1)--(G7) is non-trivial (every model satisfies most or all of
it with non-trivial dynamics), and Reeh--Schlieder demonstrably
fails in each case by direct examination of the construction.

\begin{corollary}[Reeh--Schlieder fails in the standard models]
\label{cor:counterexamples_verification}
None of the published interacting Galilean Haag--Kastler nets
\cite{LevyLeblond1967,Schrader1968,Hepp1969,Eckmann1970,%
LampartSchmidtTeufelTumulka2018} has the Reeh--Schlieder property of
Definition~\ref{def:reeh_schlieder}.
\end{corollary}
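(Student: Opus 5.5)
The corollary follows from Theorem~\ref{thm:obstruction} (or, for models outside the Fock setting, Theorem~\ref{thm:strengthened_obstruction}), applied model by model. For each of the five constructions \cite{LevyLeblond1967,Schrader1968,Hepp1969,Eckmann1970,LampartSchmidtTeufelTumulka2018} we do not try to recheck all of (G1)--(G7). We only check the inputs the proof actually uses. These are the following:
\begin{itemize}
\item the field content required by (G4): some smeared canonical field $\hat{\psi}(f)$, with $f$ supported in a bounded region $\mathcal{O}$, is affiliated with $\mathcal{F}(\mathcal{O})$;
\item the c-number equal-time CCR holds for its time-zero restriction;
\item either the Fock structure (G7), or the weaker pair (G7$^*$)(a) and (G7$^*$)(d), holds;
\item the symmetry structure required by Lemma~\ref{lem:boost_positivity}, namely (G3) and (G5).
\end{itemize}
Once these hold, the proof of Corollary~\ref{cor:no_modular_flow} (or Corollary~\ref{cor:no_modular_flow_strengthened}) shows that $\Omega$ is not separating for that $\mathcal{F}(\mathcal{O})$. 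Reeh--Schlieder therefore fails.

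For Schrader's local Galilean Lee model and L\'evy-Leblond's GaliLee model, I expect to check the Fock hypothesis (G7) directly, one species at a time. The models are built on a symmetric (or mixed Bose/Fermi) Fock space over single-particle spaces $L^2(\mathbb{R}^3)$ with Bargmann masses $m_\alpha$. The vacuum is the no-particle state, and the time-zero fields are the standard annihilation operators, so Remark~\ref{rem:time_zero_fock} supplies Step~1. Some species may be fermionic, for instance the $V$/$N$ particles. In that case I would run the argument on a bosonic species field, which must exist for the CCR contradiction. Alternatively, I would note that for fermions a separating vector cannot be annihilated by the bounded nonzero operator $\hat{\psi}(f)$ either. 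In that variant no unbounded-affiliation bookkeeping is needed, and the CAR replace the CCR.

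For Hepp's renormalized models, Eckmann's persistent-vacuum model and the LSTT point-interaction model, the representation is again a Fock space, and the vacuum is annihilated by the bare annihilation fields. In Eckmann's case this is the defining ``persistent vacuum'' property: the interacting Hamiltonian leaves the Fock vacuum invariant. So Step~1 holds, either directly or through Lemma~\ref{lem:time_zero_4d}. The charge property (G7$^*$)(a) follows from particle-number conservation of the Hamiltonians, since each is a $U(1)$-invariant polynomial in the fields. Time-zero regularity (G7$^*$)(d) follows from the algebraic finite-particle domain.

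The main obstacle is the models that are only ``Galilean-style'', such as Hepp's and LSTT, whose Hamiltonians involve cutoffs or interior-boundary conditions. These may not satisfy literal Galilean covariance (G3) or full locality (G2). There, instead of invoking the theorem, I would argue directly. The proof of Theorem~\ref{thm:obstruction} uses only three facts: $\hat{\psi}(f)\Omega=0$, the affiliation of $\hat{\psi}(f)$ with the local algebra, and the CCR. Because the affiliated operator is unbounded, I would apply the separating property to the spectral projections of $|\hat{\psi}(f)|$, as in the footnote to (G4). For a nonzero annihilation field $\hat{\psi}(f)$ with $\hat{\psi}(f)\Omega=0$, the projection $\mathbb{1}-E_{\{0\}}(|\hat{\psi}(f)|)$ is a nonzero element of the local algebra that annihilates $\Omega$. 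Exhibiting that projection model by model is the step where the details must actually be checked.
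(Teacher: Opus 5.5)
Your proposal is essentially the paper's argument: the bare annihilation field kills the persistent Fock vacuum and is affiliated with $\mathcal{F}(\mathcal{O})$. Separating, applied as in the paper's footnote to (G4) to the projection $\mathbb{1}-E_{\{0\}}(|\hat{\psi}(f)|)$, would then force $\hat{\psi}(f)=0$ against the CCR, with Theorem~\ref{thm:obstruction} invoked where (G1)--(G7) hold and the bare mechanism used for the remaining constructions.

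One misplacement: the paper counts Hepp's Galilean models among the literal (G1)--(G7) examples and puts Eckmann's model outside (G3) because of its relativistic dispersion. Eckmann's model therefore has no Bargmann charge for your particle-number justification of (G7$^*$)(a), so it must go through your direct argument (which needs only $\hat{H}\Omega=0$, affiliation, and $\hat{\psi}(f)\neq 0$) rather than Theorem~\ref{thm:strengthened_obstruction}.
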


The corollary follows from Theorem~\ref{thm:obstruction} together
with verification that each model satisfies sufficiently many of
(G1)--(G7) for the theorem to apply. We check the models in turn.

\subsection{The L\'evy-Leblond GaliLee model (1967)}
\label{ssec:levy_leblond}

L\'evy-Leblond's GaliLee model~\cite{LevyLeblond1967} is the original
explicitly Galilean-invariant interacting quantum field theory. It
features two scalar species (``nucleons'' and ``mesons'') coupled by
a Galilean-invariant Yukawa-type interaction, with the constraint
that the nucleon number is conserved (the mesons are absorbed and
re-emitted but never destroyed). The construction is on a symmetric
Fock space with the bare vacuum as the interacting ground state.

\emph{Status of axioms.} The model satisfies (G1) (isotony), (G2)
(equal-time locality), (G3) (Galilean covariance under the Bargmann
extension, with explicit central-charge implementation), (G4)
(canonical fields with equal-time CCR by construction), (G5)
(positive-energy spectrum), (G6) (translation-invariant unique
vacuum), and (G7) (Fock representation, used directly in the
construction). Hence (G1)--(G7) are satisfied literally.

\emph{Failure of Reeh--Schlieder.} The bare nucleon and meson fields
$\hat{\psi}_{\mathrm{nuc}}(f), \hat{\phi}_{\mathrm{mes}}(f)$
annihilate the Fock vacuum, so they lie in the kernel of $\Omega$ on
every local field algebra. By the contrapositive of separating, were
$\Omega$ separating for some $\mathcal{F}(\mathcal{O})$ containing
such a field, the field would vanish, contradicting the equal-time
CCR. Hence $\Omega$ is not separating, and Reeh--Schlieder fails.

\subsection{The Schrader local Galilean Lee model (1968)}
\label{ssec:schrader}

Schrader's paper~\cite{Schrader1968} is the canonical reference for
``Galilean Haag--Kastler theory satisfies (G1)--(G7) literally with
non-trivial dynamics.'' The construction provides a self-adjoint
Hamiltonian $\hat{H} = \hat{H}_0 + \hat{V}$ for a local interaction
$\hat{V}$ between nucleons and mesons, with appropriate
renormalization of the Born series, and verifies the Galilean
covariance of the resulting theory.

\emph{Status of axioms.} The model satisfies (G1)--(G7) literally
and is the cleanest realization of the axiom set with non-trivial
dynamics. The Hamiltonian is local, the equal-time CCR are exact,
the Bargmann central charge is implemented, and the Fock
representation is used throughout.

\emph{Failure of Reeh--Schlieder.} As in the GaliLee model, the bare
fields annihilate the Fock vacuum. Theorem~\ref{thm:obstruction}
applies, and Reeh--Schlieder fails by the same argument. The
contradiction is sharp: Schrader's construction would have been
impossible if Reeh--Schlieder were compatible with the axiom set,
but the model exists, so $\Omega$ cannot be cyclic-and-separating.

\subsection{Hepp's renormalization-theory constructions (1969)}
\label{ssec:hepp}

Hepp's lecture notes~\cite{Hepp1969} treat the renormalization theory
of a class of models including the Galilean Lee model and related
persistent-vacuum models. The constructions are technically more
elaborate than Schrader's (and supply additional models, such as
those with mass renormalization), but the structural features are
the same.

\emph{Status of axioms.} The Galilean models in Hepp's treatment
satisfy (G1)--(G7) literally, with the same caveats and verifications
as Schrader's case. Hepp himself uses these models illustratively in
the renormalization-theory framework rather than presenting them as
foundational structures.

\emph{Failure of Reeh--Schlieder.} The argument transfers verbatim
from Section~\ref{ssec:schrader}: bare fields annihilate the Fock
vacuum, the obstruction theorem applies, separating fails.

\subsection{Eckmann's persistent-vacuum model (1970)}
\label{ssec:eckmann}

Eckmann's model~\cite{Eckmann1970} explicitly identifies and
exploits the persistent-vacuum mechanism: the property that the bare
Fock vacuum coincides with the interacting ground state because the
Hamiltonian annihilates it. As Eckmann
notes~\cite[\S~II.2]{Eckmann1970}, this is the structural feature
that makes Haag's theorem inapplicable and permits an interacting
construction without space cutoff.

\emph{Status of axioms.} The Eckmann model deviates from strict
Galilean covariance: the single-particle dispersion relations
$\omega(k) = (\omega_0^2 + k^2)^{1/2}$, $\mu(k) = (\mu_0^2 +
k^2)^{1/2}$ used by Eckmann~\cite[p.~250]{Eckmann1970} are
relativistic, not Galilean. As Eckmann himself remarks, the model
has ``relativistic kinematics'' with conserved nucleon number; it
is therefore not strictly an example of (G3) in the
Bargmann-covariant sense.

\emph{Status with respect to Theorem~\ref{thm:obstruction}.}
Because Eckmann's model violates strict (G3),
Theorem~\ref{thm:obstruction} does not directly apply. We
include it because it isolates the persistent-vacuum mechanism
in transparent form: the Hamiltonian annihilates the bare Fock
vacuum, so the bare canonical annihilation fields trivially
annihilate the dynamical ground state. This is the same
dynamical mechanism that drives the obstruction in the strictly
Galilean models of
Sections~\ref{ssec:levy_leblond}--\ref{ssec:hepp}; Eckmann's
construction makes its origin in the Hamiltonian--vacuum
relation transparent. Whether a (G4)-style canonical-field
algebra imposed on Eckmann's relativistic-dispersion construction
has Reeh--Schlieder, or fails it by an analogous obstruction, is
a question the present theorem does not settle.

\subsection{The Lampart--Schmidt--Teufel--Tumulka point-interaction
model (2018)}
\label{ssec:lampart_etal}

The recent construction of Lampart, Schmidt, Teufel, and
Tumulka~\cite{LampartSchmidtTeufelTumulka2018} provides a
non-relativistic quantum field theory with point interactions in
three dimensions, using interior-boundary conditions to define the
Hamiltonian as a self-adjoint operator with modified domain.

\emph{Status of axioms.} Whether this model satisfies (G4) literally
is non-obvious: the interior-boundary condition modifies the
Hamiltonian domain in a way that may interact with the standard
canonical commutation relations. A careful analysis of whether the
equal-time CCR \eqref{eq:CCR} hold on the modified domain is beyond
the scope of the present paper; we record this as an open question.
The remaining axioms (G1), (G2), (G3), (G5), (G6), (G7) are
satisfied by inspection.

\emph{Failure of Reeh--Schlieder.} Whatever the precise status of
(G4), the construction uses a Fock-style vacuum as the dynamical
ground state, and the bare fields (or their natural analogs in the
modified domain) annihilate it. The structural mechanism of
Theorem~\ref{thm:obstruction} therefore applies, and any version of
Reeh--Schlieder corresponding to the modified construction fails by
the same argument.

\subsection{Summary of the verification}

The five constructions divide into two groups. The
L\'evy-Leblond, Schrader, and Hepp
models~\cite{LevyLeblond1967,Schrader1968,Hepp1969} satisfy
(G1)--(G7) literally and are therefore directly covered by
Theorem~\ref{thm:obstruction}: the bare canonical fields lie in
the local field algebras by (G4), they annihilate the Fock vacuum
by (G7), and the obstruction triggers as in the proof. Eckmann's
model~\cite{Eckmann1970} and the Lampart--Schmidt--Teufel--Tumulka
construction~\cite{LampartSchmidtTeufelTumulka2018} fall outside
the literal scope of the theorem (the former by relativistic
dispersion, the latter by uncertain (G4) status), but illustrate
the same persistent-vacuum mechanism in alternate dynamical
settings: in both, the dynamical ground state coincides with the
bare Fock vacuum and is annihilated by the construction's
canonical fields. Reeh--Schlieder fails in all five models by the
same dynamical mechanism---bare fields annihilate the ground
state, equal-time CCR forbid them from being zero---even though
the theorem applies literally only to the first three.

This verification confirms that Theorem~\ref{thm:obstruction} is
non-vacuous: the axiom set (G1)--(G7) is realized by interacting
constructions (specifically L\'evy-Leblond--Schrader--Hepp),
Reeh--Schlieder fails in each realization, and the
persistent-vacuum mechanism behind the failure extends naturally
to related Galilean-style constructions outside the strict scope
of the axiom set. Since (G7) implies the weaker spectral
hypothesis (G7$^*$) of \S~\ref{sec:non_fock_extension}, the
Lévy-Leblond--Schrader--Hepp models also realize the hypotheses
of Proposition~\ref{prop:non_fock_obstruction} a fortiori, and
both Theorem~\ref{thm:obstruction} and
Proposition~\ref{prop:non_fock_obstruction} are demonstrably
non-vacuous on the same set of constructions. We note in
particular that the existence hypothesis of
Lemma~\ref{lem:time_zero_4d} is trivially satisfied in all three
constructions: the L\'evy-Leblond, Schrader, and Hepp models
introduce time-zero smeared annihilation/creation operators
$\hat{\psi}_0(g), \hat{\psi}_0^\dagger(g)$ as primary objects on
the Fock space and define the 4D-smeared fields via Heisenberg
evolution, so $\hat{\psi}_0(g)$ exists by construction without
recourse to any limit procedure.

\section{Discussion}
\label{sec:discussion}

\subsection{Reeh--Schlieder as the structural divider}

The folklore claim that the algebraic axioms of QFT---microcausality,
canonical commutation relations, positive-energy spectrum,
Haag--Kastler net structure, non-trivial dynamics---force relativistic
kinematics has long been understood as a folk-theorem rather than a
proven result~\cite{Hegerfeldt1985,Buchholz1998}. The literature
documents that Galilean Haag--Kastler nets satisfying microcausality
+ CCR + positive energy + non-trivial dynamics \emph{do
exist}~\cite{LevyLeblond1967,Schrader1968,Hepp1969,Eckmann1970,%
LampartSchmidtTeufelTumulka2018}, contradicting the naive form of
the folklore argument. Theorem~\ref{thm:obstruction} identifies the
missing ingredient: the corrected statement is that the algebraic
axioms \emph{plus the Reeh--Schlieder property} force relativistic
kinematics, in the sense that the strengthened axiom set is
inconsistent over Fock representations of the Galilean Haag--Kastler
framework. Galilean theories evade the conclusion precisely by
failing Reeh--Schlieder.

The rigidity-style theorems of relativistic AQFT---the CPT theorem
of Jost~\cite{Jost1957} and Streater--Wightman~\cite{StreaterWightman1964},
the spin-statistics theorem~\cite{StreaterWightman1964,Haag1992},
Haag's theorem on the inequivalence of free and interacting
representations~\cite{Haag1955}, and the Bisognano--Wightman
identification of modular flow with Lorentz boosts~%
\cite{BisognanoWightman1975,BisognanoWightman1976}---all rely on the
Reeh--Schlieder property at varying depths.
Theorem~\ref{thm:obstruction} clarifies why their Galilean analogues
fail: not because microcausality fails (it holds in equal-time form
for all standard Galilean models) but because the structural
prerequisite (Reeh--Schlieder) is unavailable.

\subsection{Implications for modular and quantum-gravity programs}

Approaches to quantum gravity that take modular structure as a
fundamental ingredient---the Connes--Rovelli thermal time
hypothesis~\cite{ConnesRovelli1994} and modular-localization
programs~\cite{BrunettiGuidoLongo1993,Schroer1997}---rely on the
existence of cyclic-and-separating vacua to define modular flow.
Corollary~\ref{cor:no_modular_flow} shows that these approaches are
intrinsically relativistic: the modular structure they exploit is
unavailable in any Galilean setting satisfying (G1)--(G7), and the
phenomena they produce (observer-dependent time, KMS thermalization,
holographic-style equivalences) are properties of relativistic
kinematics specifically. The present result complements two related
observations in the literature. Bain~\cite{Bain2011} argues, drawing
on Requardt~\cite{Requardt1982} and the parabolic-versus-hyperbolic
distinction in field equations, that the Separating Corollary on
which the relativistic Reeh--Schlieder theorem rests cannot be
derived in classical-spacetime QFTs, and identifies the absolute
temporal metric as the structural locus of the difference. The
present Theorem~\ref{thm:obstruction} sharpens this in two ways: it
formulates the failure as a no-go theorem on an explicit Galilean
axiom set rather than as the failure of a corollary's hypotheses,
and it identifies the mechanism as Bargmann mass superselection
(\S~\ref{ssec:why_unavailable}) acting on the canonical-field-algebra
generators of (G4), rather than as the absence of anti-locality of
the parabolic Schr\"odinger differential operator.
Falcone--Conti~\cite{FalconeConti2024} establish the analytic
counterpart: that Reeh--Schlieder nonlocal effects are suppressed in
the non-relativistic limit of the relativistic theory. The present
structural-axiomatic theorem and Falcone--Conti's
analytic-limit statement are mutually consistent but logically
independent---the former establishes the failure of Reeh--Schlieder
as a property of intrinsically Galilean axiomatic theories, the
latter establishes the vanishing of its nonlocal consequences in a
limit of relativistic theories.

\section{Conclusion}
\label{sec:conclusion}

Theorem~\ref{thm:obstruction} establishes that the standard Galilean
Haag--Kastler axioms (G1)--(G7) are inconsistent with the
Reeh--Schlieder property of the vacuum, and
Proposition~\ref{prop:non_fock_obstruction} extends the conclusion
under a natural spectral weakening (G7$^*$) of the Fock hypothesis.
The result identifies the Reeh--Schlieder property as the precise
structural divider between relativistic and Galilean algebraic
quantum field theory: relativistic AQFT has it as a theorem, derived
from microcausality and the spectrum condition; Galilean AQFT, in
every setting in which the canonical fields carry definite Bargmann
mass charges and the vacuum sits at the bottom of the Bargmann mass
spectrum, cannot have it. Two corollaries follow---the absence of
non-trivial Tomita--Takesaki modular flow on Galilean local field
algebras, and the verification that all published interacting
Galilean QFTs evade the obstruction by failing Reeh--Schlieder.

The proof combines two ingredients: that Galilean Schr\"odinger
fields annihilate the Fock vacuum, and that Bargmann mass
superselection forbids the Hermitian-combination evasion that
preserves consistency of the relativistic Haag--Kastler
axioms~\cite[\S~3.1]{Halvorson2001}. The result is, to our knowledge,
the first labelled formulation of this structural fact. Adjacent
claims appear in~\cite{Bain2011,Klaczynski2016,FalconeConti2024};
Theorem~\ref{thm:obstruction} differs from these in stating and
proving a precise no-go theorem on an explicit axiom set, rather
than identifying structural conditions for the failure of a corollary
(Bain), establishing the parallel failure of Haag's theorem
(Klaczy\'nski), or computing the suppression of nonlocal effects in
an analytic limit (Falcone--Conti).

The conclusion of Theorem~\ref{thm:obstruction} extends beyond the
Fock-representation hypothesis (G7) in two stages.
Proposition~\ref{prop:non_fock_obstruction} replaces (G7) with the
spectral hypothesis (G7$^*$)---requiring that the canonical fields
carry definite Bargmann mass charges, that $\sigma(\hat{M})$ be
bounded below, that the vacuum's mass eigenvalue equal the spectral
minimum, and that the time-zero fields exist on a
field-algebra-stable common dense domain---and recovers the same
no-go conclusion. Proposition~\ref{prop:non_fock_obstruction_no_c}
removes the vacuum-at-spectral-minimum clause (G7$^*$)(c) by
replacing the spectral argument with a Bose-CCR algebraic descent.
Lemma~\ref{lem:boost_positivity} then removes the bounded-below
clause (G7$^*$)(b): under (G1)--(G6) alone, the Galilean boost
identity for the Hamiltonian on each Bargmann sector combined with
the positive-energy spectrum (G5) forces $\sigma(\hat{M}) \subseteq
[0, \infty)$, since negative-mass sectors would carry states whose
boosted-frame energy expectation goes to $-\infty$. The combined
result, Theorem~\ref{thm:strengthened_obstruction}, establishes that
(G1)--(G6) together with the Bargmann-charge clause (G7$^*$)(a),
the time-zero-regularity clause (G7$^*$)(d), and the
Reeh--Schlieder property are inconsistent. The Reeh--Schlieder
obstruction therefore extends from Fock representations to every
Galilean Haag--Kastler net in which canonical fields exist with
Bargmann-charge structure and admit a regular time-zero
restriction---a class strictly larger than (G7) and covering all
physically realized Galilean QFTs we are aware of.

\begin{acknowledgments}
This work was supported by the R+D+I efforts from guane Enterprises.
\end{acknowledgments}

\section*{Data Availability Statement}

Data sharing is not applicable to this article as no new data were
created or analyzed in this study.

\bibliography{QFTGalilei}

%merlin.mbs aipnum4-1.bst 2010-07-25 4.21a (PWD, AO, DPC) hacked
%Control: key (0)
%Control: author (8) initials jnrlst
%Control: editor formatted (1) identically to author
%Control: production of article title (0) allowed
%Control: page (1) range
%Control: year (1) truncated
%Control: production of eprint (0) enabled
\begin{thebibliography}{41}%
\makeatletter
\providecommand \@ifxundefined [1]{%
 \@ifx{#1\undefined}
}%
\providecommand \@ifnum [1]{%
 \ifnum #1\expandafter \@firstoftwo
 \else \expandafter \@secondoftwo
 \fi
}%
\providecommand \@ifx [1]{%
 \ifx #1\expandafter \@firstoftwo
 \else \expandafter \@secondoftwo
 \fi
}%
\providecommand \natexlab [1]{#1}%
\providecommand \enquote  [1]{``#1''}%
\providecommand \bibnamefont  [1]{#1}%
\providecommand \bibfnamefont [1]{#1}%
\providecommand \citenamefont [1]{#1}%
\providecommand \href@noop [0]{\@secondoftwo}%
\providecommand \href [0]{\begingroup \@sanitize@url \@href}%
\providecommand \@href[1]{\@@startlink{#1}\@@href}%
\providecommand \@@href[1]{\endgroup#1\@@endlink}%
\providecommand \@sanitize@url [0]{\catcode `\\12\catcode `\$12\catcode
  `\&12\catcode `\#12\catcode `\^12\catcode `\_12\catcode `\%12\relax}%
\providecommand \@@startlink[1]{}%
\providecommand \@@endlink[0]{}%
\providecommand \url  [0]{\begingroup\@sanitize@url \@url }%
\providecommand \@url [1]{\endgroup\@href {#1}{\urlprefix }}%
\providecommand \urlprefix  [0]{URL }%
\providecommand \Eprint [0]{\href }%
\providecommand \doibase [0]{http://dx.doi.org/}%
\providecommand \selectlanguage [0]{\@gobble}%
\providecommand \bibinfo  [0]{\@secondoftwo}%
\providecommand \bibfield  [0]{\@secondoftwo}%
\providecommand \translation [1]{[#1]}%
\providecommand \BibitemOpen [0]{}%
\providecommand \bibitemStop [0]{}%
\providecommand \bibitemNoStop [0]{.\EOS\space}%
\providecommand \EOS [0]{\spacefactor3000\relax}%
\providecommand \BibitemShut  [1]{\csname bibitem#1\endcsname}%
\let\auto@bib@innerbib\@empty
%</preamble>
\bibitem [{\citenamefont {L\'evy-Leblond}(1967)}]{LevyLeblond1967}%
  \BibitemOpen
  \bibfield  {author} {\bibinfo {author} {\bibfnamefont {J.-M.}\ \bibnamefont
  {L\'evy-Leblond}},\ }\bibfield  {title} {\enquote {\bibinfo {title} {Galilean
  quantum field theories and a ghostless {L}ee model},}\ }\href {\doibase
  10.1007/BF01645427} {\bibfield  {journal} {\bibinfo  {journal}
  {Communications in Mathematical Physics}\ }\textbf {\bibinfo {volume} {4}},\
  \bibinfo {pages} {157--176} (\bibinfo {year} {1967})}\BibitemShut {NoStop}%
\bibitem [{\citenamefont {Pach\'{o}n}(2026{\natexlab{a}})}]{Pachon2026a}%
  \BibitemOpen
  \bibfield  {author} {\bibinfo {author} {\bibfnamefont {L.~A.}\ \bibnamefont
  {Pach\'{o}n}},\ }\href@noop {} {\enquote {\bibinfo {title} {Algebraic quantum
  kinematics and {SR}-selection},}\ } (\bibinfo {year} {2026}{\natexlab{a}}),\
  \bibinfo {note} {arXiv preprint; first of a five-paper series.}\BibitemShut
  {Stop}%
\bibitem [{\citenamefont {Pach\'{o}n}(2026{\natexlab{b}})}]{Pachon2026c}%
  \BibitemOpen
  \bibfield  {author} {\bibinfo {author} {\bibfnamefont {L.~A.}\ \bibnamefont
  {Pach\'{o}n}},\ }\href@noop {} {\enquote {\bibinfo {title}
  {{N}ewton--{C}artan limit of {K}lein--{G}ordon {AQFT} and the collapse of
  {G}alilean modular structure},}\ } (\bibinfo {year} {2026}{\natexlab{b}}),\
  \bibinfo {note} {arXiv preprint; third of a five-paper series.}\BibitemShut
  {Stop}%
\bibitem [{\citenamefont {Pach\'{o}n}(2026{\natexlab{c}})}]{Pachon2026d}%
  \BibitemOpen
  \bibfield  {author} {\bibinfo {author} {\bibfnamefont {L.~A.}\ \bibnamefont
  {Pach\'{o}n}},\ }\href@noop {} {\enquote {\bibinfo {title} {Algebraic forcing
  of the semiclassical {E}instein equations},}\ } (\bibinfo {year}
  {2026}{\natexlab{c}}),\ \bibinfo {note} {arXiv preprint; third of a
  five-paper series.}\BibitemShut {Stop}%
\bibitem [{\citenamefont {Pach\'{o}n}(2026{\natexlab{d}})}]{Pachon2026e}%
  \BibitemOpen
  \bibfield  {author} {\bibinfo {author} {\bibfnamefont {L.~A.}\ \bibnamefont
  {Pach\'{o}n}},\ }\href@noop {} {\enquote {\bibinfo {title} {An equivalence
  theorem for the algebraic forcing of the semiclassical {E}instein
  equations},}\ } (\bibinfo {year} {2026}{\natexlab{d}}),\ \bibinfo {note}
  {arXiv preprint; third of a five-paper series.}\BibitemShut {Stop}%
\bibitem [{\citenamefont {Pach\'{o}n}(2026{\natexlab{e}})}]{Pachon2026f}%
  \BibitemOpen
  \bibfield  {author} {\bibinfo {author} {\bibfnamefont {L.~A.}\ \bibnamefont
  {Pach\'{o}n}},\ }\href@noop {} {\enquote {\bibinfo {title} {Gravity-dressed
  crossed products and the {G}alilean obstruction},}\ } (\bibinfo {year}
  {2026}{\natexlab{e}}),\ \bibinfo {note} {arXiv preprint; third of a
  five-paper series.}\BibitemShut {Stop}%
\bibitem [{\citenamefont {Bain}(2011)}]{Bain2011}%
  \BibitemOpen
  \bibfield  {author} {\bibinfo {author} {\bibfnamefont {J.}~\bibnamefont
  {Bain}},\ }\bibfield  {title} {\enquote {\bibinfo {title} {Quantum field
  theories in classical spacetimes and particles},}\ }\href {\doibase
  10.1016/j.shpsb.2011.01.001} {\bibfield  {journal} {\bibinfo  {journal}
  {Studies in History and Philosophy of Modern Physics}\ }\textbf {\bibinfo
  {volume} {42}},\ \bibinfo {pages} {98--106} (\bibinfo {year}
  {2011})}\BibitemShut {NoStop}%
\bibitem [{\citenamefont {Requardt}(1982)}]{Requardt1982}%
  \BibitemOpen
  \bibfield  {author} {\bibinfo {author} {\bibfnamefont {M.}~\bibnamefont
  {Requardt}},\ }\bibfield  {title} {\enquote {\bibinfo {title} {Spectrum
  condition, analyticity, {Reeh--Schlieder} and cluster properties in
  non-relativistic {G}alilei-invariant quantum theory},}\ }\href {\doibase
  10.1088/0305-4470/15/12/021} {\bibfield  {journal} {\bibinfo  {journal}
  {Journal of Physics A: Mathematical and General}\ }\textbf {\bibinfo {volume}
  {15}},\ \bibinfo {pages} {3715--3723} (\bibinfo {year} {1982})}\BibitemShut
  {NoStop}%
\bibitem [{\citenamefont {Klaczynski}(2016)}]{Klaczynski2016}%
  \BibitemOpen
  \bibfield  {author} {\bibinfo {author} {\bibfnamefont {L.}~\bibnamefont
  {Klaczynski}},\ }\emph {\bibinfo {title} {Haag's Theorem in Renormalised
  Quantum Field Theories}},\ \href {\doibase 10.18452/17448} {Ph.D. thesis},\
  \bibinfo  {school} {Humboldt-Universit\"at zu Berlin} (\bibinfo {year}
  {2016}),\ \bibinfo {note} {thesis DOI \url{https://doi.org/10.18452/17448};
  preprint version arXiv:1602.00662 [hep-th]},\ \Eprint
  {http://arxiv.org/abs/1602.00662} {arXiv:1602.00662} \BibitemShut {NoStop}%
\bibitem [{\citenamefont {Puccini}\ and\ \citenamefont
  {Vucetich}(2004)}]{PucciniVucetich2004}%
  \BibitemOpen
  \bibfield  {author} {\bibinfo {author} {\bibfnamefont {G.~D.}\ \bibnamefont
  {Puccini}}\ and\ \bibinfo {author} {\bibfnamefont {H.}~\bibnamefont
  {Vucetich}},\ }\bibfield  {title} {\enquote {\bibinfo {title} {Axiomatic
  foundations of {G}alilean quantum field theories},}\ }\href {\doibase
  10.1023/B:FOOP.0000019584.39535.28} {\bibfield  {journal} {\bibinfo
  {journal} {Foundations of Physics}\ }\textbf {\bibinfo {volume} {34}},\
  \bibinfo {pages} {263--295} (\bibinfo {year} {2004})},\ \bibinfo {note}
  {arXiv:quant-ph/0403181},\ \Eprint {http://arxiv.org/abs/quant-ph/0403181}
  {arXiv:quant-ph/0403181} \BibitemShut {NoStop}%
\bibitem [{\citenamefont {Falcone}\ and\ \citenamefont
  {Conti}(2024)}]{FalconeConti2024}%
  \BibitemOpen
  \bibfield  {author} {\bibinfo {author} {\bibfnamefont {R.}~\bibnamefont
  {Falcone}}\ and\ \bibinfo {author} {\bibfnamefont {C.}~\bibnamefont
  {Conti}},\ }\bibfield  {title} {\enquote {\bibinfo {title} {Localization in
  quantum field theory},}\ }\href {\doibase 10.1016/j.revip.2024.100095}
  {\bibfield  {journal} {\bibinfo  {journal} {Reviews in Physics}\ }\textbf
  {\bibinfo {volume} {12}},\ \bibinfo {pages} {100095} (\bibinfo {year}
  {2024})},\ \bibinfo {note} {arXiv:2312.15348 [hep-th]},\ \Eprint
  {http://arxiv.org/abs/2312.15348} {arXiv:2312.15348 [hep-th]} \BibitemShut
  {NoStop}%
\bibitem [{\citenamefont {Schrader}(1968)}]{Schrader1968}%
  \BibitemOpen
  \bibfield  {author} {\bibinfo {author} {\bibfnamefont {R.}~\bibnamefont
  {Schrader}},\ }\bibfield  {title} {\enquote {\bibinfo {title} {On the
  existence of a local {H}amiltonian in the {G}alilean invariant {L}ee
  model},}\ }\href {\doibase 10.1007/BF01654238} {\bibfield  {journal}
  {\bibinfo  {journal} {Communications in Mathematical Physics}\ }\textbf
  {\bibinfo {volume} {10}},\ \bibinfo {pages} {155--178} (\bibinfo {year}
  {1968})}\BibitemShut {NoStop}%
\bibitem [{\citenamefont {Hepp}(1969)}]{Hepp1969}%
  \BibitemOpen
  \bibfield  {author} {\bibinfo {author} {\bibfnamefont {K.}~\bibnamefont
  {Hepp}},\ }\href {\doibase 10.1007/BFb0108958} {\emph {\bibinfo {title}
  {Th\'eorie de la renormalisation}}},\ \bibinfo {series} {Lecture Notes in
  Physics}, Vol.~\bibinfo {volume} {2}\ (\bibinfo  {publisher} {Springer},\
  \bibinfo {year} {1969})\BibitemShut {NoStop}%
\bibitem [{\citenamefont {Eckmann}(1970)}]{Eckmann1970}%
  \BibitemOpen
  \bibfield  {author} {\bibinfo {author} {\bibfnamefont {J.-P.}\ \bibnamefont
  {Eckmann}},\ }\bibfield  {title} {\enquote {\bibinfo {title} {A model with
  persistent vacuum},}\ }\href {\doibase 10.1007/BF01649436} {\bibfield
  {journal} {\bibinfo  {journal} {Communications in Mathematical Physics}\
  }\textbf {\bibinfo {volume} {18}},\ \bibinfo {pages} {247--264} (\bibinfo
  {year} {1970})}\BibitemShut {NoStop}%
\bibitem [{\citenamefont {Lampart}\ \emph {et~al.}(2018)\citenamefont
  {Lampart}, \citenamefont {Schmidt}, \citenamefont {Teufel},\ and\
  \citenamefont {Tumulka}}]{LampartSchmidtTeufelTumulka2018}%
  \BibitemOpen
  \bibfield  {author} {\bibinfo {author} {\bibfnamefont {J.}~\bibnamefont
  {Lampart}}, \bibinfo {author} {\bibfnamefont {J.}~\bibnamefont {Schmidt}},
  \bibinfo {author} {\bibfnamefont {S.}~\bibnamefont {Teufel}}, \ and\ \bibinfo
  {author} {\bibfnamefont {R.}~\bibnamefont {Tumulka}},\ }\bibfield  {title}
  {\enquote {\bibinfo {title} {A nonrelativistic quantum field theory with
  point interactions in three dimensions},}\ }\href {\doibase
  10.1007/s00023-018-0696-0} {\bibfield  {journal} {\bibinfo  {journal}
  {Annales Henri Poincar\'e}\ }\textbf {\bibinfo {volume} {19}},\ \bibinfo
  {pages} {2641--2670} (\bibinfo {year} {2018})}\BibitemShut {NoStop}%
\bibitem [{\citenamefont {Halvorson}(2001)}]{Halvorson2001}%
  \BibitemOpen
  \bibfield  {author} {\bibinfo {author} {\bibfnamefont {H.}~\bibnamefont
  {Halvorson}},\ }\bibfield  {title} {\enquote {\bibinfo {title}
  {{Reeh--Schlieder} defeats {Newton--Wigner}: On alternative localization
  schemes in relativistic quantum field theory},}\ }\href {\doibase
  10.1086/392869} {\bibfield  {journal} {\bibinfo  {journal} {Philosophy of
  Science}\ }\textbf {\bibinfo {volume} {68}},\ \bibinfo {pages} {111--133}
  (\bibinfo {year} {2001})},\ \bibinfo {note} {arXiv:quant-ph/0007060},\
  \Eprint {http://arxiv.org/abs/quant-ph/0007060} {arXiv:quant-ph/0007060}
  \BibitemShut {NoStop}%
\bibitem [{\citenamefont {Bargmann}(1954)}]{Bargmann1954}%
  \BibitemOpen
  \bibfield  {author} {\bibinfo {author} {\bibfnamefont {V.}~\bibnamefont
  {Bargmann}},\ }\bibfield  {title} {\enquote {\bibinfo {title} {On unitary ray
  representations of continuous groups},}\ }\href {\doibase 10.2307/1969831}
  {\bibfield  {journal} {\bibinfo  {journal} {Annals of Mathematics}\ }\textbf
  {\bibinfo {volume} {59}},\ \bibinfo {pages} {1--46} (\bibinfo {year}
  {1954})}\BibitemShut {NoStop}%
\bibitem [{\citenamefont {L\'evy-Leblond}(1971)}]{LevyLeblond1971}%
  \BibitemOpen
  \bibfield  {author} {\bibinfo {author} {\bibfnamefont {J.-M.}\ \bibnamefont
  {L\'evy-Leblond}},\ }\bibfield  {title} {\enquote {\bibinfo {title}
  {{G}alilei group and {G}alilean invariance},}\ }in\ \href@noop {} {\emph
  {\bibinfo {booktitle} {Group Theory and its Applications}}},\ Vol.~\bibinfo
  {volume} {2},\ \bibinfo {editor} {edited by\ \bibinfo {editor} {\bibfnamefont
  {E.~M.}\ \bibnamefont {Loebl}}}\ (\bibinfo  {publisher} {Academic Press},\
  \bibinfo {year} {1971})\ pp.\ \bibinfo {pages} {221--299}\BibitemShut
  {NoStop}%
\bibitem [{Note1()}]{Note1}%
  \BibitemOpen
  \bibinfo {note} {Throughout we treat smeared fields $\protect \hat {\psi
  }(f), \protect \hat {\psi }^\dagger (f)$ as affiliated to $\protect \mathcal
  {F}(\protect \mathcal {O})$; the implication $\protect \hat {\psi }(f)\Omega
  = 0 \Rightarrow \protect \hat {\psi }(f) = 0$ used below proceeds via
  spectral projections of $|\protect \hat {\psi }(f)|$, which are bounded
  elements of $\protect \mathcal {F}(\protect \mathcal {O})$ and hence subject
  to the separating property directly, exactly as in Remark~\ref
  {rem:field_vs_observable}.}\BibitemShut {Stop}%
\bibitem [{Note2()}]{Note2}%
  \BibitemOpen
  \bibinfo {note} {Existence is automatic under (G7) by the Fock construction
  (Remark~\ref {rem:time_zero_fock}), and is articulated as a regularity
  hypothesis (G7$^*$)(d) in the extended setting of \protect \S ~\ref
  {sec:non_fock_extension}; see Lemma~\ref {lem:time_zero_4d}. In a literal
  slice formulation, no $f \in C_c^\infty (\protect \mathcal {M}_G)$ has
  support in the measure-zero 3-plane $\Sigma _t$, so the CCR must be expressed
  via the time-zero fields rather than as a 4D-test-function relation supported
  on $\Sigma _t$.}\BibitemShut {Stop}%
\bibitem [{Note3()}]{Note3}%
  \BibitemOpen
  \bibinfo {note} {This existence hypothesis is automatic for Wightman-style
  realizations (under temperedness of the field operator-valued distributions
  in time) and is verified directly for every published interacting Galilean
  QFT cited in \protect \S ~\ref {sec:verification}. It is also automatic under
  (G7) by Fock construction, where $\protect \hat {\psi }_0(g)$ is the standard
  time-zero smeared annihilation operator.}\BibitemShut {Stop}%
\bibitem [{\citenamefont {Haag}(1996)}]{Haag1992}%
  \BibitemOpen
  \bibfield  {author} {\bibinfo {author} {\bibfnamefont {R.}~\bibnamefont
  {Haag}},\ }\href@noop {} {\emph {\bibinfo {title} {Local Quantum Physics:
  Fields, Particles, Algebras}}},\ \bibinfo {edition} {2nd}\ ed.\ (\bibinfo
  {publisher} {Springer},\ \bibinfo {year} {1996})\ \bibinfo {note} {first
  edition 1992}\BibitemShut {NoStop}%
\bibitem [{\citenamefont {Haag}\ and\ \citenamefont
  {Kastler}(1964)}]{HaagKastler1964}%
  \BibitemOpen
  \bibfield  {author} {\bibinfo {author} {\bibfnamefont {R.}~\bibnamefont
  {Haag}}\ and\ \bibinfo {author} {\bibfnamefont {D.}~\bibnamefont {Kastler}},\
  }\bibfield  {title} {\enquote {\bibinfo {title} {An algebraic approach to
  quantum field theory},}\ }\href {\doibase 10.1063/1.1704187} {\bibfield
  {journal} {\bibinfo  {journal} {Journal of Mathematical Physics}\ }\textbf
  {\bibinfo {volume} {5}},\ \bibinfo {pages} {848--861} (\bibinfo {year}
  {1964})}\BibitemShut {NoStop}%
\bibitem [{\citenamefont {Reeh}\ and\ \citenamefont
  {Schlieder}(1961)}]{ReehSchlieder1961}%
  \BibitemOpen
  \bibfield  {author} {\bibinfo {author} {\bibfnamefont {H.}~\bibnamefont
  {Reeh}}\ and\ \bibinfo {author} {\bibfnamefont {S.}~\bibnamefont
  {Schlieder}},\ }\bibfield  {title} {\enquote {\bibinfo {title} {Bemerkungen
  zur {U}nit\"ar\"aquivalenz von {L}orentzinvarianten {F}eldern},}\ }\href
  {\doibase 10.1007/BF02787889} {\bibfield  {journal} {\bibinfo  {journal} {Il
  Nuovo Cimento}\ }\textbf {\bibinfo {volume} {22}},\ \bibinfo {pages}
  {1051--1068} (\bibinfo {year} {1961})}\BibitemShut {NoStop}%
\bibitem [{\citenamefont {Bratteli}\ and\ \citenamefont
  {Robinson}(1997)}]{BratteliRobinson1997}%
  \BibitemOpen
  \bibfield  {author} {\bibinfo {author} {\bibfnamefont {O.}~\bibnamefont
  {Bratteli}}\ and\ \bibinfo {author} {\bibfnamefont {D.~W.}\ \bibnamefont
  {Robinson}},\ }\href@noop {} {\emph {\bibinfo {title} {Operator Algebras and
  Quantum Statistical Mechanics}}},\ \bibinfo {edition} {2nd}\ ed.,\
  Vol.~\bibinfo {volume} {2}\ (\bibinfo  {publisher} {Springer},\ \bibinfo
  {year} {1997})\ \bibinfo {note} {equilibrium states; models in quantum
  statistical mechanics}\BibitemShut {NoStop}%
\bibitem [{\citenamefont {Wick}, \citenamefont {Wightman},\ and\ \citenamefont
  {Wigner}(1952)}]{WickWightmanWigner1952}%
  \BibitemOpen
  \bibfield  {author} {\bibinfo {author} {\bibfnamefont {G.~C.}\ \bibnamefont
  {Wick}}, \bibinfo {author} {\bibfnamefont {A.~S.}\ \bibnamefont {Wightman}},
  \ and\ \bibinfo {author} {\bibfnamefont {E.~P.}\ \bibnamefont {Wigner}},\
  }\bibfield  {title} {\enquote {\bibinfo {title} {The intrinsic parity of
  elementary particles},}\ }\href {\doibase 10.1103/PhysRev.88.101} {\bibfield
  {journal} {\bibinfo  {journal} {Physical Review}\ }\textbf {\bibinfo {volume}
  {88}},\ \bibinfo {pages} {101--105} (\bibinfo {year} {1952})}\BibitemShut
  {NoStop}%
\bibitem [{\citenamefont {Takesaki}(1970)}]{Takesaki1970}%
  \BibitemOpen
  \bibfield  {author} {\bibinfo {author} {\bibfnamefont {M.}~\bibnamefont
  {Takesaki}},\ }\href {\doibase 10.1007/BFb0065832} {\emph {\bibinfo {title}
  {Tomita's Theory of Modular {H}ilbert Algebras and Its Applications}}},\
  \bibinfo {series} {Lecture Notes in Mathematics}, Vol.\ \bibinfo {volume}
  {128}\ (\bibinfo  {publisher} {Springer},\ \bibinfo {year}
  {1970})\BibitemShut {NoStop}%
\bibitem [{\citenamefont {Bisognano}\ and\ \citenamefont
  {Wightman}(1975)}]{BisognanoWightman1975}%
  \BibitemOpen
  \bibfield  {author} {\bibinfo {author} {\bibfnamefont {J.~J.}\ \bibnamefont
  {Bisognano}}\ and\ \bibinfo {author} {\bibfnamefont {A.~S.}\ \bibnamefont
  {Wightman}},\ }\bibfield  {title} {\enquote {\bibinfo {title} {On the duality
  condition for a {H}ermitian scalar field},}\ }\href {\doibase
  10.1063/1.522605} {\bibfield  {journal} {\bibinfo  {journal} {Journal of
  Mathematical Physics}\ }\textbf {\bibinfo {volume} {16}},\ \bibinfo {pages}
  {985--1007} (\bibinfo {year} {1975})}\BibitemShut {NoStop}%
\bibitem [{\citenamefont {Bisognano}\ and\ \citenamefont
  {Wightman}(1976)}]{BisognanoWightman1976}%
  \BibitemOpen
  \bibfield  {author} {\bibinfo {author} {\bibfnamefont {J.~J.}\ \bibnamefont
  {Bisognano}}\ and\ \bibinfo {author} {\bibfnamefont {A.~S.}\ \bibnamefont
  {Wightman}},\ }\bibfield  {title} {\enquote {\bibinfo {title} {On the duality
  condition for quantum fields},}\ }\href {\doibase 10.1063/1.522898}
  {\bibfield  {journal} {\bibinfo  {journal} {Journal of Mathematical Physics}\
  }\textbf {\bibinfo {volume} {17}},\ \bibinfo {pages} {303--321} (\bibinfo
  {year} {1976})}\BibitemShut {NoStop}%
\bibitem [{\citenamefont {Buchholz}, \citenamefont {D'Antoni},\ and\
  \citenamefont {Fredenhagen}(1987)}]{BuchholzDAntoniFredenhagen1987}%
  \BibitemOpen
  \bibfield  {author} {\bibinfo {author} {\bibfnamefont {D.}~\bibnamefont
  {Buchholz}}, \bibinfo {author} {\bibfnamefont {C.}~\bibnamefont {D'Antoni}},
  \ and\ \bibinfo {author} {\bibfnamefont {K.}~\bibnamefont {Fredenhagen}},\
  }\bibfield  {title} {\enquote {\bibinfo {title} {The universal structure of
  local algebras},}\ }\href {\doibase 10.1007/BF01239019} {\bibfield  {journal}
  {\bibinfo  {journal} {Communications in Mathematical Physics}\ }\textbf
  {\bibinfo {volume} {111}},\ \bibinfo {pages} {123--135} (\bibinfo {year}
  {1987})}\BibitemShut {NoStop}%
\bibitem [{\citenamefont {Connes}(1973)}]{Connes1973}%
  \BibitemOpen
  \bibfield  {author} {\bibinfo {author} {\bibfnamefont {A.}~\bibnamefont
  {Connes}},\ }\bibfield  {title} {\enquote {\bibinfo {title} {Une
  classification des facteurs de type {III}},}\ }\href {\doibase
  10.24033/asens.1247} {\bibfield  {journal} {\bibinfo  {journal} {Annales
  scientifiques de l'\'Ecole Normale Sup\'erieure}\ }\textbf {\bibinfo {volume}
  {6}},\ \bibinfo {pages} {133--252} (\bibinfo {year} {1973})}\BibitemShut
  {NoStop}%
\bibitem [{\citenamefont {Haagerup}(1987)}]{Haagerup1987}%
  \BibitemOpen
  \bibfield  {author} {\bibinfo {author} {\bibfnamefont {U.}~\bibnamefont
  {Haagerup}},\ }\bibfield  {title} {\enquote {\bibinfo {title} {Connes'
  bicentralizer problem and uniqueness of the injective factor of type
  {III}$_1$},}\ }\href {\doibase 10.1007/BF02392257} {\bibfield  {journal}
  {\bibinfo  {journal} {Acta Mathematica}\ }\textbf {\bibinfo {volume} {158}},\
  \bibinfo {pages} {95--148} (\bibinfo {year} {1987})}\BibitemShut {NoStop}%
\bibitem [{Note4()}]{Note4}%
  \BibitemOpen
  \bibinfo {note} {The joint-continuity-of-products clause is automatic in
  Wightman-style realizations (where the time-zero fields are operator-valued
  distributions on a common Schwartz-like domain) and in the Fock case (G7),
  where the limit converges in operator norm on each finite-particle subspace
  of the algebraic Fock domain. It is needed for the propagation step in the
  proof of Proposition~\ref {prop:non_fock_obstruction_no_c}
  below.}\BibitemShut {Stop}%
\bibitem [{\citenamefont {Hegerfeldt}(1985)}]{Hegerfeldt1985}%
  \BibitemOpen
  \bibfield  {author} {\bibinfo {author} {\bibfnamefont {G.~C.}\ \bibnamefont
  {Hegerfeldt}},\ }\bibfield  {title} {\enquote {\bibinfo {title} {Violation of
  causality in relativistic quantum theory?}}\ }\href {\doibase
  10.1103/PhysRevLett.54.2395} {\bibfield  {journal} {\bibinfo  {journal}
  {Physical Review Letters}\ }\textbf {\bibinfo {volume} {54}},\ \bibinfo
  {pages} {2395--2398} (\bibinfo {year} {1985})}\BibitemShut {NoStop}%
\bibitem [{\citenamefont {Buchholz}(2000)}]{Buchholz1998}%
  \BibitemOpen
  \bibfield  {author} {\bibinfo {author} {\bibfnamefont {D.}~\bibnamefont
  {Buchholz}},\ }\bibfield  {title} {\enquote {\bibinfo {title} {Current trends
  in axiomatic quantum field theory},}\ }in\ \href@noop {} {\emph {\bibinfo
  {booktitle} {Quantum Field Theory}}},\ \bibinfo {series} {Lecture Notes in
  Physics}, Vol.\ \bibinfo {volume} {558},\ \bibinfo {editor} {edited by\
  \bibinfo {editor} {\bibfnamefont {P.}~\bibnamefont {Breitenlohner}}\ and\
  \bibinfo {editor} {\bibfnamefont {D.}~\bibnamefont {Maison}}}\ (\bibinfo
  {publisher} {Springer},\ \bibinfo {year} {2000})\ pp.\ \bibinfo {pages}
  {43--64},\ \bibinfo {note} {arXiv:hep-th/9811233},\ \Eprint
  {http://arxiv.org/abs/hep-th/9811233} {arXiv:hep-th/9811233} \BibitemShut
  {NoStop}%
\bibitem [{\citenamefont {Jost}(1957)}]{Jost1957}%
  \BibitemOpen
  \bibfield  {author} {\bibinfo {author} {\bibfnamefont {R.}~\bibnamefont
  {Jost}},\ }\bibfield  {title} {\enquote {\bibinfo {title} {Eine {B}emerkung
  zum {CTP}-{T}heorem},}\ }\href@noop {} {\bibfield  {journal} {\bibinfo
  {journal} {Helvetica Physica Acta}\ }\textbf {\bibinfo {volume} {30}},\
  \bibinfo {pages} {409--416} (\bibinfo {year} {1957})}\BibitemShut {NoStop}%
\bibitem [{\citenamefont {Streater}\ and\ \citenamefont
  {Wightman}(1964)}]{StreaterWightman1964}%
  \BibitemOpen
  \bibfield  {author} {\bibinfo {author} {\bibfnamefont {R.~F.}\ \bibnamefont
  {Streater}}\ and\ \bibinfo {author} {\bibfnamefont {A.~S.}\ \bibnamefont
  {Wightman}},\ }\href@noop {} {\emph {\bibinfo {title} {{PCT}, Spin and
  Statistics, and All That}}}\ (\bibinfo  {publisher} {W.~A. Benjamin},\
  \bibinfo {year} {1964})\ \bibinfo {note} {princeton University Press
  paperback edition 2000}\BibitemShut {NoStop}%
\bibitem [{\citenamefont {Haag}(1955)}]{Haag1955}%
  \BibitemOpen
  \bibfield  {author} {\bibinfo {author} {\bibfnamefont {R.}~\bibnamefont
  {Haag}},\ }\bibfield  {title} {\enquote {\bibinfo {title} {On quantum field
  theories},}\ }\href@noop {} {\bibfield  {journal} {\bibinfo  {journal} {Det
  Kongelige Danske Videnskabernes Selskab, Matematisk-fysiske Meddelelser}\
  }\textbf {\bibinfo {volume} {29}},\ \bibinfo {pages} {1--37} (\bibinfo {year}
  {1955})}\BibitemShut {NoStop}%
\bibitem [{\citenamefont {Connes}\ and\ \citenamefont
  {Rovelli}(1994)}]{ConnesRovelli1994}%
  \BibitemOpen
  \bibfield  {author} {\bibinfo {author} {\bibfnamefont {A.}~\bibnamefont
  {Connes}}\ and\ \bibinfo {author} {\bibfnamefont {C.}~\bibnamefont
  {Rovelli}},\ }\bibfield  {title} {\enquote {\bibinfo {title} {Von {N}eumann
  algebra automorphisms and time-thermodynamics relation in generally covariant
  quantum theories},}\ }\href {\doibase 10.1088/0264-9381/11/12/007} {\bibfield
   {journal} {\bibinfo  {journal} {Classical and Quantum Gravity}\ }\textbf
  {\bibinfo {volume} {11}},\ \bibinfo {pages} {2899--2917} (\bibinfo {year}
  {1994})}\BibitemShut {NoStop}%
\bibitem [{\citenamefont {Brunetti}, \citenamefont {Guido},\ and\ \citenamefont
  {Longo}(1993)}]{BrunettiGuidoLongo1993}%
  \BibitemOpen
  \bibfield  {author} {\bibinfo {author} {\bibfnamefont {R.}~\bibnamefont
  {Brunetti}}, \bibinfo {author} {\bibfnamefont {D.}~\bibnamefont {Guido}}, \
  and\ \bibinfo {author} {\bibfnamefont {R.}~\bibnamefont {Longo}},\ }\bibfield
   {title} {\enquote {\bibinfo {title} {Modular structure and duality in
  conformal quantum field theory},}\ }\href {\doibase 10.1007/BF02096738}
  {\bibfield  {journal} {\bibinfo  {journal} {Communications in Mathematical
  Physics}\ }\textbf {\bibinfo {volume} {156}},\ \bibinfo {pages} {201--219}
  (\bibinfo {year} {1993})}\BibitemShut {NoStop}%
\bibitem [{\citenamefont {Schroer}(1997)}]{Schroer1997}%
  \BibitemOpen
  \bibfield  {author} {\bibinfo {author} {\bibfnamefont {B.}~\bibnamefont
  {Schroer}},\ }\bibfield  {title} {\enquote {\bibinfo {title} {Modular
  localization and the bootstrap-formfactor program},}\ }\href {\doibase
  10.1016/S0550-3213(97)00358-1} {\bibfield  {journal} {\bibinfo  {journal}
  {Nuclear Physics B}\ }\textbf {\bibinfo {volume} {499}},\ \bibinfo {pages}
  {547--568} (\bibinfo {year} {1997})},\ \bibinfo {note}
  {arXiv:hep-th/9702145},\ \Eprint {http://arxiv.org/abs/hep-th/9702145}
  {arXiv:hep-th/9702145} \BibitemShut {NoStop}%
\end{thebibliography}%

\end{document}